\newcommand\D{\textup{d}}
\theoremstyle{remark} }
\newtheorem{proposition}{Proposition}
\def\mi{\mathbbm{i}}
\def\me{\mathbbm{e}}
\begin{document}
\bibliographystyle{unsrt}

\title{Numerical methods for the Wigner equation with unbounded potential}
\author{Zhenzhu Chen\footnotemark[2],
\and Yunfeng Xiong\footnotemark[2],
\and Sihong Shao\footnotemark[2] $^,$\footnotemark[1]}

\renewcommand{\thefootnote}{\fnsymbol{footnote}}
\footnotetext[2]{LMAM and School of Mathematical Sciences, Peking University, Beijing 100871, China.}
\footnotetext[1]{To
whom correspondence should be addressed. Email:
\texttt{sihong@math.pku.edu.cn}}
\date{\today}
\maketitle

\begin{abstract}

Unbounded potentials are always utilized to strictly confine quantum dynamics and generate bound or stationary states due to the existence of quantum tunneling. However, the existed accurate Wigner solvers are often designed for either localized potentials or those of the polynomial type. This paper attempts to solve the time-dependent Wigner equation in the presence of a general class of unbounded potentials by exploiting two equivalent forms of the pseudo-differential operator: integral form and series form (i.e., the Moyal expansion). The unbounded parts at infinities are approximated or modeled by polynomials and then a remaining localized potential dominates the central area. The fact that the Moyal expansion reduces to a finite series for polynomial potentials is fully utilized.  Using a spectral collocation discretization which conserves both mass and energy, several typical quantum systems are simulated with a high accuracy and reliable estimation of macroscopically measurable quantities is thus obtained.

\vspace*{4mm}
\noindent {\bf AMS subject classifications:}
81Q05;
65M70;
81S30;
45K05;
82C10

%65M25;

\noindent {\bf Keywords:}
Wigner equation;
Moyal expansion;
%splitting method;
spectral method;
%characteristics method;
quantum dynamics;
unbounded potential;
uncertainty principle;
double-well;
P\"oschl-Teller potential;
anharmonic oscillator
\end{abstract}

\section{Introduction}
\label{sec:intro}

Unbounded potentials are ubiquitous in quantum mechanics, especially in simulating quantum tunneling phenomena ranging from various branches of physics and chemistry. As a typical example among them, the double-well potentials with two minima separated by a barrier have been widely used in understanding the transition of quantum states \cite{KierigSchnorrbergerSchietingerTomkovicOberthaler2008,WeinerTse1981} as well as in modeling the potential energy surface of small molecules like the ammonia and the methane \cite{KaShin2003,bk:Pilar2013}. The often-used 
double-well potentials can be simply characterized by unbounded polynomial potentials plus bounded localized potentials, and related quantum observables are calculated through either the Schr\"{o}dinger equation \cite{KaShin2003} or the path integral formalism \cite{PerezTuckermanMuser2009}. In this work, we turn to adopt the Wigner function \cite{Wigner1932}, a quasi-probability distribution, to investigate the quantum tunneling effects, because it grants us a natural description of quantum observables in a statistical form due to the Weyl correspondence \cite{tatarskiui1983,KluksdahKrimanlFerryRinghofer1989}.

However, solving the Wigner equation that describes the time evolution of the Wigner function in the phase space
is usually a tough task because of the difficulties in tackling the nonlocal and highly oscillating pseudo-differential operator.
The situation becomes worse when the unbounded potentials are taken into account. 
The existed deterministic solvers, including the finite difference schemes \cite{Frensley1990,th:Biegel1997} and the spectral collocation methods \cite{Ringhofer1990,ShaoLuCai2011,XiongChenShao2016}, 
always require the potentials to decay fast and vanish at infinities, i.e., the localized potentials, 
since the Wigner kernel is evaluated by the Poisson summation formula, 
and thus is not applicable for the unbounded potentials. 
For the potentials of the polynomial type, on the other hand, 
as an equivalent series form of the pseudo-differential operator,
the Moyal expansion reduces to a finite series  \cite{bk:MarkowichRinghoferSchmeiser1990,bk:Schleich2011} and the resulting equation can be solved by either the spectral method \cite{ThomannBorzi2016} or the Hermite expansion \cite{FurtmaierSucciMendoza2015}. In this work, we attempt to combine the advantages of the above two to evolve the Wigner quantum dynamics in the presence of unbounded potentials.  

%%%%%%%%%%%%%%%%%%%%%%%%%%%%%%%%%%%%%%%%%%%%%%%%%%%%%%%%%%%%% 
\begin{figure}
    \centering
    \includegraphics[width=1.0\textwidth, height=0.25\textwidth]{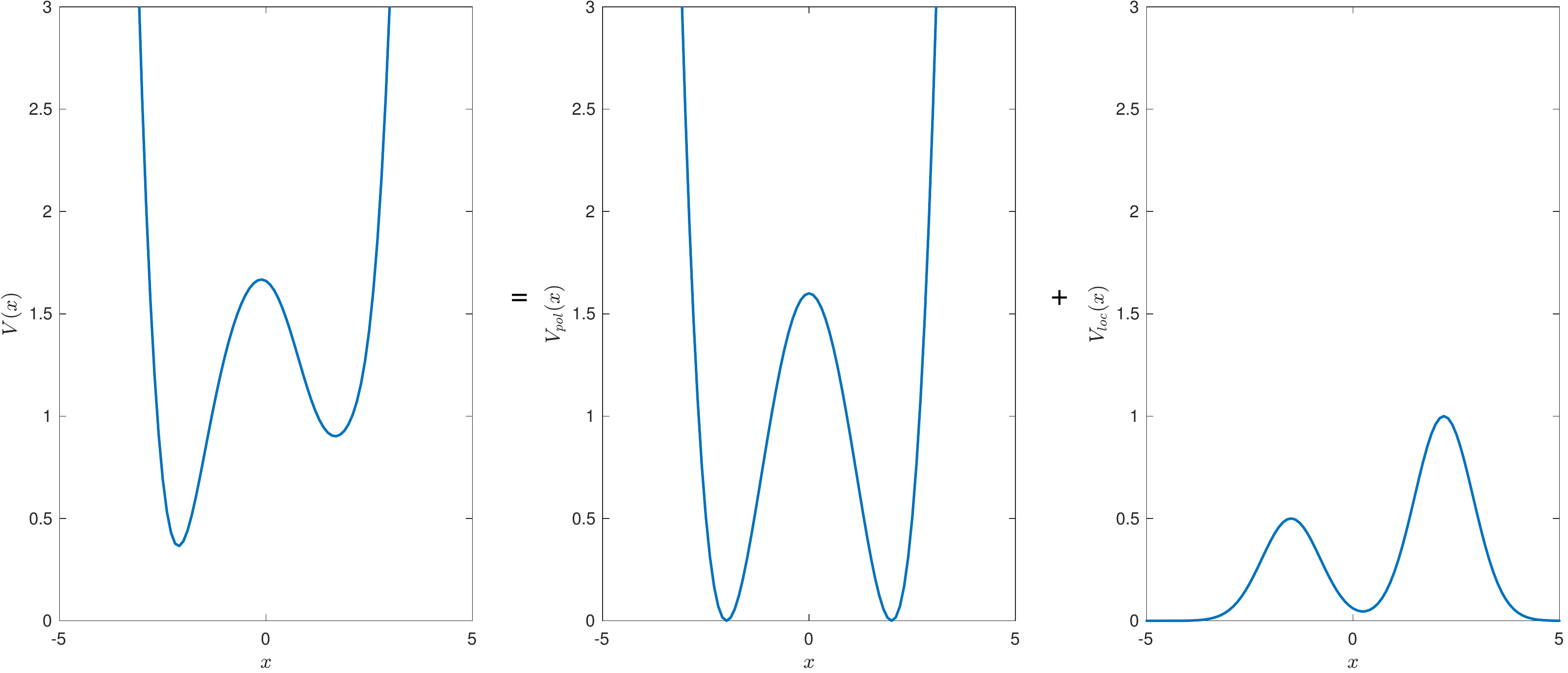}
    \caption{{\small An asymmetric double-well potential $V(x) = 0.1(x^2-2^2)^2+\me^{-(x-2.2)^2}+0.5\me^{-(x+1.5)^2}$ \cite{PerezTuckermanMuser2009}. {It} can be decomposed into a polynomial part $V_{pol}(x) = 0.1(x^2-2^2)^2$} and a localized part $V_{loc}(x)= \me^{-(x-2.2)^2} + 0.5\me^{-(x+1.5)^2}$.}
    \label{fig:potential_split}
\end{figure}
%%%%%%%%%%%%%%%%%%%%%%%%%%%%%%%%%%%%%%%%%%%%%%%%%%%%%%%%%%%

Specifically, we focus on the potentials $V(x)$ that are sufficiently smooth, 
with the asymptotic behaviors at infinities governed or approximated by polynomials. 
Such kind of potentials covers the {double-well potentials} as mentioned above 
and is commonly used to fit the observed data (known as the polynomial regression model). 
For simplicity, we assume that the unbounded $V(x)$ can be split into a polynomial potential $V_{pol}(x)$ and a localized one $V_{loc}(x)$ that decays at infinities, e.g., as displayed in Fig. \ref{fig:potential_split}. 
In this manner, two equivalent forms: the integral form and the Moyal expansion, can be employed to deal with $V_{loc}(x)$ and $V_{pol}(x)$ separately and the resulting equation can be solved by different techniques. 
In this paper, we choose the spectral collocation method because it is able to accurately resolve both the linear differential operators and the Fourier integrals for sufficiently smooth potentials. {Meanwhile, discussions on the conservation of both mass and energy, as well as on their numerical counterparts, are performed. {In some situations, the operator splitting technique can be further introduced to improve the performance.}

{The proposed scheme allows us to study the dynamics of many interesting quantum systems, like the P\"oschl-Teller potential \cite{BundTijero2000,bk:Flugge2012} and the double-well {systems} \cite{FurtmaierSucciMendoza2015,KaczorKlimasSzydlowskiWoloszynSpasak2016}.}
Several macroscopically measurable quantities, 
such as the differences in energy levels, 
the quantum tunneling rate, and the autocorrelation function can also be obtained with a satisfactory accuracy. 
Moreover, the uncertainty principle will be shown directly in phase space 
by simulating the superposition of a harmonic oscillator perturbed by a sixth-order anharmonic one.

The remaining is organized as follows. Section \ref{sec:2} gives a brief introduction to the Wigner equation with two equivalent forms.  In Section \ref{sec:3}, the treatment of unbounded potentials as well as the  numerical scheme for the resulting equation is presented. 
We will prove that the proposed scheme can {maintain the mass and total energy}, and the demonstration of {its} performance is left for Section \ref{sec:4}. Concluding remarks and further discussions are collected in the Section \ref{sec:5}.

\section{Quantum mechanics in phase space}
\label{sec:2}

As a classical mathematical representation of quantum mechanics in phase space, 
the Wigner function allows a direct connection with the classical picture and its dynamics equation reduces to the classical Vlasov equation as the reduced {Planck} constant $\hbar$ vanishes. In this section, we will sketch the Wigner formalism for quantum mechanics and the exposition is restricted to one-dimensional one-body situation for simplicity. 
The Wigner function $f(x,k,t)$ is defined in the phase space $(x,k)\in\mathbb{R}^2$ for the position $x$ and the wavenumber $k$ through the Weyl-Wigner transform of the density matrix $\rho(x,x',t)$ as follows
\begin{equation}
  \label{eq:wigner_function}
  f(x,k,t) = \int_{-\infty}^{+\infty}\me^{-\mi ky}\rho(x+{y}/{2},x-{y}/{2},t)\D y.
\end{equation}
If there are $N$ stationary states $\{\phi_n(x)\}$ corresponding to energy $\{E_n\}$ 
with $n=1,2,\ldots,N$, then the dynamics is given by $\psi_n(x,t) := \phi_n(x)\me^{-\mi E_nt/\hbar}$
and thus the {superposed} state of these $N$ states is $\Psi(x,t) := \sum_{n=1}^{N}a_n\psi_n(x,t)$.
In consequence, the Wigner function for such {superposed} state reads
\begin{equation}
  \label{eq:wigner_fun_pure}
f(x,k,t)=\sum_{n,m=1}^{N} a_na_m \me^{\mi(E_m-E_n)t/\hbar} \int_{-\infty}^{+\infty}\me^{-\mi ky}\phi_n(x+{y}/{2})\phi_m^*(x-{y}/{2})\D y .
\end{equation}
Given a quantum operator $\hat{A}(x,k)$ at the instant $t$, the expectation value can be calculated by averaging the corresponding Weyl symbol $A(x, k)$ with the Wigner function: 
\begin{equation}
  \label{eq:expectation}
  \langle \hat{A}\rangle(t) = \int_{-\infty}^{+\infty}\int_{-\infty}^{+\infty}A(x,k)f(x,k,t)\D x\D k.
\end{equation}
Therefore we can easily deduce from Eqs.~\eqref{eq:wigner_fun_pure} and \eqref{eq:expectation} that 
$\langle \hat{A} \rangle(t)$ can be decomposed into {$\frac{N(N-1)}{2}+1$ components,  the frequencies of which} must be proportional to $\left|E_m-E_n\right|$ for all $m,n=1,2,\cdots,N$.
{That is, it is readily to obtain the differences in energy levels of the quantum system in question only via a direct spectrum analysis of $\langle \hat{A}\rangle(t)$.}

Starting from the Schr\"{o}dinger equation, 
it can be shown that the Wigner function $f(x,k,t)$ follows the following {dynamics}, 
i.e., the time-dependent Wigner equation,
\begin{align}
  \label{eq:wigner_original}
  \frac{\partial}{\partial t}f(x,k,t)+&\frac{\hbar k}{m} \nabla_x f(x,k,t) = \Theta_V[f](x,k,t),  \\
  \label{eq:pseudo_differential}
  \Theta_V[f](x,k,t)&=\frac{1}{\mi\hbar}\int_{-\infty}^{+\infty}\me^{-\mi ky}\left(V(x+{y}/{2})-V(x-{y}/{2})\right)\hat{f}(x,y,t)\D y,\\
     \hat{f}(x,y,t) &= \frac{1}{2\pi}\int_{-\infty}^{+\infty}\me^{\mi ky}f(x,k,t)\D k,
\end{align}
where $m$ is the mass and $\hat{f}(x,y,t)$ is nothing but $\rho(x+{y}/{2},x-{y}/{2},t)$ from Eq.~\eqref{eq:wigner_function} 
through an inverse Fourier transform. Here $\Theta_V$ is the so-called nonlocal pseudo-differential operator containing the quantum information and has different but equivalent expressions as follows. 
\begin{itemize}
  \item For {$V(x)\in L(\mathbb{R}) $, the pseudo-differential operator is characterized by a convolution one:}
 \begin{align}
 \label{eq:theta_local}
      \Theta_V[f](x,k,t)&=\int_{-\infty}^{+\infty}f(x,k',t)V_w\left(x,k-k'\right)\D k', \\
      V_w(x,k)&=\frac{1}{2\pi\mi\hbar}\int_{-\infty}^{+\infty}\me^{-\mi ky}\left(V(x+{y}/{2})-V(x-{y}/{2})\right)\D y, \label{eq:kernel}
\end{align}
where $V_w(x,k)$ is the so-called Wigner potential or Wigner kernel. 
\item  For $V(x)\in C^{\omega}(\mathbb{R})$, performing the Taylor series for $V(x\pm y/2)$ at {$x$} yields
\begin{equation}
    \label{eq:Taylor}
    V(x+{y}/{2})-V(x-{y}/{2})=\sum_{l=0}^{+\infty}\frac{y^{2l+1}}{2^{2l}(2l+1)!} \nabla_x^{2l+1}V(x),
\end{equation}
and substituting the above expression \eqref{eq:Taylor} into Eq.~\eqref{eq:pseudo_differential} leads to
the Moyal expansion
\begin{align}
     \Theta_V[f](x,k,t) &= \sum_{l=0}^{+\infty}\xi_l(x) \nabla_{k}^{2l+1} f(x,k,t), \label{eq:theta_poly}\\
     \xi_l(x) &= \frac{(-1)^l}{2^{2l}(2l+1)!\hbar}\nabla_x^{2l+1}V(x). \label{eq:theta_poly_xi}
\end{align}
Here we adopt the compact notations: {$\nabla_x^n = {\partial^n}/{\partial x^n}$
and $\nabla_k^n = {\partial^n}/{\partial k^n}$} for  $n\in\mathbb{N}$. 
\item  For $V(x)=V_{loc}(x)+V_{ub}(x)$ with {$V_{loc}(x)\in L(\mathbb{R})$} and $V_{ub}(x)\in C^\omega(\mathbb{R})$,  representing a general class of (unbounded) potentials {which can be decomposed into two parts (e.g., see Fig.~\ref{fig:potential_split}). Owing to the linearity of the Fourier transform, the pseudo-differential operator $\Theta_V$ can be rewritten as a linear combination of Eqs.~\eqref{eq:theta_local} and \eqref{eq:theta_poly}.}
More importantly, if $V_{ub}(x)$ is indeed a polynomial or can be approximated at infinities by a polynomial, denoted by $V_{pol}(x)$, 
then we only need to consider a much simpler expression: 
\begin{equation}
\label{eq:key1}
 \Theta_V[f](x,k,t)= \int_{-\infty}^{+\infty}V_{w}^{loc}(x,k-k')f(x,k',t)\D k'\\
   +\sum_{l=0}^{[P/2]}\xi_l^{pol}(x)\nabla_k^{2l+1}f(x,k,t),
\end{equation}
where $V_{w}^{loc}(x,k)$ is the Wigner kernel corresponding to the localized potential $V_{loc}(x)$ via Eq.~\eqref{eq:kernel}, $\xi_l^{pol}(x)$ give the coefficients from the polynomial potential $V_{pol}(x)$ via Eq.~\eqref{eq:theta_poly_xi}, and $P$ denotes the degree. 
A key observation in Eq.~\eqref{eq:key1} is 
the Moyal expansion reduces to a finite series which can be readily resolved by standard numerical techniques.
\end{itemize} 

% of which  and $P$ denotes the degree of the polynomial potential $V_{pol}(x)$
 
A detailed comparison of the above three expressions \eqref{eq:theta_local}, \eqref{eq:theta_poly}, and \eqref{eq:key1} for the {pseudo-differential operator} shows: {Eq.~\eqref{eq:theta_local} fails to hold when $V(x)$ is unbounded, while Eq.~\eqref{eq:theta_poly} involves infinite terms even when $V(x) \in C^{w}(\mathbb{R})$ with a compact support. By contrast, Eq.~\eqref{eq:key1}, albeit in a somewhat complicated form, only requires the asymptotic behavior of $V(x)$ at infinities is governed by polynomials. It deals with the unbounded part with a finite series of linear differential term, and captures the fine structure in the central area through a twisted convolution. A simple example is the following unbounded rational-fraction-type potential }
\begin{equation}
  \label{eq:fraction}
  V(x) = \frac{x^4+1}{x^2+1},
\end{equation}
which converges to $x^2$ as $x \to \infty$. {It} can be easily verified that, 
the expression \eqref{eq:theta_local} breaks down because the Wigner kernel is not well defined any more in the classical sense, and the infinite series in the Moyal expansion \eqref{eq:theta_poly} sticks in there and thus is difficult to handle with; on the contrary, the expression \eqref{eq:key1} gets rid of those problems by taking 
\begin{equation}
\label{eq:split1}
V_{pol}(x) = x^2-1, \quad V_{loc}(x) = V(x) - V_{pol}(x) =\frac{2}{x^2+1}. 
\end{equation}
In summary, the combined expression \eqref{eq:key1} {serves as} the starting point of this work for investigating the Wigner quantum dynamics in the presence of unbounded potentials.

{Before proceeding, we would like to mention two {conservation laws} that are always used to guide the design of numerical methods.} One is the mass conservation stated by 
the continuity equation
\begin{equation}
\label{eq:continuity}
  \frac{\partial}{\partial t}n(x,t)+\nabla_xj(x,t) = 0,
\end{equation}
where $n(x,t)$ is the particle density and $j(x,t)$ the current density \cite{XiongChenShao2016}.  
The other is the energy conservation
\begin{equation} \label{eq:energy_exp}
\frac{\D}{\D t} \langle \hat{H}\rangle = 0 \,\,\, \text{with} \,\,\, \hat{H} = \frac{\hat{p}^2}{2m} + V(\hat{x}), 
\end{equation}
where $\hat{H}$ is the quantum Hamiltonian operator.

\section{Numerical methods}
\label{sec:3}

Now we turn to seek a numerical approximation to the Wigner equation, where the {nonlocal} term has the form as in Eq.~\eqref{eq:key1}. The convolution term poses the first challenge since it involves double integrations.  In general, a simple nullification of the distribution outside a sufficiently large $k$-domain is usually adopted \cite{ShaoLuCai2011,XiongChenShao2016}.  For a sufficiently large $k$-domain 
$\mathcal{K}=[k_{\min}, k_{\max}]$, 
the truncated version of the Wigner equation is
\begin{equation}
  \label{eq:truncated_W}
  \begin{split}
    \frac{\partial}{\partial t}f(x,k,t)+\frac{\hbar k}{m}\nabla_xf(x,k,t) =&\sum_{l=0}^{[P/2]}\xi_l^{pol}(x)\nabla_k^{2l+1}f(x,k,t)\\
    &+\int_{k_{\min}}^{k_{\max}}\tilde{V}_w^{loc}(x,k-k')f(x,k',t)\D k', \\
   \end{split}
\end{equation}
where
\begin{equation}
    \tilde{V}_w^{loc}(x,k) = \frac{\Delta y}{2\pi\mi\hbar} \sum_{\zeta=-\infty}^{+\infty}\left[V_{loc}(x+{y_\zeta}/{2})-V_{loc}(x-{y_\zeta}/{2})\right] \me^{-\mi k y_\zeta}
\end{equation}
denotes the {discretized} Wigner kernel for the localized potential $V_{loc}(x)$ {and $y_\zeta = \zeta \Delta y$ with $\Delta y$ being the spacing step in $y$-space}.  {Such approximation stems from the Poisson summation formula:
\begin{equation}\label{Poisson_summation_formula}
\sum_{n=-\infty}^{\infty} V_w(x, k+{2\pi n}/{\Delta y}) = \frac{\Delta y}{2\pi\mi\hbar} \sum_{\zeta=-\infty}^{+\infty}\left[V_{loc}(x+{y_\zeta}/{2})-V_{loc}(x-{y_\zeta}/{2})\right] \me^{-\mi k y_\zeta}.
\end{equation}
Here we assume that $V_w$ decays and thus ignore the periodic images. }A necessary and sufficient condition for the truncated Wigner equation \eqref{eq:truncated_W} to conserve the mass
has been given in \cite{XiongChenShao2016} and reads  
\begin{equation}\label{eq:suf_con}
  L_k \Delta y = 2\pi,
\end{equation}
where $L_k = k_{\max}-k_{\min}$ represents the length of $k$-domain. In $x$-space, the popular quantum transitive boundary condition will be adopted hereafter as did in \cite{ShaoLuCai2011}.

The spectral (element) collocation methods have been demonstrated in \cite{ShaoLuCai2011,XiongChenShao2016} to  resolve the oscillations of the Wigner function and thus will be utilized in this work to discretize
the truncated Wigner equation \eqref{eq:truncated_W}. In particular, a Fourier spectral collocation scheme is adopted in $k$-space and a collocation spectral element method with Gauss-Lobbato points in $x$-space. 
An explicit fourth-order Runge-Kutta discretization \cite{GottliebShu1998} is then employed for the time marching as did in \cite{ShaoLuCai2011}. Actually, the plane wave expansion in $k$-space is a natural choice since both the linear differential operator and the convolution term can be accurately approximated in a compact form. Furthermore, we are able to prove the numerical conservation of both mass and energy for the resulting full discretization.

\subsection{The Spectral collocation method}
\label{sec:spectral}

The $N$ uniform collocation points in $\mathcal{K}$ are $k_j = k_{\min} + {jL_k}/{N}$ with $j=0,1,\ldots,N-1$ 
and then the plane wave expansion in $k$-space reads 
\begin{equation}
  \label{eq:approx_k}
  f(x,k,t)\approx  \sum_{\nu=-N/2+1}^{N/2} a_\nu(x,t) \psi_{\nu}(k),
\end{equation}
where 
\begin{equation}
\psi_\nu(k) = \me^{{2\pi} \mi \nu (k-k_{\min})/L_k},
\end{equation}
and the coefficients $\{a_\nu(x,t)\}$ are {further} determined by a collocation spectral element method with Gauss-Lobbato points in $x$-space for easy implementation of boundary conditions. Let $\mathcal{X}=[x_L,x_R]$ be the computational domain in $x$-space
and we divide it into $Q$ non-overlapping elements as $ \mathcal{X} = \bigcup_{q=1}^{Q}\mathcal{X}_q$ with $\mathcal{X}_q= [g_{q-1},g_q]$, $g_{0} = x_L$ and $g_Q= x_R$. 
For simplicity we adopt a uniform mesh in which the number of collocation points keeps the same for all $q$, denoted by $M$, and use the Gauss-Lobbato points in each element. 
Then the spectral expansion for the coefficients $a_\nu(x,t)$ in $\mathcal{X}_q$ is
\begin{equation}
  \label{eq:approx_a_x}
  a_\nu(x,t)\approx \sum_{\mu=0}^{M-1}\beta_{\nu,\mu}(t)\phi_\mu(x),~~x\in\mathcal{X}_q,
\end{equation}
where 
\[
\phi_\mu(x)=\cos(\mu\theta),~~x = g_{q-1}+\frac{g_q-g_{q-1}}{2}(1-\eta),~~\eta = \cos(\theta) 
\]
with $\eta\in[-1,1]$ and $\theta\in[0,\pi]$. That is, $f(x,k,t)$ over $x\in\mathcal{X}_q$ and $k\in\mathcal{K}$ can be approximated by
\begin{equation}
  \label{eq:appro_f_kx}
  f(x,k,t) \approx \tilde{f}(x,k,t) = \sum_{\nu=-N/2+1}^{N/2}\sum_{\mu=0}^{M-1}\beta_{\nu,\mu}(t)\phi_\mu(x) \psi_\nu(k).
\end{equation}

Consequently, the partial derivative of $\tilde{f}(x,k,t)$ with respect to $x\in\mathcal{X}_q$ can be directly obtained as
\begin{equation}
  \label{eq:derivative_x}
  \nabla_x \tilde{f}(x,k,t) = \sum_{\nu=-N/2+1}^{N/2}\sum_{\mu=0}^{M-1}\tilde{\beta}_{\nu,\mu}(t)\phi_\mu(x) \psi_\nu(k)
\end{equation}
with
\begin{equation*}
  \tilde{\beta}_{\nu,\mu}(t)=-\frac{2}{g_q-g_{q-1}}\times\left\{
    \begin{array}{ll}
      0,& \mu= M-1,\\
      2(M-1)\beta_{\nu,M-1}(t),& \mu = M-2,\\
      \tilde{\beta}_{\nu,\mu+2}(t)+2(\mu+1)\beta_{\nu,\mu+1}(t), & \mu = M-3,\cdots,1,\\
      \frac{1}{2}\tilde{\beta}_{\nu,2}(t)+\beta_{\nu,1}(t),& \mu=0.
    \end{array} \right.
\end{equation*}
In a similar way, the partial derivatives of $\tilde{f}(x,k,t)$ with respect to $k\in\mathcal{K}$ 
have a much simpler expression: 
\begin{equation}\label{eq:approx_derivative}
  \nabla_k^{2l+1}\tilde{f}(x,k,t) = \sum_{\nu=-N/2+1}^{N/2}\sum_{\mu=0}^{M-1} (\frac{2\pi\mi \nu}{L_k})^{2l+1} \beta_{\nu,\mu}(t)\phi_\mu(x)\psi_{\nu}(k), ~~l=0,\ldots,[P/2].
\end{equation}

With the help of the {orthogonal relation of the Fourier basis:}
\begin{equation}
  \label{eq:integral_nmu}
  \begin{split}
    \int_{k_{\min}}^{k_{\max}}\me^{{2\pi}\mi(\nu+\zeta)k'/L_k}\D k' =  \left\{
      \begin{array}{ll}
        L_k, &    \nu+\zeta = 0,\\
         0, &  \nu+\zeta \neq 0,
      \end{array}  \right.
  \end{split}
\end{equation}
the truncated convolution term can be calculated analytically as follows
\begin{equation}
  \label{eq:truncate_g}
 \tilde{g}(x,k,t) := \int_{k_{\min}}^{k_{\max}}\tilde{V}_w^{loc}(x,k-k')\tilde{f}(x,k',t)\D k' = \sum_{\nu=-N/2+1}^{N/2}\sum_{\mu=0}^{M-1}b_{\nu,\mu}(t)\phi_\mu(x)\psi_{\nu}(k),
 \end{equation}
where the coefficients $b_{\nu,\mu}(t)$ are determined by $\beta_{\nu,\mu}(t)$:
\begin{equation}\label{eq:bnm}
b_{\nu,\mu}(t)=\beta_{\nu,\mu}(t) \frac{\left[V_{loc}(x-{y_\nu}/{2})-V_{loc}(x+{y_\nu}/{2})\right]}{\mi\hbar}.
\end{equation}

Finally, we obtain the following {semi-discretizated} scheme for the truncated Wigner equation \eqref{eq:truncated_W}
\begin{equation}\label{eq:spectral_complex}
  \begin{split}
    \frac{\partial}{\partial t}\tilde{f}(x,k,t)+\frac{\hbar k}{m}\nabla_x\tilde{f}(x,k,t) =\tilde{g}(x,k,t)+\sum_{l=0}^{[P/2]}\xi_l^{pol}(x)\nabla_k^{2l+1}\tilde{f}(x,k,t),
    \end{split}
\end{equation}
and the fast Fourier transform (FFT) can be used to accelerate the computation.

\subsection{Conservation laws}
\label{sec:conservation}

An explicit fourth-order Runge-Kutta method is used to evolve the 
{semi-discretizated} scheme \eqref{eq:spectral_complex} as we did in \cite{ShaoLuCai2011}.
Below we will show that the resulting full discretization scheme conserves both the mass and the energy. To this end, {it suffices} to consider the one-step forward Euler method with the time step $\Delta t$ and the resulting fully discretizated scheme is  
\begin{equation}\label{eq:spectral_complex_euler}
F^{n+1}(x,k) = F^{n}(x,k)+\Delta t \left[-\frac{\hbar k}{m}\nabla_xF^n(x,k)+G^n(x,k) 
       +\sum_{l=0}^{[P/2]}\xi_l^{pol}(x) \nabla_k^{2l+1}F^n(x,k)\right],
\end{equation}
where $F^n(x,k)$ and $G^n(x,k)$ denote the numerical solutions of $\tilde{f}(x,k,t)$ and $\tilde{g}(x,k,t)$ at time $t^n:=n\Delta t$, respectively.

To illustrate the numerical conservation laws, we need to consider the inner product $\langle \varphi, F^n \rangle$ in the computational domain $\Omega = \mathcal{X} \times \mathcal{K}$
\begin{equation}
\langle \varphi, F^n \rangle = \iint_{\mathcal{X}\times\mathcal{K}} \varphi(x, k) F^n (x, k) \D x \D k.
\end{equation}
and the numerical current density 
\begin{equation}
j^n(x) = \int_{\mathbb{\mathcal{K}}} k F^n(x, k) \D k.
\end{equation}

\begin{proposition} \label{pro:mass_con}
The numerical scheme \eqref{eq:spectral_complex_euler} conserves the mass, 
i.e.,  
\begin{equation}
\langle 1, F^{n+1} \rangle = \langle 1, F^n \rangle
\end{equation}
provided that the total inflow and outflow are in balance, say, 
\begin{equation}\label{balance_condition}
j^n(x_L) = j^n(x_R).
\end{equation}
\end{proposition}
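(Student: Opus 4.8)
The plan is to compute the mass increment $\langle 1, F^{n+1}\rangle - \langle 1, F^n\rangle$ directly from the update \eqref{eq:spectral_complex_euler}. By linearity of the inner product this difference equals $\Delta t$ times the sum of the integrals over $\Omega = \mathcal{X}\times\mathcal{K}$ of the three terms on the right-hand side, so it suffices to show that the transport term, the convolution term $G^n$, and the finite Moyal sum each integrate to zero against $\varphi \equiv 1$.

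First I would dispose of the two $k$-space terms, which vanish for structural reasons. Since $F^n$ is expanded in the Fourier basis $\{\psi_\nu\}$, the orthogonality relation \eqref{eq:integral_nmu} gives $\int_{\mathcal{K}}\psi_\nu(k)\,\D k = L_k\delta_{\nu,0}$. Applying this to the Moyal term via \eqref{eq:approx_derivative}, each $\int_{\mathcal{K}}\nabla_k^{2l+1}F^n\,\D k$ retains only the factor $(2\pi\mi\nu/L_k)^{2l+1}$ evaluated at $\nu=0$, which is zero because $2l+1\ge 1$; hence the entire sum $\sum_l \xi_l^{pol}(x)\int_{\mathcal{K}}\nabla_k^{2l+1}F^n\,\D k$ vanishes identically in $x$. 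For the convolution term the same orthogonality leaves only the $\nu=0$ coefficient $b_{0,\mu}$; but from \eqref{eq:bnm} with $y_0=0$ this coefficient carries the factor $V_{loc}(x)-V_{loc}(x)=0$, so $\int_{\mathcal{K}}G^n\,\D k = 0$ as well. Both contributions therefore drop out before any $x$-integration is needed.

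The only remaining term is the transport term, and this is where the hypothesis \eqref{balance_condition} enters. Writing $\iint_{\Omega}\frac{\hbar k}{m}\nabla_x F^n\,\D x\,\D k = \frac{\hbar}{m}\int_{\mathcal{K}} k\big(\int_{\mathcal{X}}\nabla_x F^n\,\D x\big)\D k$, I would argue that the numerical $x$-integration of the spectrally computed derivative telescopes to boundary values. Concretely, within each element $\mathcal{X}_q$ the approximation is a polynomial in $x$, so integrating its exact derivative recovers $F^n(g_q,k)-F^n(g_{q-1},k)$; summing over the $Q$ elements, the interior interface values cancel by continuity of the collocated solution and only $F^n(x_R,k)-F^n(x_L,k)$ survives. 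Integrating against $k$ then produces a multiple of $j^n(x_R)-j^n(x_L)$, which is zero precisely under the balance condition \eqref{balance_condition}. Combining the three vanishing contributions gives $\langle 1, F^{n+1}\rangle = \langle 1, F^n\rangle$.

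I expect the transport term to be the only genuine obstacle: the two $k$-derivative terms are annihilated cleanly by Fourier orthogonality and by the antisymmetry built into $b_{\nu,\mu}$, whereas the transport term requires care that the discrete differentiation-then-integration in the Chebyshev spectral-element basis reproduces the continuous fundamental-theorem-of-calculus identity \emph{exactly} (not merely to spectral accuracy), and that the element-interface contributions truly telescope. Verifying this compatibility—equivalently, that the collocation quadrature integrates the discrete derivative operator exactly—is the crux; once it is in hand, the reduction to the net flux $j^n(x_R)-j^n(x_L)$ and the appeal to \eqref{balance_condition} are immediate.
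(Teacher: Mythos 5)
Your proposal is correct and follows essentially the same route as the paper's proof: the transport term is reduced by integration in $x$ to the flux difference $j^n(x_R)-j^n(x_L)$, which vanishes by \eqref{balance_condition}; the Moyal terms vanish because of the periodic Fourier structure in $k$; and $\langle 1, G^n\rangle = 0$ follows from the orthogonality relation \eqref{eq:integral_nmu} together with $b_{0,\mu}\equiv 0$ from \eqref{eq:bnm}. The only cosmetic difference is that you annihilate the Moyal terms via the explicit factor $\left(2\pi\mi\nu/L_k\right)^{2l+1}$ at $\nu=0$, whereas the paper integrates by parts in $k$ and invokes periodicity of $\nabla_k^{2l}F^n$ --- the same fact in two guises; also note that your worry about quadrature exactness is moot, since the inner product in the proposition is the continuous integral of the (piecewise polynomial times trigonometric) numerical solution, so the fundamental theorem of calculus applies exactly on each element.
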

\begin{proof}
Through integration by parts, it is easy to verify 
\[
\langle 1, -\frac{\hbar k}{m} \nabla_x F^n \rangle = 0 \quad \textup{and} \quad   \langle 1, \xi_l^{pol} \nabla_k^{2l+1}F^n \rangle = 0
\]
due to the Eq.~\eqref{balance_condition} as well as the periodic condition in $k$-space
\[
\nabla_{k}^{2l} F^n(x, k+L_k) = \nabla_{k}^{2l} F^n(x, k), \quad l=0,1,\cdots, [P/2].
\]
So we only need to show $\langle 1, G^n\rangle = 0$. 

Splitting the summation with respect to $\nu$ into two parts, one for $\nu\neq 0$ and the other for $\nu=0$, {it} leads to
\begin{equation}\label{eq:1G}
\langle 1, G^n \rangle =  \sum_{\nu\neq 0} (\int_{\mathcal{K}}\psi_{\nu}(k) \D k) (\sum_{\mu=0}^{M-1}b_{\nu,\mu}(t) \int_{\mathcal{X}} \phi_\mu(x) \D x) + L_k \sum_{\mu=0}^{M-1}b_{0,\mu}(t)  \int_{\mathcal{X}}\phi_\mu(x) \D x  = 0,
\end{equation}
where we have applied in order Eq.~\eqref{eq:integral_nmu} and the fact that $b_{0,\mu}(t) \equiv 0$ for any $\mu\in\{0,1,\ldots,M-1\}$ according to Eq.~\eqref{eq:bnm}.
\end{proof}

The numerical energy conservation, 
however, requires some additional conditions, 
as stated below.

\begin{proposition} \label{pro:energy_con}
The numerical scheme \eqref{eq:spectral_complex_euler} conserves the energy, 
i.e.,  
\[
\langle H, F^{n+1} \rangle = \langle H, F^n \rangle, 
\]
where $H(x,k) = \frac{\hbar^2k^2}{2m} +V_{pol}(x) + V_{loc}(x)$,
provided that 

(a) $V_{loc}(x)\in L(\mathbb{R}) \cap C^\omega(\mathbb{R})$;

(b) $\forall\, k \in \mathcal{K}$, $F^n(x_L, k) = F^n(x_R, k) = 0$; 

(c) $\forall\, x \in \mathcal{X}$, $\nabla_{k}^{l} F^n(x, k_{\min}) = \nabla_{k}^{l} F^n(x, k_{\max}) = 0, ~~ l=0,1, \cdots, +\infty$.

\end{proposition}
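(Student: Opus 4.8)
The plan is to exploit the fact that the energy functional $\langle H,\cdot\rangle$ is \emph{linear} in $F$. Since \eqref{eq:spectral_complex_euler} is an explicit forward Euler step, $\langle H,F^{n+1}\rangle=\langle H,F^n\rangle+\Delta t\,\langle H,\mathcal{R}^n\rangle$, where $\mathcal{R}^n$ denotes the bracketed right-hand side, so it suffices to prove $\langle H,\mathcal{R}^n\rangle=0$. I would treat the three contributions of $\mathcal{R}^n$---the transport term, the polynomial Moyal term, and the convolution term $G^n$---separately, writing $H=T+V$ with $T=\frac{\hbar^2k^2}{2m}$ and $V=V_{pol}+V_{loc}$.

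For the transport term I would integrate by parts in $x$ over $\mathcal{X}$. Because $T\,\frac{\hbar k}{m}$ is independent of $x$, its contribution vanishes, and the $x$-boundary terms vanish by hypothesis (b); what survives is $\langle H,-\frac{\hbar k}{m}\nabla_xF^n\rangle=\iint\frac{\hbar k}{m}\,\nabla_xV(x)\,F^n\,\D x\D k$. For the polynomial Moyal term, the factor $V(x)$ in $H$ contributes nothing because $\int_{\mathcal{K}}\nabla_k^{2l+1}F^n\,\D k=0$ by (c); tested against $T$, integration by parts in $k$ annihilates every $l\ge1$ (since derivatives of $k^2$ of order at least three vanish, and the $k$-boundary terms vanish by (c)), leaving only $l=0$, which with $\xi_0^{pol}=\frac1\hbar\nabla_xV_{pol}$ yields $-\frac{\hbar}{m}\iint k\,\nabla_xV_{pol}\,F^n$. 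This exactly cancels the $V_{pol}$-part of the surviving transport term.

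The crux---and the part I expect to be the main obstacle---is the convolution term $G^n$. First, exactly as in the mass proof, the factor $V(x)$ in $H$ drops out because $\int_{\mathcal{K}}G^n\,\D k=0$, so only $\langle T,G^n\rangle$ remains. The key idea is to use hypothesis (a), the analyticity of $V_{loc}$, to recognize that \emph{on each Fourier mode the discrete convolution acts precisely as the full Moyal series}. Concretely, $\nabla_k^{2l+1}\psi_\nu=(2\pi\mi\nu/L_k)^{2l+1}\psi_\nu$, and inserting the Taylor identity \eqref{eq:Taylor} evaluated at $y=y_\nu$ together with the mass-conservation relation \eqref{eq:suf_con}, i.e. $L_k\Delta y=2\pi$ so that $y_\nu=2\pi\nu/L_k$, shows that the per-mode kernel factor in \eqref{eq:bnm}, namely $[V_{loc}(x-y_\nu/2)-V_{loc}(x+y_\nu/2)]/(\mi\hbar)$, coincides with $\sum_{l=0}^{\infty}\xi_l^{loc}(x)(2\pi\mi\nu/L_k)^{2l+1}$ (the series converges to this closed form because $V_{loc}\in C^{\omega}$). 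Hence, for the trigonometric-polynomial $F^n$, $G^n=\sum_{l=0}^{\infty}\xi_l^{loc}(x)\nabla_k^{2l+1}F^n$ holds \emph{exactly}.

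With this reduction in hand, $\langle T,G^n\rangle$ is handled term by term just like the polynomial Moyal term: every $l\ge1$ vanishes after integration by parts in $k$---which is precisely why (c) is imposed for all orders $l=0,1,\dots,+\infty$ rather than only up to $[P/2]$---while the single surviving $l=0$ term, using $\xi_0^{loc}=\frac1\hbar\nabla_xV_{loc}$, equals $-\frac{\hbar}{m}\iint k\,\nabla_xV_{loc}\,F^n$. This cancels the remaining $V_{loc}$-part of the transport term, so $\langle H,\mathcal{R}^n\rangle=0$ and the energy is conserved. The delicate point throughout is the exact mode-by-mode identification of the discrete convolution with the infinite Moyal series; it is this identity---valid only because $V_{loc}$ is analytic and the grids satisfy $L_k\Delta y=2\pi$---that both forces the all-order boundary hypothesis (c) and makes the cancellation exact rather than merely approximate.
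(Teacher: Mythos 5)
Your proof is correct and follows essentially the same route as the paper's: the crux in both is the mode-by-mode identification of the discrete convolution $G^n$ with the full Moyal series $\sum_{l\ge 0}\xi_l^{loc}(x)\nabla_k^{2l+1}F^n$, obtained by Taylor-expanding $V_{loc}(x\pm y_\nu/2)$ with $y_\nu = 2\pi\nu/L_k$ from $L_k\Delta y = 2\pi$, after which all $l\ge 1$ terms vanish by condition (c) together with $\nabla_k^{2l+1}H\equiv 0$, and the $l=0$ terms cancel against the transport term. The paper merely packages this last cancellation as the Liouville identity $\langle H, -\nabla_kH\,\nabla_xF^n+\nabla_xH\,\nabla_kF^n\rangle = 0$ instead of exhibiting the two $\pm\frac{\hbar}{m}\iint k\,\nabla_xV\,F^n\,\D x\,\D k$ integrals explicitly, which is a purely organizational difference.
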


\begin{proof}

We intend to prove the following relations: 
\begin{equation}\label{eq:relation}
\left\{
\begin{split}
&\langle H, -\frac{\hbar k}{m} \nabla_x F^n + \xi_{0}^{pol} \nabla_{k} F^n + \xi_{0}^{loc} \nabla_{k} F^n \rangle = 0, \\
&\langle H, G_n - \xi_{0}^{loc} \nabla_{k} F^n \rangle = 0, \\
&\langle H,  {\sum_{l=1}^{[P/2]}\xi^{pol}_l} \nabla_k^{2l+1} F^{n} \rangle = 0.
\end{split}
\right.
\end{equation}

Using the integration by parts leads to  
\begin{align*}
 \langle H, - \nabla_k H \nabla_x F^n  \rangle &= \langle \nabla_k H \nabla_x H, F^n \rangle, \\ 
 \langle H,  \nabla_x H \nabla_k F^n \rangle &= - \langle \nabla_x H \nabla_k H, F^n \rangle,
\end{align*}
where both conditions (\textup{b}) and (\textup{c}) are applied to eliminate the boundary terms, 
and then we arrive at the Liouville theorem 
\[
\langle H, -\nabla_k H \nabla_x F^n + \nabla_x H \nabla_k F^n \rangle = 0,
\]
which is nothing but the first relation of Eq.~\eqref{eq:relation}.

For $V_{loc}(x) \in C^\omega(\mathbb{R})$, by the Taylor theorem, we have 
\begin{align*}
 V_{loc}(x+{y_\nu}/{2})-V_{loc}(x-{y_\nu}/{2}) &= \sum_{l=0}^{+\infty}\frac{y_\nu^{2l+1}}{2^{2l}}\frac{\nabla_x^{2l+1} V_{loc}(x)}{(2l+1)!}, \\
 \frac{V_{loc}(x+{y_\nu}/{2})-V_{loc}(x-{y_\nu}/{2})}{y_\nu}-\nabla_xV_{loc}(x) &=\sum_{l=1}^{+\infty}\left(\frac{2\pi\mi\nu}{L_k}\right)^{2l+1}\frac{(-1)^l}{2^{2l}}\frac{\nabla_x^{2l+1} V_{loc}(x)}{(2l+1)!}.
\end{align*}
Substituting it into $G^n(x, k)$ yields 
\begin{equation*}
 G^n(x,k)-\xi^{loc}_0(x)\nabla_kF^n(x,k) = \sum_{l=1}^{+\infty}\xi^{loc}_l(x)\cdot\nabla_k^{2l+1}F^n(x,k), \end{equation*}
implying that, for the remaining two relations of Eq.~\eqref{eq:relation},  
we only need to verify 
\[
\langle H,\xi^{loc}_l \nabla_k^{2l+1}F^n\rangle = 0 \,\,\,\text{for}\,\, l = 1,2,\ldots, +\infty,
\]
and 
\[
\langle H, \xi^{pol}_l \nabla_k^{2l+1} F^n \rangle  = 0 \,\,\, \text{for}\,\, l = 1, 2, \ldots, [P/2],
\]
both of which must vanish through the integration by parts due to $\nabla_k^{2l+1}H \equiv 0$ for $l\geq 1$ and the condition \textup{(c)}. 

\end{proof}

\subsection{Splitting treatment}
\label{sec:splitting}

Finally, we would like to mention that the splitting treatments of unbounded potentials used in Eq.~\eqref{eq:key1} are very useful when some resulting subproblems allow exact solutions. Moreover, within the framework of the splitting schemes,  different methods could be used to tackle $V_{loc}(x)$ and $V_{pol}(x)$ separately. For example, we may solve the truncated Wigner equation \eqref{eq:truncated_W} in a splitting manner
\begin{equation}\label{eq:split_1}
\begin{cases}
\displaystyle
\text{(A)}\quad \frac{\partial}{\partial t}f(x,k,t)+\frac{\hbar k}{m}\nabla_xf(x,k,t) = \int_{k_{\min}}^{k_{\max}}\tilde{V}_w^{loc}(x,k-k')f(x,k',t)\D k', \\
\displaystyle
\text{(B)}\quad \frac{\partial}{\partial t}f(x,k,t) = \sum_{l=0}^{[P/2]}\xi_l^{pol}(x)\nabla_k^{2l+1}f(x,k,t),
\end{cases}
\end{equation}
where the subproblem (B) has explicit solutions for the quadratic potential $V_{pol}(x) = x^2-1$ \cite{SellierDimov2015-HO}.

Next, we consider the conservation laws of the splitting methods. It only needs to consider the simplest Lie-Trotter scheme, with the same spectral collocation method adopted for both subproblems,
namely, 
\begin{equation}\label{eq:split_num1}
\begin{cases}
\displaystyle
 F^{n+\frac{1}{2}}(x,k) = F^{n}(x,k)+\Delta t \left[-\frac{\hbar k}{m}\nabla_x F^n(x,k)+G^n(x,k)\right],\\
 \displaystyle
F^{n+1}(x,k) = F^{n+\frac{1}{2}}(x,k) + \Delta t\sum_{l=0}^{[P/2]}\xi_l^{pol}(x) \nabla_k^{2l+1}F^{n+\frac{1}{2}}(x,k),
\end{cases}
\end{equation}
and thus we arrive at 
\begin{align}
    F^{n+1}(x,k) = F^{n}(x,k)&+\Delta t \left[-\frac{\hbar k}{m}\nabla_xF^n(x,k)+G^n(x,k)+\sum_{l=0}^{[P/2]}\xi_l^{pol}(x) \nabla_k^{2l+1}F^n(x,k) \right] \nonumber \\
    &+ \Delta t^2\sum_{l=0}^{[P/2]}\xi_l^{pol}(x) \nabla_k^{2l+1}  \left[-\frac{\hbar k}{m} \nabla_xF^n(x,k)+G^n(x,k) \right].   \label{eq:numerical_split_1}
\end{align}

\begin{proposition} \label{pro:mass_con_split}
The splitting scheme \eqref{eq:numerical_split_1} conserves the mass, 
i.e.,  
\[
\langle 1, F^{n+1} \rangle = \langle 1, F^n \rangle ,
\]
provided that 

(a) $j^n(x_L) = j^n (x_R)$;

(b) $\forall\, x \in \mathcal{X}$, $\nabla_{k}^{2l} F^n(x, k_{\min}) = \nabla_{k}^{2l} F^n(x, k_{\max}) = 0$ for $l=0, \ldots, [P/2]$.
\end{proposition}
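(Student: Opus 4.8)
The plan is to subtract $F^n$ from both sides of the fully discretized splitting scheme~\eqref{eq:numerical_split_1} and verify that the two remaining brackets, the $O(\Delta t)$ term and the $O(\Delta t^2)$ term, each annihilate against the constant weight $\varphi\equiv 1$. The first bracket is literally the right-hand side already appearing in the one-step scheme~\eqref{eq:spectral_complex_euler}, so $\langle 1,\cdot\rangle$ of it vanishes by exactly the three arguments of Proposition~\ref{pro:mass_con}: the advection term $-\frac{\hbar k}{m}\nabla_x F^n$ integrates to zero in $x$ by parts, since $\int_{\mathcal{X}}\nabla_x F^n\,\D x$ produces the difference $j^n(x_R)-j^n(x_L)$, killed by the balance condition~(a); each drift term $\xi_l^{pol}\nabla_k^{2l+1}F^n$ integrates to zero in $k$ by parts, its boundary contribution $[\nabla_k^{2l}F^n]_{k_{\min}}^{k_{\max}}$ being killed by condition~(b); and $\langle 1,G^n\rangle=0$ because the $\nu=0$ Fourier mode of $G^n$ carries the vanishing factor $V_{loc}(x-y_0/2)-V_{loc}(x+y_0/2)=0$, i.e. $b_{0,\mu}(t)\equiv 0$ in~\eqref{eq:bnm}. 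Hence the only genuinely new content is the $\Delta t^2$ term.

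Abbreviate $R^n(x,k):=-\frac{\hbar k}{m}\nabla_x F^n(x,k)+G^n(x,k)$, so that the $\Delta t^2$ term is $\sum_{l=0}^{[P/2]}\xi_l^{pol}(x)\nabla_k^{2l+1}R^n$. Since $\xi_l^{pol}$ is $k$-independent, integrating by parts once in $k$ reduces the task to
\begin{equation*}
\langle 1,\textstyle\sum_{l=0}^{[P/2]}\xi_l^{pol}\nabla_k^{2l+1}R^n\rangle=\sum_{l=0}^{[P/2]}\int_{\mathcal{X}}\xi_l^{pol}(x)\,\big[\nabla_k^{2l}R^n(x,k)\big]_{k=k_{\min}}^{k=k_{\max}}\,\D x,
\end{equation*}
so it suffices to show the endpoint bracket $[\nabla_k^{2l}R^n]_{k_{\min}}^{k_{\max}}$ vanishes for each $l$. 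I would split $R^n$ into its $G^n$-part and its advection part and treat them separately.

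The $G^n$-part is immediate: $G^n$ lives in the span of the Fourier modes $\psi_\nu(k)=\me^{2\pi\mi\nu(k-k_{\min})/L_k}$ of~\eqref{eq:truncate_g}, all of which are $L_k$-periodic in $k$, so every $\nabla_k^{2l}G^n$ is $L_k$-periodic and its endpoint bracket cancels. The advection part $-\frac{\hbar k}{m}\nabla_x F^n$ is the crux, and the expected main obstacle, precisely because the explicit prefactor $k$ destroys the $k$-periodicity that $\nabla_x F^n$ otherwise enjoys. Writing $h:=\nabla_x F^n$ and expanding by the Leibniz rule,
\begin{equation*}
\nabla_k^{2l}(k\,h)=k\,\nabla_k^{2l}h+2l\,\nabla_k^{2l-1}h
\end{equation*}
(only two terms survive because $\nabla_k^2 k\equiv 0$), one sees two endpoint contributions. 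The term $2l\,\nabla_k^{2l-1}h$ cancels at the endpoints because $h=\nabla_x F^n$ is again $L_k$-periodic in $k$; and the term $k\,\nabla_k^{2l}h$ vanishes at each endpoint because differentiating condition~(b) in $x$ gives $\nabla_x\nabla_k^{2l}F^n(x,k_{\min})=\nabla_x\nabla_k^{2l}F^n(x,k_{\max})=0$, i.e. $\nabla_k^{2l}h=0$ there (the $l=0$ case is simply $h=0$ at the endpoints, which is condition~(b) with $l=0$).

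Combining the three contributions, every endpoint bracket vanishes, so the $\Delta t^2$ term integrates to zero, and together with the $\Delta t$ term this yields $\langle 1,F^{n+1}\rangle=\langle 1,F^n\rangle$. The single point that must be handled by hand, rather than dismissed by periodicity, is exactly this $k$-weighted advection contribution: it explains why condition~(b) is imposed as a genuine vanishing (and not mere periodicity) of the even $k$-derivatives of $F^n$ at the two $k$-endpoints, whereas condition~(a) suffices only for the $x$-advection. Everything else reuses Proposition~\ref{pro:mass_con} verbatim.
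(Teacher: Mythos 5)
Your proof is correct and is essentially the paper's own argument: reduce to the $\Delta t^2$ term via Proposition~\ref{pro:mass_con}, dispose of the $G^n$ contribution by $k$-periodicity, and kill the $k$-weighted advection contribution with the Leibniz identity $\nabla_k^{2l}(k\,h)=k\nabla_k^{2l}h+2l\,\nabla_k^{2l-1}h$ combined with periodicity (for the odd-derivative difference) and condition (b) (for the $k$-weighted even-derivative terms). The only difference is bookkeeping: the paper first integrates by parts in $x$, so that condition (b) is applied to $F^n$ itself and both resulting $x$-terms vanish because the inner $k$-integral $\int_{\mathcal{K}}\nabla_k^{2l+1}(-\tfrac{\hbar k}{m}F^n)\,\D k$ is identically zero in $x$, whereas you keep $\nabla_x F^n$ and instead differentiate condition (b) in $x$ --- equally valid, since (b) holds identically on $\mathcal{X}$ and the mixed partials of the smooth $F^n$ commute.
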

\begin{proof}
Comparing the scheme \eqref{eq:numerical_split_1} with \eqref{eq:spectral_complex_euler} and according to Proposition \ref{pro:mass_con}, we only need to show  
\[
\sum_{l=0}^{[P/2]} \langle 1,   \xi_l^{pol} \nabla_k^{2l+1}G^n \rangle  + \langle 1, \xi_l^{pol}  \nabla_k^{2l+1} (-\frac{\hbar k}{m} \nabla_x F^n ) \rangle = 0.
\]
Actually, each term in the left-hand-side of the above equation must vanish.  

As for the first term, it is a direct outcome of $\langle 1,   \nabla_k^{2l+1}G^n \rangle=0$ which can be readily verified through the integration by parts as well as using the
periodicity of $G^n$ in $k$-space. 

From condition (b), a direct calculation shows 
\begin{align*}
\int_{\mathcal{K}} \nabla_{k}^{2l+1} (-\frac{\hbar k}{m}F^n(x, k) ) \D k =& \left.\nabla_k^{2l} (-\frac{\hbar k}{m} F^n(x, k) \right|_{k_{\min}}^{k_{\max}}  \\
 =& \left. (- \frac{2l \hbar}{m} \nabla^{2l-1}_k F^n(x, k)  - \frac{\hbar k}{m} \nabla_k^{2l} F^n(x, k))\right|_{k_{\min}}^{k_{\max}} = 0, ~\forall\, x \in \mathcal{X}, 
\end{align*}
and then performing the integration by parts in $x$-space for the second term yields  
\begin{align*}
\langle 1, \xi_l^{pol} \nabla_k^{2l+1} (-\frac{\hbar k}{m} \nabla_x F^n ) \rangle 
= &  \left. \{\xi_l^{pol}(x)[\int_{\mathcal{K}} \nabla_k^{2l+1}  (-\frac{\hbar k}{m}F^n(x, k)) \D k]\} \right|_{x_L}^{x_R} \\
&-\int_{\mathcal{X}}  \nabla_x\xi_l^{pol}(x) [\int_{\mathcal{K}} \nabla_k^{2l+1}  (-\frac{\hbar k}{m}F^n(x, k)) \D k] \D x =0.
\end{align*}
The proof is finished. 
\end{proof}

Unfortunately, the numerical energy conservation fails to hold for the splitting treatment because
\begin{equation}\label{eq:integral_con}
\langle H,  \sum_{l=0}^{[P/2]}\xi_l^{pol} \nabla_k^{2l+1}(-\frac{\hbar k}{m}\nabla_x F^n+G^n) \rangle = 0
\end{equation}
cannot be guaranteed for the scheme \eqref{eq:numerical_split_1}.
This can be readily verified by taking, for instance,  $V_{pol}(x)=x^2-1$ (see more details in Section \ref{sec:rational}).

\section{Numerical experiments}
\label{sec:4}

In this section, several typical quantum systems are employed to test the performance of the {proposed} methods and the atomic units $\hbar =m = e= 1$ are used if not specified.
We employ the $L^2$-error $\varepsilon_2(t)$ and the $L^{\infty}$-error $\varepsilon_{\infty}(t)$ to study the convergence rate of the spectral collocation method:
\begin{align}
  \label{eq:err_2}
  \varepsilon_2(t)& = \left[{\iint_{\mathcal{X}\times\mathcal{K}}} (f^{\text{num}}(x,k,t)-f^{\text{ref}}(x,k,t))^2\D x\D k \right]^{1/2},\\
  \label{eq:err_infty}
  \varepsilon_\infty(t) &= \max_{(x,k)\in \Omega}\{|f^{\text{num}}(x,k,t)-f^{\text{ref}}(x,k,t)|\},
\end{align}
where $f^{\text{num}}(x,k,t)$ is the numerical solution, and $f^{\text{ref}}(x,k,t)$ the reference solution which could be {either} the exact solution or the numerical solution on {the finest} grid mesh. 
In order to monitor the numerical conservation of mass and energy, the variations of total mass $\varepsilon_{\text{mass}}(t)$ and energy $\varepsilon_{\text{energy}}(t)$ are also chosen as the metrics, 
\begin{align}
  \label{eq:epsilon_mass}
  \varepsilon_{\text{mass}}(t) &={\iint_{\mathcal{X}\times\mathcal{K}} (f^{\text{num}}(x,k,t) - f^{\text{num}}(x,k,0))\D x\D k},\\
  \label{eq:epsilon_energy}
  \varepsilon_{\text{energy}}(t) &= {\iint_{\mathcal{X}\times\mathcal{K}} H(x,k)(f^{\text{num}}(x,k,t)- f^{\text{num}}(x,k,0))\D x\D k.}
\end{align}
As we did in \cite{ShaoLuCai2011,XiongChenShao2016}, all above metrics are evaluated by a simple rectangular rule over a uniform mesh.

\subsection{The P\"oschl-Teller potential}
\label{sec:pt}

The P\"oschl-Teller potential
\begin{equation}
  \label{eq:pt}
  V(x) = -\lambda(\lambda+1)\frac{\hbar^2}{2m}\text{sech}^2(x), \quad \lambda\in\mathbb{N}, 
\end{equation}
serves as the first example 
for its energy levels $E_{\lambda,n}$ and bound states $\phi_{\lambda,n}(x)$ with $n=0,1,\ldots,\lambda-1$ can be obtained analytically \cite{bk:Flugge2012, BundTijero2000}.  

For instance, when $\lambda=1,2$,  we have
\begin{align*}
  \lambda=1:~&E_{1,0} = -\frac{\hbar^2}{2m}, \; \phi_{1,0}(x) = \frac{\sqrt{2}}{2}\text{sech}(x);\\
  \lambda=2:~&E_{2,0} = -\frac{2\hbar^2}{m},\;\phi_{2,0}(x) = \frac{\sqrt{3}}{2}\text{sech}^2(x), \\
  &E_{2,1} = -\frac{\hbar^2}{2m}, \; \phi_{2,1}(x) = \frac{\sqrt{6}}{2}\text{sech}^2(x)\sinh(x).
\end{align*}

%%%%%%%%%%%%%%%%%%%%%%%%%%%%%%%%%%%%%%%%%%%%%%%%%%%%%%%%%%
\begin{figure}
\includegraphics[width=0.5\textwidth,height=0.35\textwidth]{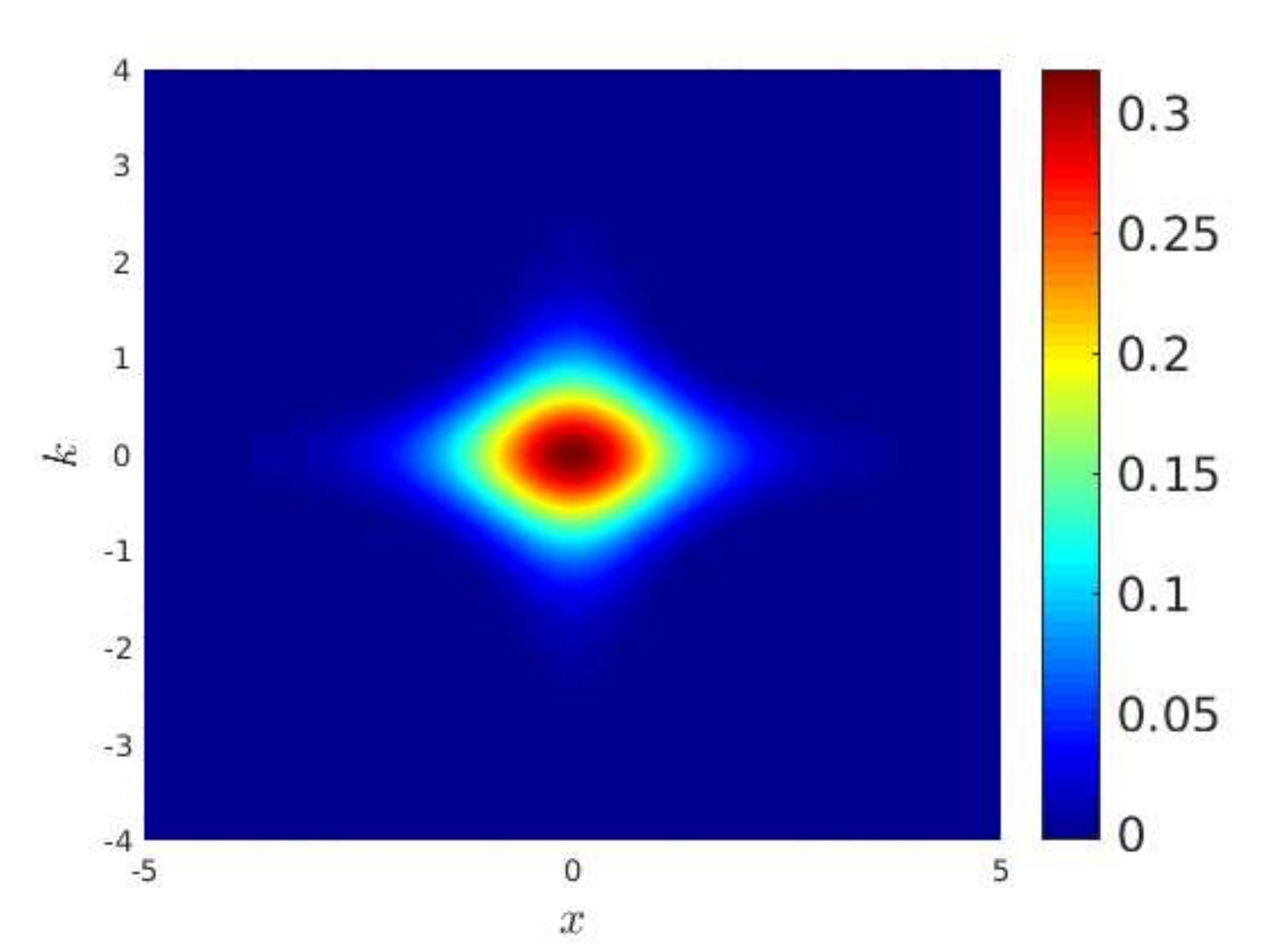}
\includegraphics[width=0.5\textwidth,height=0.35\textwidth]{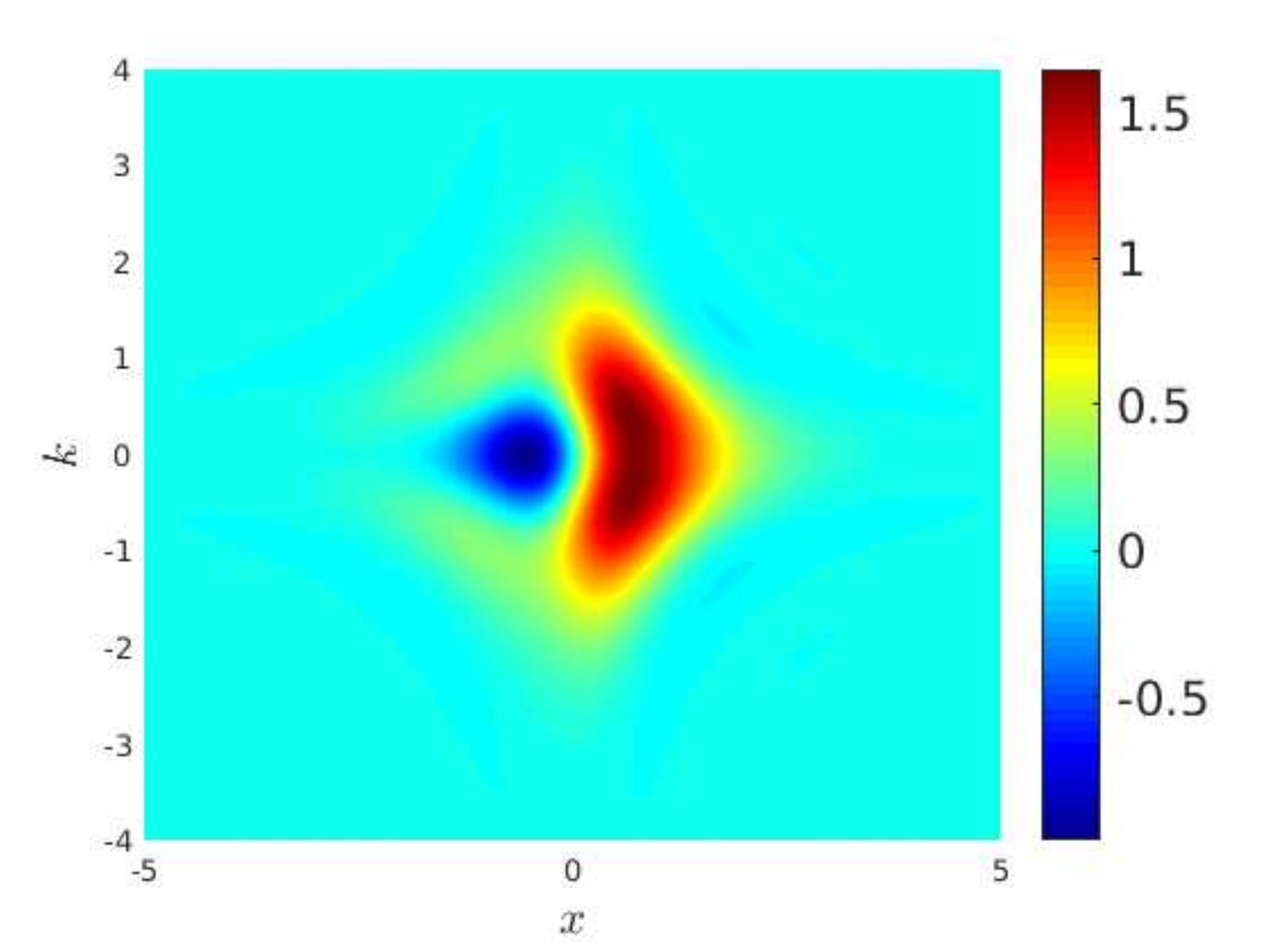}
\caption{\small The Wigner function under the P\"oschl-Teller potential. Left: The stationary state $f_{1, 0}(x,k)$ in Eq.~\eqref{eq:stationary_1}; Right: The superposed state $f_{2,1,0}(x, k, 0)$ in Eq.~\eqref{eq:initial_2}.}
\label{fig:pt2}
\end{figure}
%%%%%%%%%%%%%%%%%%%%%%%%%%%%%%%%%%%%%%%%%%%%%%%%%%%%%%%%%%

The Wigner function corresponding to $\phi_{1,0}$, as shown in the left plot of Fig.~\ref{fig:pt2}, reads 
\begin{equation}
  \label{eq:stationary_1}
    f_{1, 0}(x,k) = \int_{-\infty}^{+\infty}\phi_{1,0}\left(x+\frac{y}{2}\right)\phi^\ast_{1,0}\left(x-\frac{y}{2}\right)\me^{-\mi ky}\D y = \frac{\sin(2xk)}{\hbar\sinh(2x)\sinh(\pi k)}.
\end{equation}

\begin{figure}
  \includegraphics[width=0.5\textwidth,height=0.35\textwidth]{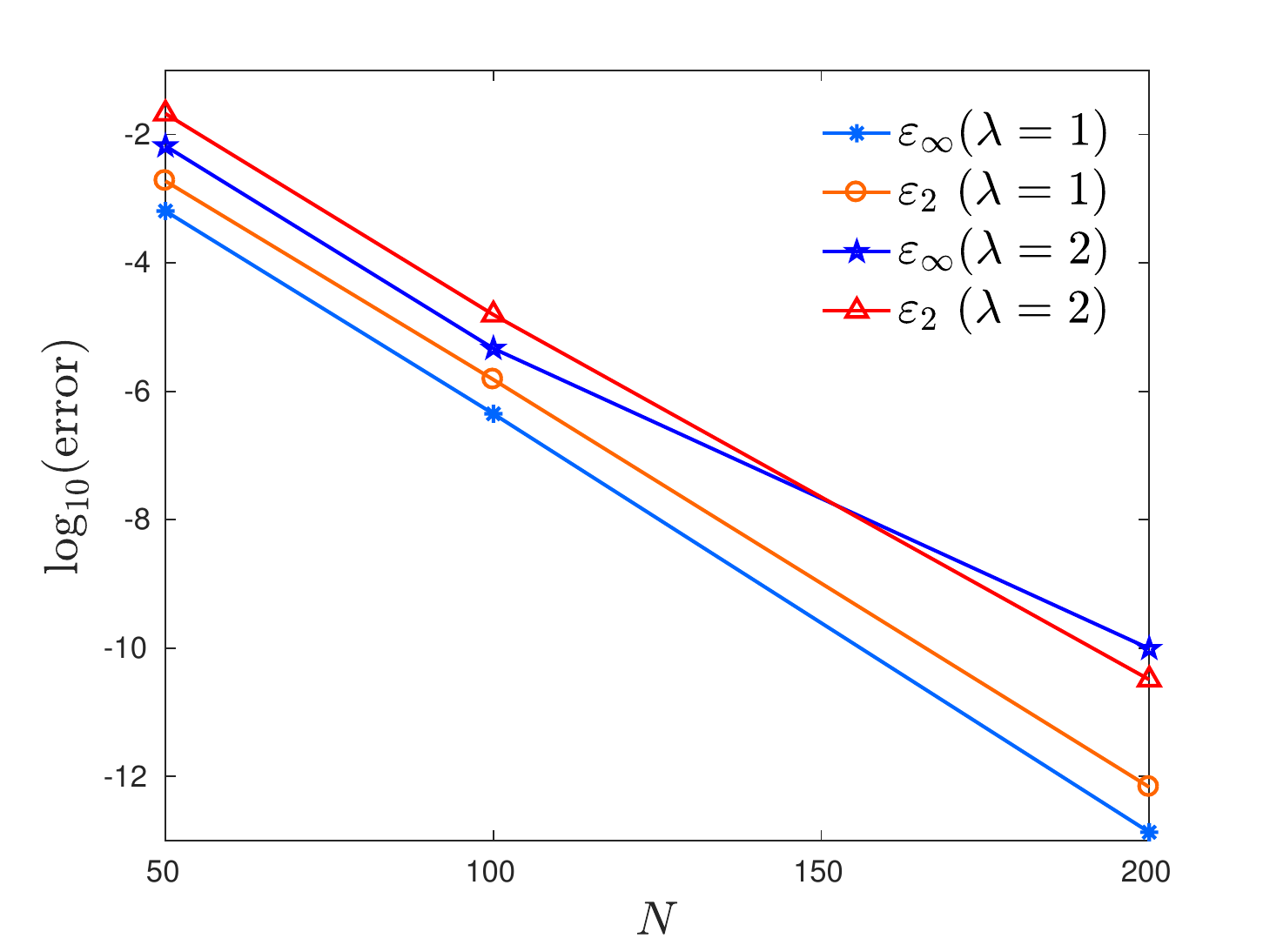}
  \includegraphics[width=0.5\textwidth,height=0.35\textwidth]{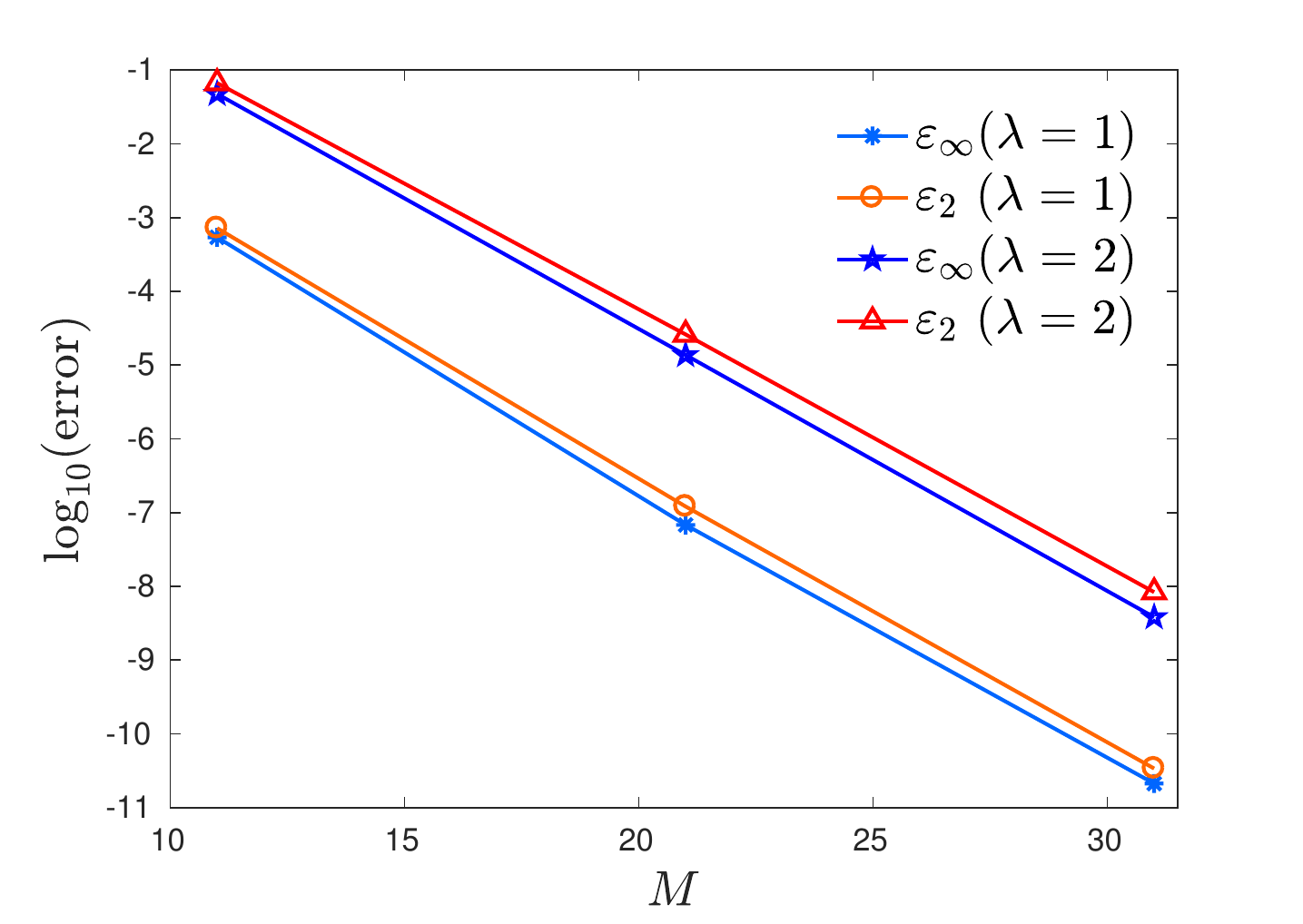}
  \caption{\small The P\"oschl-Teller potential: The convergence rate with respect to $N$ (left) and $M$ (right). The spectral convergence in both $x$-space and $k$-space is observed for $\lambda = 1,2$. All the errors are measured at the instant $t=10$.}
   \label{fig:converg_pt}
\end{figure}

The Wigner function corresponding to the superposed state 
$\frac{\sqrt{2}}{2}\phi_{2,0}(x)\me^{-\mi E_{2,0}t/\hbar}+\frac{\sqrt{2}}{2}\phi_{2,1}(x)\me^{-\mi E_{2,1}t/\hbar}$  is 
\begin{equation}
  \label{eq:initial_2}
  \begin{split}
  f_{2,1,0}(x,k,t) = &\frac{3}{8}\int_{-\infty}^{+\infty} \text{sech}^2(x+y/2)\text{sech}^2(x+y/2)\times [2\sinh (x+y/2)\sinh(x-y/2)\\
  & + \sqrt{2}\sinh(x-y/2)\me^{\mi 3\hbar^2t/2m}+\sqrt{2}\sinh(x+y/2)\me^{-\mi 3\hbar^2t/2m} + 1] \me^{-\mi k y} \D y,
  \end{split}
\end{equation}
and the energy difference there is 
\begin{equation}\label{eq:dE}
\Delta E=E_{2,1}-E_{2,0}=\frac{3\hbar^2}{2m} = 1.5.
\end{equation}
We show $f_{2,1,0}(x, k, 0)$ in the right plot of Fig.~\ref{fig:pt2}.

Two groups of simulations are performed: $f_{1, 0}(x,k)$ in Eq.~\eqref{eq:stationary_1} is used as the initial data in the first while $f_{2,1,0}(x, k, 0)$ in Eq.~\eqref{eq:initial_2} in the second. Other parameters are chosen as:
$-x_L=x_R=20$, $-k_{\min}=k_{\max}=\frac{10\pi}{3}$, $Q=10$ and $\Delta t=0.0005$.
To study the convergence rate with respect to $N$, the number of collocation points in each $x$-element is fixed to be $M = 41$. Similarly, when studying the convergence rate with respect to $M$, the number of collocation points in $k$-space is fixed to be $N = 256$. As shown in Fig.~\ref{fig:converg_pt}, the spectral convergence with respect to both $N$ and $M$ can be clearly observed. 
When $\lambda=1$, the numerical Wigner function is found to be almost at rest,
and the numerical errors at $t=10$ are no more than $10^{-13}$ on the finest mesh $(N,M) = (256,41)$. When $\lambda = 2$, as predicted by Eq.~\eqref{eq:initial_2}, 
the Wigner function rotates around the center periodically with the period $1.5$, referring to the energy level transition under the P\"{o}schl-Teller potential. Nevertheless, the numerical errors are still no more than $10^{-10}$ until $t=10$ on the finest mesh. The left plot of Fig.~\ref{fig:mean_pt2} shows the averaged displacement $\langle x\rangle$ and the averaged momentum $\langle k\rangle$ up to $t=50$ and a simple periodic mode is so evident. A direct spectrum analysis gives us a frequency of $1.508$, see the right plot of Fig.~\ref{fig:mean_pt2}, which accords with the theoretical value $\Delta E$ in Eq.~\eqref{eq:dE}.

In order to verify the numerical conservation laws, we record $\varepsilon_{\text{mass}}(t)$ and $\varepsilon_{\text{energy}}(t)$ during the simulations and find out: When $\lambda=1$, $\varepsilon_{\text{mass}}(t)$ is no more than $1.4766\times 10^{-14}$ and $\varepsilon_{\text{energy}}(t)$ is no more than $6.4893\times 10^{-13}$ until $t=10$, 
which is comparable to the errors on the boundary (around $10^{-14}$); If we enlarge the computational domain to be $-x_L=x_R=100$, $-k_{\min}=k_{\max}={100\pi}$ to guarantee the Wigner function vanishes outside the computational domain, then both $\varepsilon_{\text{mass}}(t)$ and $\varepsilon_{\text{energy}}(t)$ are around the machine epsilon until $t=10$ even on a very coarse mesh, say, $Q = 1$, $M=11$ and $N=4$. That is, our proposed spectral discretization is indeed mass-and-energy-conserving as predicted by Propositions \ref{pro:mass_con} and \ref{pro:energy_con}.

\begin{figure}
   \centering
   \includegraphics[width=0.49\textwidth,height=0.35\textwidth]{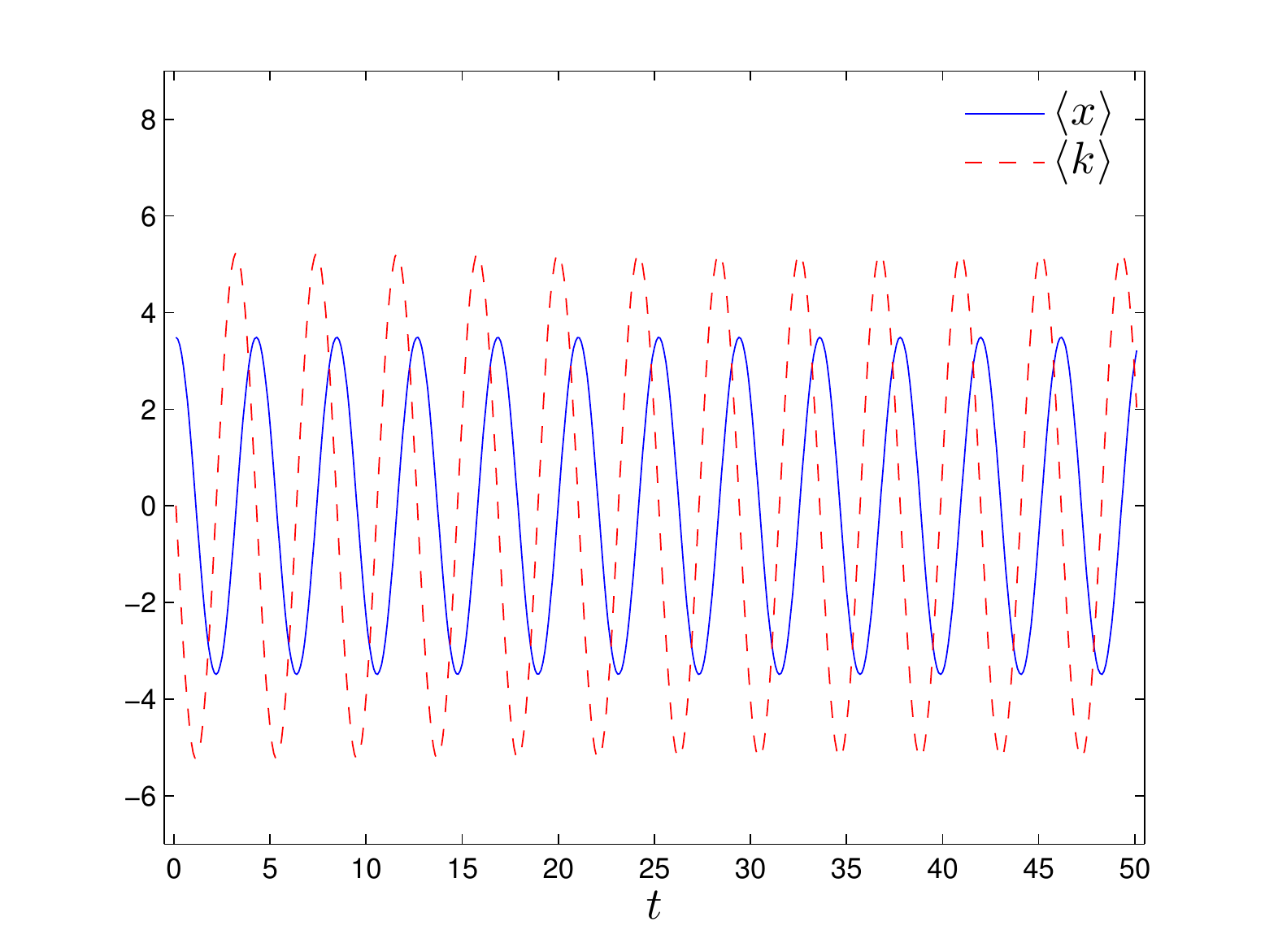}
   \includegraphics[width=0.49\textwidth,height=0.35\textwidth]{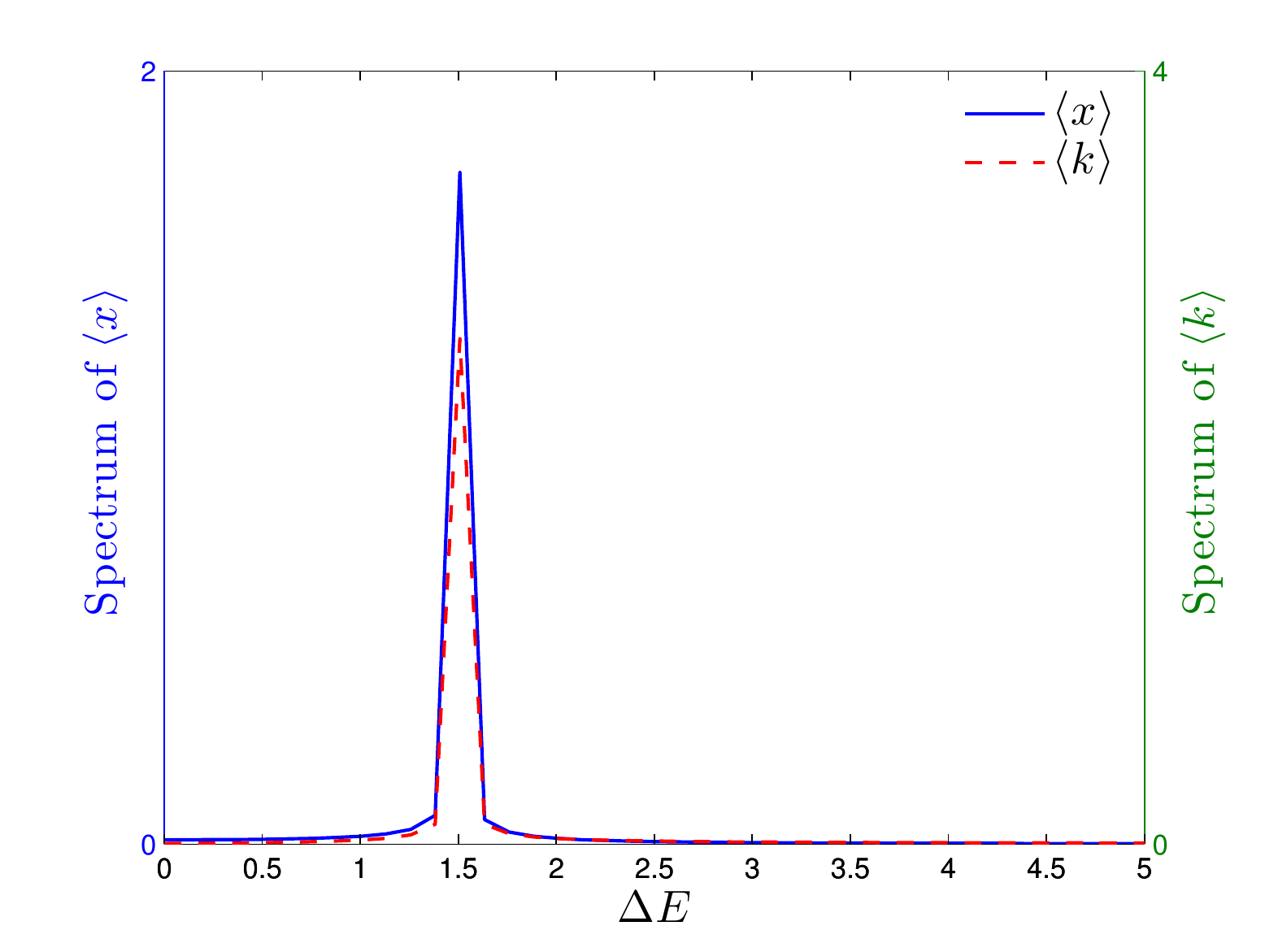}
   \caption{\small The P\"oschl-Teller potential with $\lambda=2$. Left: The averaged displacement $\langle x\rangle$ and the averaged momentum $\langle k\rangle$ along the time. Right: The spectrum analysis gives a frequency of $1.508$ which accords with the theoretical value $\Delta E$ in Eq.~\eqref{eq:dE}.}
    \label{fig:mean_pt2}
\end{figure}

\subsection{Fourth-order anharmonic oscillators}
\label{sec:anharmonic}

We turn to discuss a class of very simple, but rather important unbound potentials, termed {fourth-order} anharmonic oscillators, and eight fourth-order double-well potentials are chosen from \cite{SomorjaiHornig1962}, the parameters of which are presented in Table~\ref{tab:4apot}. 
The symmetric potentials $V_1 \sim V_4$ there are listed in order of decreasing barrier height $h$,
while the asymmetric ones $V_5\sim V_{8}$ in order of increasing gap $g$.

\begin{table}
\centering
\caption{\small Eight fourth-order double-well potentials adopted from \cite{SomorjaiHornig1962}. 
The potential is characterized by $V(x) = \frac12 (v_2x^2+v_3x^3+v_4x^4)$
and three parameters $v_2, v_3, v_4$ are chosen to adjust its two local {minima} $g-h$, $-h$ as well as the width $w$, see the left plot. The symmetric potentials $V_1 \sim V_4$ are listed in order of decreasing barrier height $h$ while the asymmetric ones $V_5\sim V_{8}$ in order of increasing gap $g$.}
{\scriptsize{
  \begin{tabular}{cccccccccc}
  \hline
   \hline
    % \multicolumn{5}{c|}{ Symmetric }  &  \multicolumn{4}{c}{ Asymmetric}\\
    % \hline
           \multirow{7}{*}{ %{4.4cm}[0.37cm]
           \includegraphics[scale=0.22]{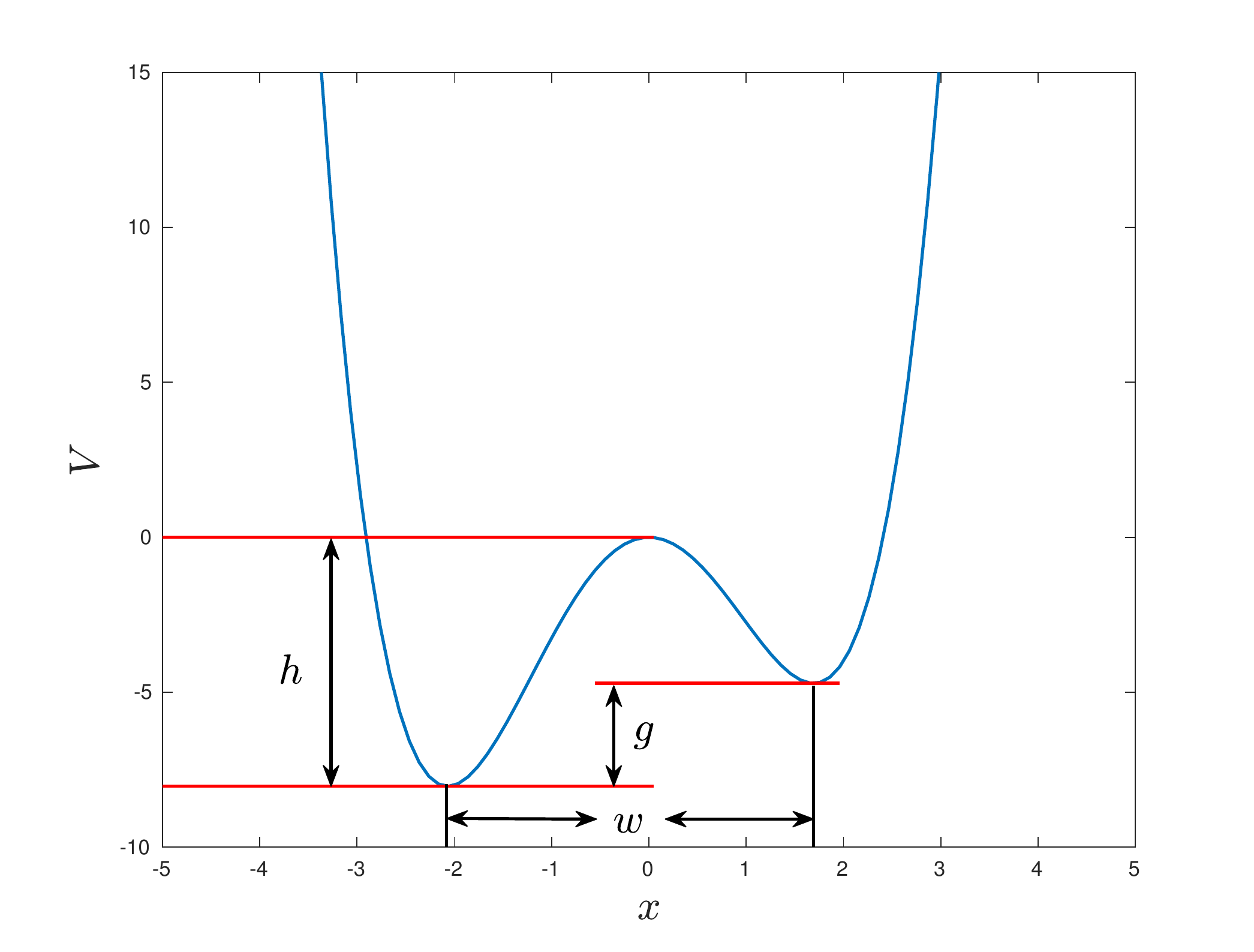}} &  &  $V_1$  &   $V_2$  &   $V_3$  &  $V_4$  &  $V_5 $
              &  $V_6$  &   $V_7$  &   $V_8$    \\  
    \cline{2-10}
    & $v_4$    &  0.6575 &   0.32   &  0.20   &  0.05   &  1.00
              &   0.98   &  1.00   &  1.00      \\ 
     \cline{2-10}
    & $v_3$     &    0    &    0     &    0    &    0    
              &  0.50   &  0.4939  & 0.5812  &  3.00     \\
     \cline{2-10}
   &  $v_2$     &  -5.26  &  -2.56   &  -1.60  &  -0.40  
              &  -7.00  &  -7.77   & -7.9050 &  -1.00    \\
\cline{2-10}
& $w$     &  4.00   &   4.00   &  4.00   &  4.00   
              & 3.7625  &   4.00   &  4.00   & 2.6575     \\
   \cline{2-10}
        & $h$     &   5.26  &   2.56   &  1.60   &   0.40  
              & 8.0465  &  9.9553  & 10.5165 & 7.0455     \\
              \cline{2-10}
              & $g$     &    0    &     0    &    0    &    0    
              & 3.3235  &  3.9510  & 4.6495  & 7.0433     \\
    \hline
    \hline
  \end{tabular}
 }}
 \label{tab:4apot}
\end{table}

We will show that the energy level transitions can be accurately captured by
 our conservative spectral solver for the Wigner quantum dynamics. 
  To this end, a Gaussian wave packet of the form:
 \begin{equation}\label{init_GWP}
 f_0(x,k,0)=A\exp(-\sigma_1(x-x_0)^2-\sigma_2(k-k_0)),
 \end{equation}
 is set to be the initial state where $A$ is the normalizing constant. 
 In the first group of simulations, we set $A =2/\pi$, $\sigma_1 = 4$, $\sigma_2 = 1$,  $x_0=0$, $k_0=0.5$ for the initial data, the time step $\Delta t = 10^{-5}$
 and the mesh size to be $N = 200$, $Q = 10$ and $M = 21$. The computational domain is $-x_L=x_R=15$, $-k_{\min}=k_{\max}={10\pi}/{3}$ for $V_1\sim V_4$ and $-x_L=x_R=10$, $-k_{\min}=k_{\max}={5\pi}$ for $V_5 \sim V_8$.

\begin{table}[h]
  \centering
  \caption{\small Energy level transitions in a fourth-order anharmonic oscillator with the symmetric potential $V_4$. $\Delta E_{mn}$ denotes the energy difference for the transition between the $n$-th and the $m$-th levels. The reference (Ref.) value of $\Delta E_{mn}$ is calculated by $\Delta E_{mn} = |E_m-E_n|$, whereas the numerical (Num.) value is directly obtained by the spectral analysis of either the averaged displacement $\langle x\rangle$ or momentum $\langle k\rangle$ from the numerical evolution of the Wigner equation until $t=100$ (left) and $t=1000$ (right) by the conservative spectral method, see Fig.~\ref{fig-v4}.
  The nine lowest energy levels $E_n$ with $n = 0,1,\ldots, 8$ for the reference are obtained by the highly accurate Pruess method with the relative error tolerance of $10^{-10}$\cite{PruessFulton1993,ShaoCaiTang2006}.}
 \begin{tabular}{cccc|cccc}
   \hline
   \hline
        \multicolumn{4}{c|}{$t=100$}  &  \multicolumn{4}{c}{$t=1000$}\\
   \hline
     $\Delta E_{mn}$ & Ref. &   {Num.}   &     Error      &
     $\Delta E_{mn}$ & Ref. &   {Num.}   &     Error      \\
   \hline
       $E_1-E_0$ &    0.1845    &    0.1885    &     0.0040     &
       $E_1-E_0$ &    0.1845    &    0.1822    &     0.0023     \\
   \hline
       $E_2-E_1$ &    0.5465    &  {\it 0.5655}  &   0.0190     &
       $E_2-E_1$ &    0.5465    &    0.5466    &     0.0001     \\
   \hline
       $E_3-E_2$ &    0.5897    &  {\it 0.5655}  &   0.0242     &
       $E_3-E_2$ &    0.5897    &    0.5906    &     0.0009     \\
   \hline
                &              &               &              &          
       $E_4-E_3$ &    0.6977    &    0.6974    &     0.0003     \\
   \hline
       $E_5-E_4$ &    0.7738    &  {\it 0.8168}  &   0.0430      &
       $E_5-E_4$ &    0.7738    &    0.7728    &     0.0010     \\
   \hline
       $E_6-E_5$ &    0.8401    &  {\it 0.8168}  &   0.0233     &
       $E_6-E_5$ &    0.8401    &    0.8419    &     0.0018     \\       
   \hline
                 &              &              &             &
       $E_7-E_6$ &    0.8984    &    0.8985    &     0.0001     \\
   \hline
       $E_8-E_7$ &    0.9508    &    0.9425    &     0.0083     &
       $E_8-E_7$ &    0.9508    &    0.9488    &     0.0020     \\
   \hline
       $E_3-E_0$ &    1.3207    &    1.3195    &     0.0012     &
       $E_3-E_0$ &    1.3207    &    1.3195    &     0.0012     \\
   \hline
       $E_5-E_2$ &    2.0612    &    2.0735    &     0.0123     &
       $E_5-E_2$ &    2.0612    &    2.0609    &     0.0003     \\
   \hline
   \hline                                                           
  \end{tabular}
   \label{tab:v4_energy}
\end{table}

Taking the symmetric potential $V_4$ as an example, we record the signals of $\langle x\rangle$ and $\langle k\rangle$ until $t=100$ and analyze their spectrum, as demonstrated in left plot of Fig.~\ref{fig-v4},
and the main energy level transitions corresponding to the peaks of spectrum are collected in left part of Table~\ref{tab:v4_energy}.
We can observe there that the peaks of spectrum for $\langle x\rangle$ accord perfectly with those for $\langle k\rangle$, 
and the location of each peak, denoted by $\Delta E$, gives the energy transition between two different energy levels. The six highest peaks of spectrum with the height greater than $0.01$ are located at $\Delta E= 0.1885, 0.5655, 0.8168, 0.9425, 1.3195, 2.0735$. 
In order to determine which two levels such transition happens between, we first use the {nine lowest energy levels $E_n$ with $n=0,1,\ldots,8$ obtained by the Pruess method with a relative error tolerance of $10^{-10}$\cite{PruessFulton1993,ShaoCaiTang2006}: $-0.1008$, $0.0837$, $0.6302$, $1.2200$, $1.9177$, $2.6915$, $3.5316$, $4.4300$, $5.3808$, to calculate all $36$ possible energy differences by $\Delta E_{mn} = |E_m-E_n|$ and fix $m_0,n_0$ such that $\Delta E_{m_0n_0}$ minimizes the distance $|\Delta E-\Delta E_{mn}|$ over all these $36$ candidates. 
If $|\Delta E-\Delta E_{m_0n_0}|$ is far less than all the others, say, the former is less than one-tenth of the latter, 
then we regard $\Delta E$ to be the energy difference for the transition between the $n_0$-th and the $m_0$-th levels, 
otherwise we regard the case to be indeterminable. All the levels to be transited for those six peaks are listed in left part of Table~\ref{tab:v4_energy} with the errors under $5\%$. 
For example, 
the primary peak corresponding to $\Delta E= 0.1885$ represents the energy transition between the first and zeroth levels with $\Delta E_{10}=0.1845$ and the error is no more than $0.4\%$, see the left part of the third row in Table~\ref{tab:v4_energy}. However, there exist two indeterminable cases: $\Delta E = 0.5655, 0.8168$,
see the numbers in italics of Table~\ref{tab:v4_energy}, where we have presented two nearest pairs of energy levels for each case. In order to further distinguish these two indeterminable cases, a longer simulation until
$t=1000$ is performed to raise the resolution of frequency from $1/100$ to $1/1000$ with the spectrum shown in the right plot of Fig.~\ref{fig-v4} as well as the resulting energy differences collected in the right part of Table~\ref{tab:v4_energy}. Ten highest peaks with the height greater than $0.005$ are located at $\Delta E = 0.1822, 0.5466, 0.5906, 0.6974, 0.7728, 0.8419, 0.8985, 0.9488, 1.3195, 2.0609$,
as shown in Table~\ref{tab:v4_energy}, all of which allow determinable energy levels to be transited with the errors under $0.3\%$.  Now we can observe there that the first indeterminable case is split into $\Delta E = 0.5466,0.5906$ and the second one split into $\Delta E = 0.7728, 0.8419$ due to the higher resolution of frequency, that is, a low resolution of frequency leads to the indeterminable case when two energy differences are close; the numerical values for the transitions are all improved and two new transitions, $\Delta E=0.6974,0.8985$, are captured at the same time. 
The above analysis demonstrates that, 
without the prior knowledge of energy levels of the quantum system in question, 
our conservative spectral method combined with the standard spectrum analysis, 
is capable of capturing accurately the energy level transitions with the resolution of $1/t$ through a long time simulation until the final time $t$. Moreover, the variations of mass $\varepsilon_{\text{mass}}(1000)$ and energy $\varepsilon_{\text{energy}}(1000)$ are $1.1737\times10^{-6}$ and $3.2910\times10^{-6}$, respectively, 
and as predicted by Propositions \ref{pro:mass_con} and \ref{pro:energy_con}, enlarging the computational domain to cut down the boundary effect will further reduce both $\varepsilon_{\text{mass}}(1000)$ and $\varepsilon_{\text{energy}}(1000)$ to the machine epsilon. For example $\varepsilon_{\text{mass}}(1000)$ (resp. $\varepsilon_{\text{energy}}(t)$) becomes no more than $1.8795\times10^{-14}$ (resp. $6.5650\times10^{-13}$) even on a very coarse grid mesh $Q = 1$, $M = 5$, $N = 8$ when resetting $-x_L = x_R =20$, $-k_{\min} = k_{\max} = 25\pi$. That is, 
the conservation laws, Propositions \ref{pro:mass_con} and \ref{pro:energy_con}, can be numerically verified  in the presence of unbounded polynomial potentials.

Next, we perform some quantum tunneling tests in those eight double wells shown in Table~\ref{tab:4apot}. The initial Gaussian wave packet occupies the well on the left by setting $A=1/\pi$, $\sigma_1 = 1$, $\sigma_2 = 1$, $x_0 = -2$, $k_0 = 0.5$, and we will measure the partial mass of the Gaussian wave packet contained in the well on the right as did in \cite{GrabertWeiss1985} via 
\begin{equation}
  \label{eq:rate_r}
  P_r(t) = \iint_{[0,x_R]\times\mathcal{K}} f(x,k,t)\D x\D k.
\end{equation}
Accordingly, the tunneling rate is just $P_r$ 
because the conservative spectral method conserves the total mass which equals to one here. 
For the symmetric potentials $V_1 \sim V_4$, in order of increasing barrier height, see Table~\ref{tab:4apot},  
it is readily observed from the left plot of Fig.~\ref{fig:v4_tunnelingrate} that the quantum tunneling is more likely to happen (as $P_r$ increases) for the barrier with a lower height. However, things become more complicated for the asymmetric potentials $V_5 \sim V_{8}$ as the tunneling rate not only depends on the height of the left-hand barrier $h$, but also depends on that of the right-hand barrier $h-g$. 
As shown in the right plot of Fig.~\ref{fig:v4_tunnelingrate}, the tunneling is more likely to happen for $V_6$ than $V_5$ when $t>6$, even though $V_6$ has a larger barrier height than $V_5$ (see Table~\ref{tab:4apot}, $9.9553 > 8.0465$). In fact, once the wave packet travels across the potential barrier with a certain probability, it is more likely to be trapped in the local minimum when $h-g$ is sufficiently large ($4.7230 < 6.0443$). However, the tunneling effect is still limited when the barrier height is too high, as it will lead to a relatively small probability of the wave packet to surmount the barrier (such as $V_7$). An extreme case is $V_8$, in which the barrier height $h - g$ on the right-hand well is too low ($0.0022$) and consequently it is easy for the wave packet to return to the well on the left. That accounts for the fact that the tunneling rate of $V_8$ seems the smallest although the barrier height $h$ of the left-hand side barrier is quite small (7.0455).
Moreover, one can see that the curves of $P_r$ in Fig.~\ref{fig:v4_tunnelingrate} are oscillatory after some time instants, and the oscillation under the asymmetric potentials is more violent than that under the symmetric ones
which may be caused by the higher barrier height of the former, see Table~\ref{tab:4apot}.
Such oscillation, as a typical quantum phenomena \cite{Zurek1991},  emerges directly from the oscillating structure of the Wigner function around the center barrier $x=0$, see e.g., Fig.~\ref{fig:p4_v7_time}. We can easily observe there that a highly oscillatory pattern appears when the wave packet tries to tunnel through the high barrier of $V_7$ ($h = 10.5165$) while most of the wave packet is left in well on the left. In contrast, once the wave packet travels across a lower barrier in a classical manner, then we should not expect such obvious oscillation and this is just the case for $V_4$ with the barrier height $h=0.40$. 
%%%%%%%%%%%%%%%%%%%%%%%%%%%%%%%%%%%%%%%%%%%%%%%%%%%%%%%%
 \begin{figure}
     \includegraphics[width=0.5\textwidth,height=0.35\textwidth]{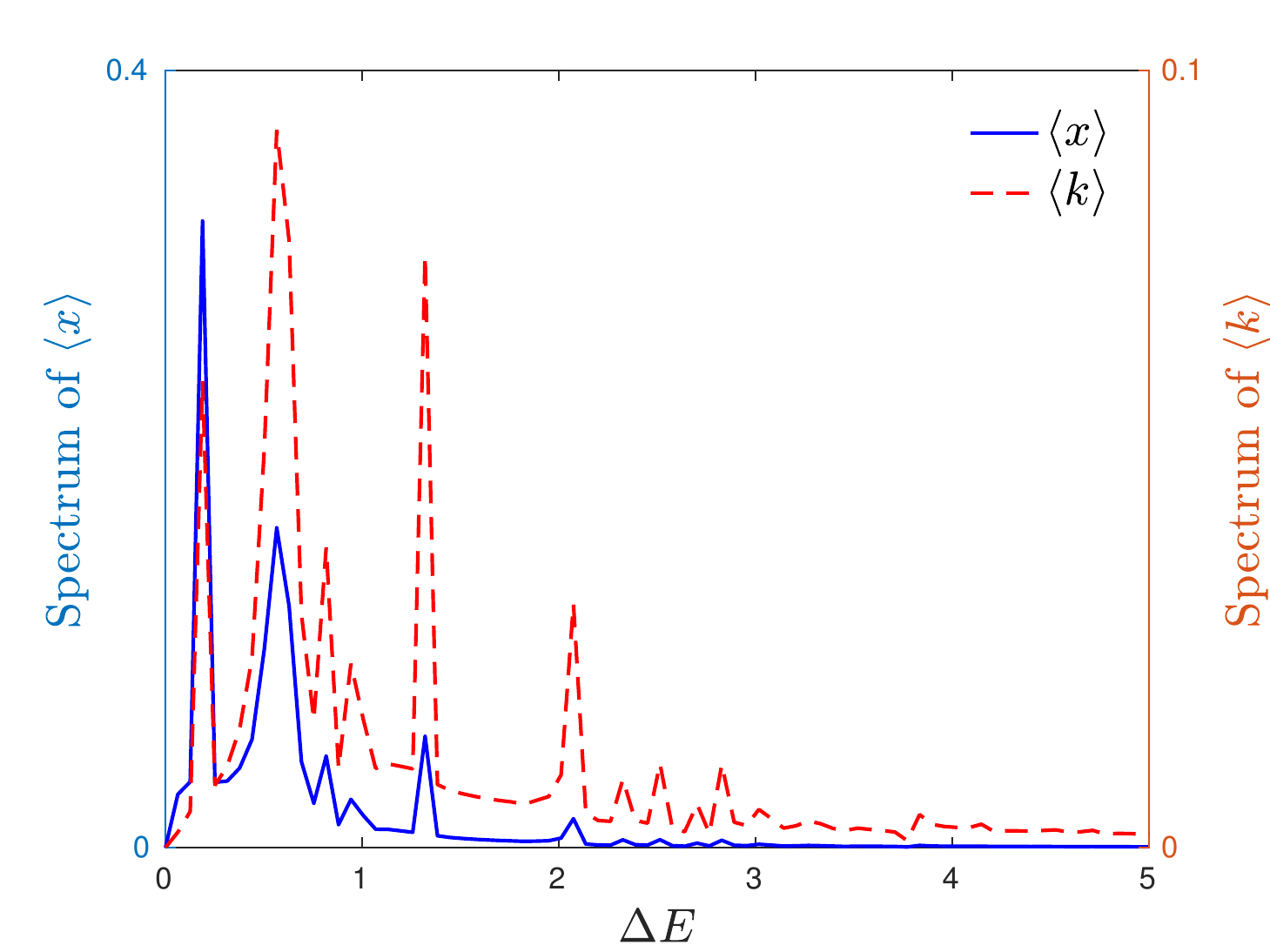}
     \includegraphics[width=0.5\textwidth,height=0.35\textwidth]{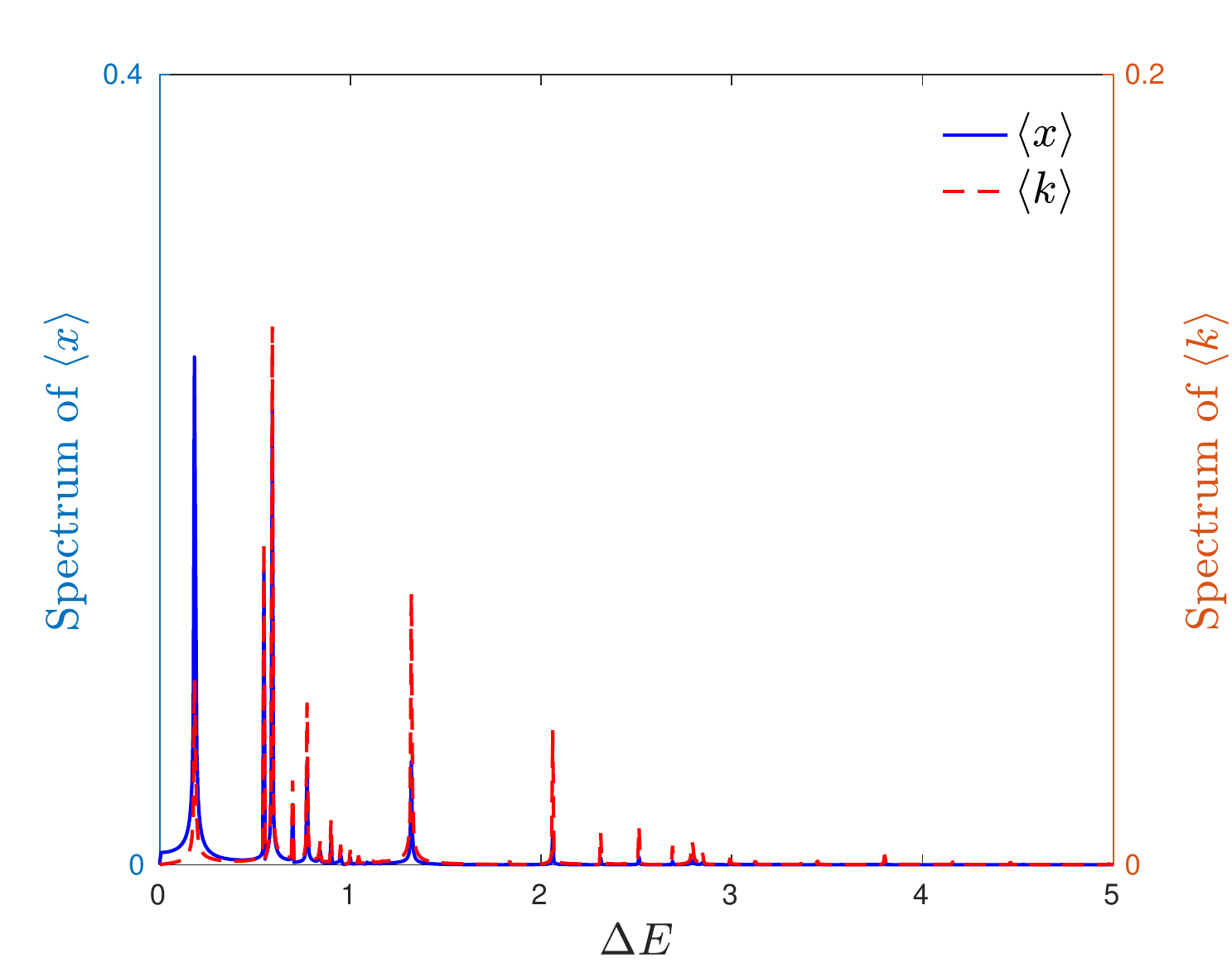}
   \caption{\small A fourth-order anharmonic oscillator with the symmetric potential $V_{4}$: Spectrum of the averaged displacement $\langle x\rangle$ and momentum $\langle k\rangle$ until $t=100$ (left) and $t=1000$ (right).}
 \label{fig-v4}
\end{figure}
%%%%%%%%%%%%%%%%%%%%%%%%%%%%%%%%%%%%%%%%%%%%%%%%%%%%%%%%
\begin{figure}
   \includegraphics[width=.5\textwidth,height=0.35\textwidth]{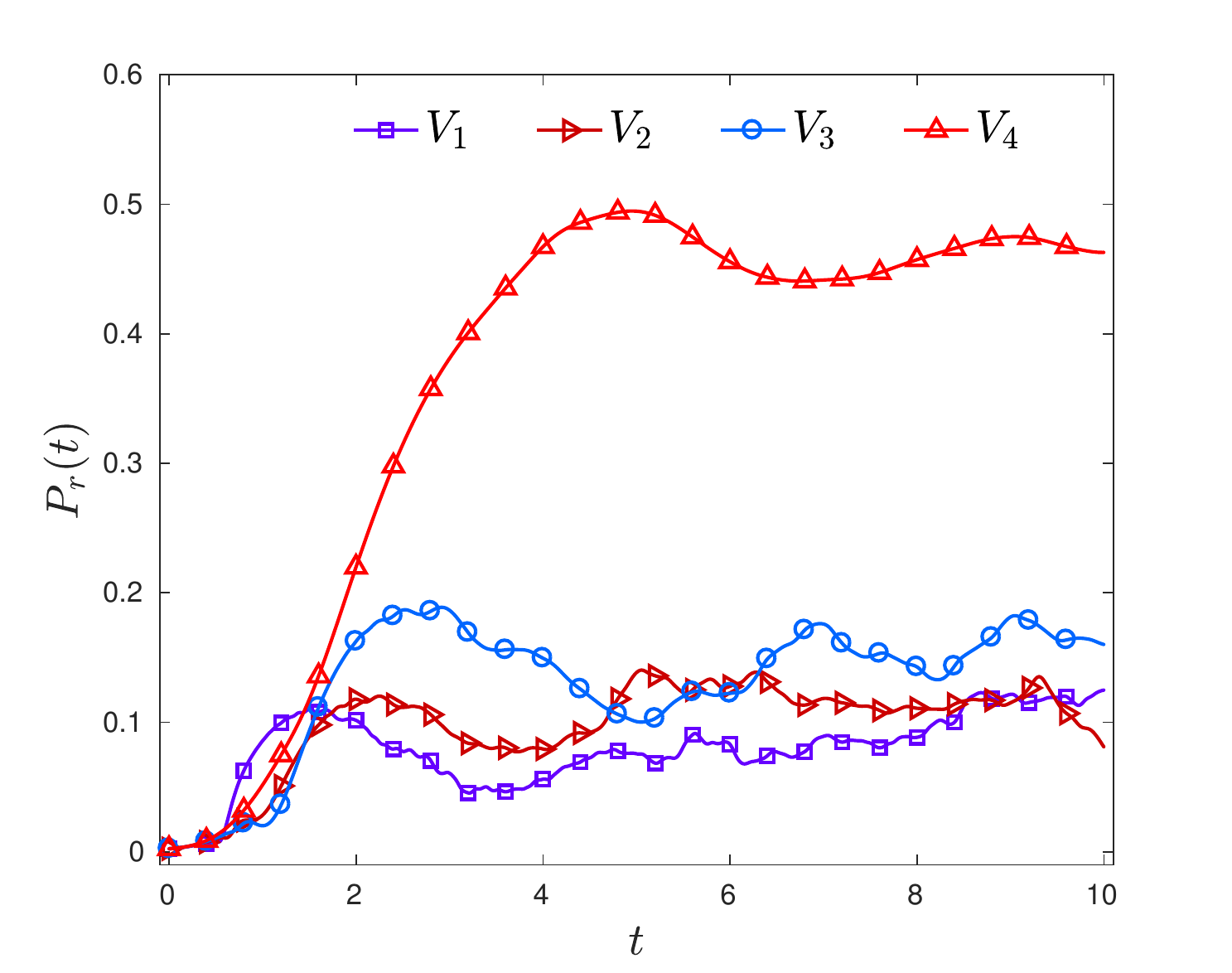}
   \includegraphics[width=.5\textwidth,height=0.35\textwidth]{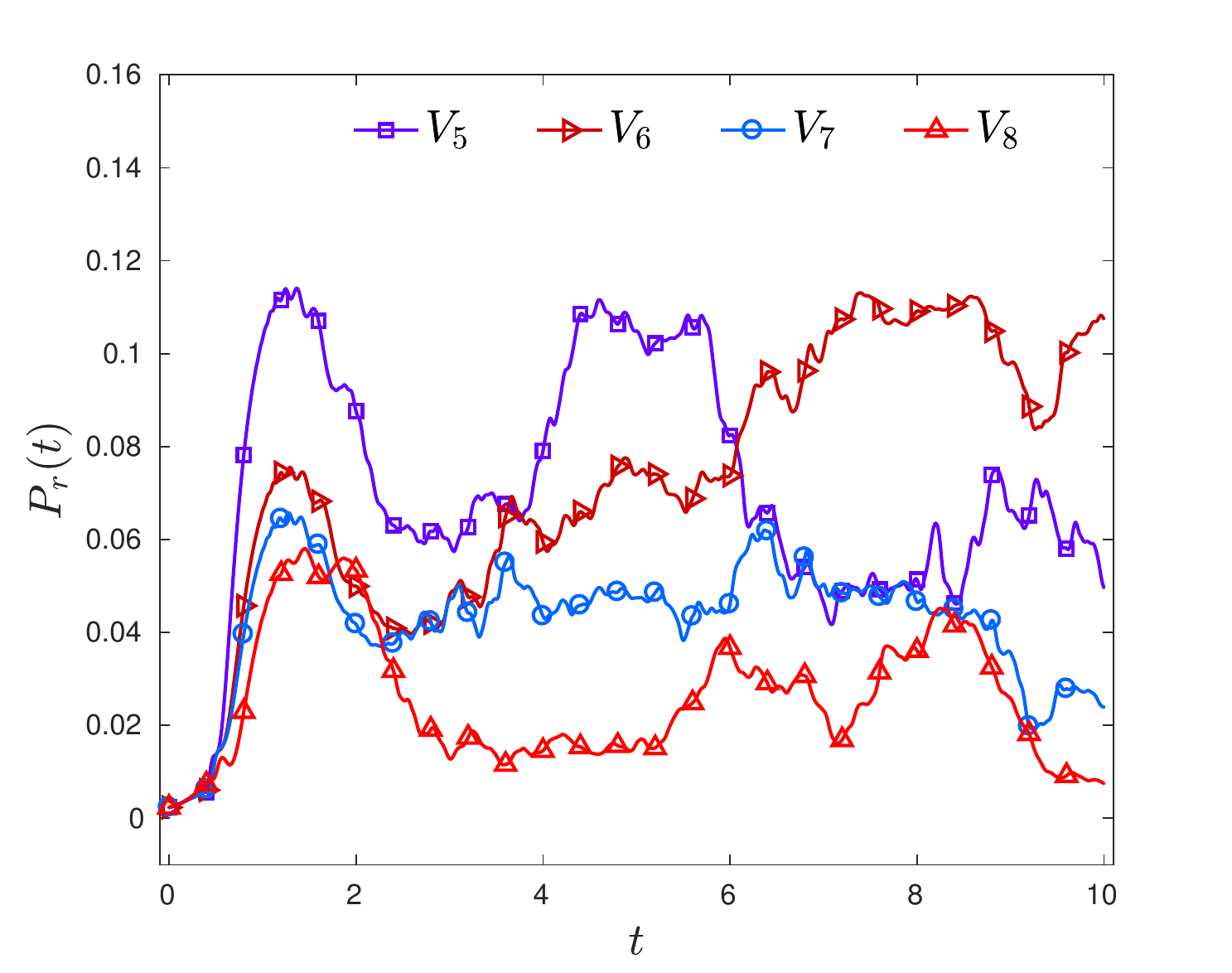}
 \caption{\small Fourth-order anharmonic oscillators: The quantum tunneling tests for symmetric (left) and 
 asymmetric (right) potentials. Here $P_r(t)$ represents the partial mass of the Gaussian wave packet contained in the well on the right at the instant $t$, see Eq.~\eqref{eq:rate_r}.}
\label{fig:v4_tunnelingrate}
\end{figure}
%%%%%%%%%%%%%%%%%%%%%%%%%%%%%%%%%%%%%%%%%%%%%%%%%%%%%%%%%%%
\begin{figure}[ht!]
  \subfigure[$t=0$.]{\includegraphics[width=0.32\textwidth,height=0.26\textwidth]{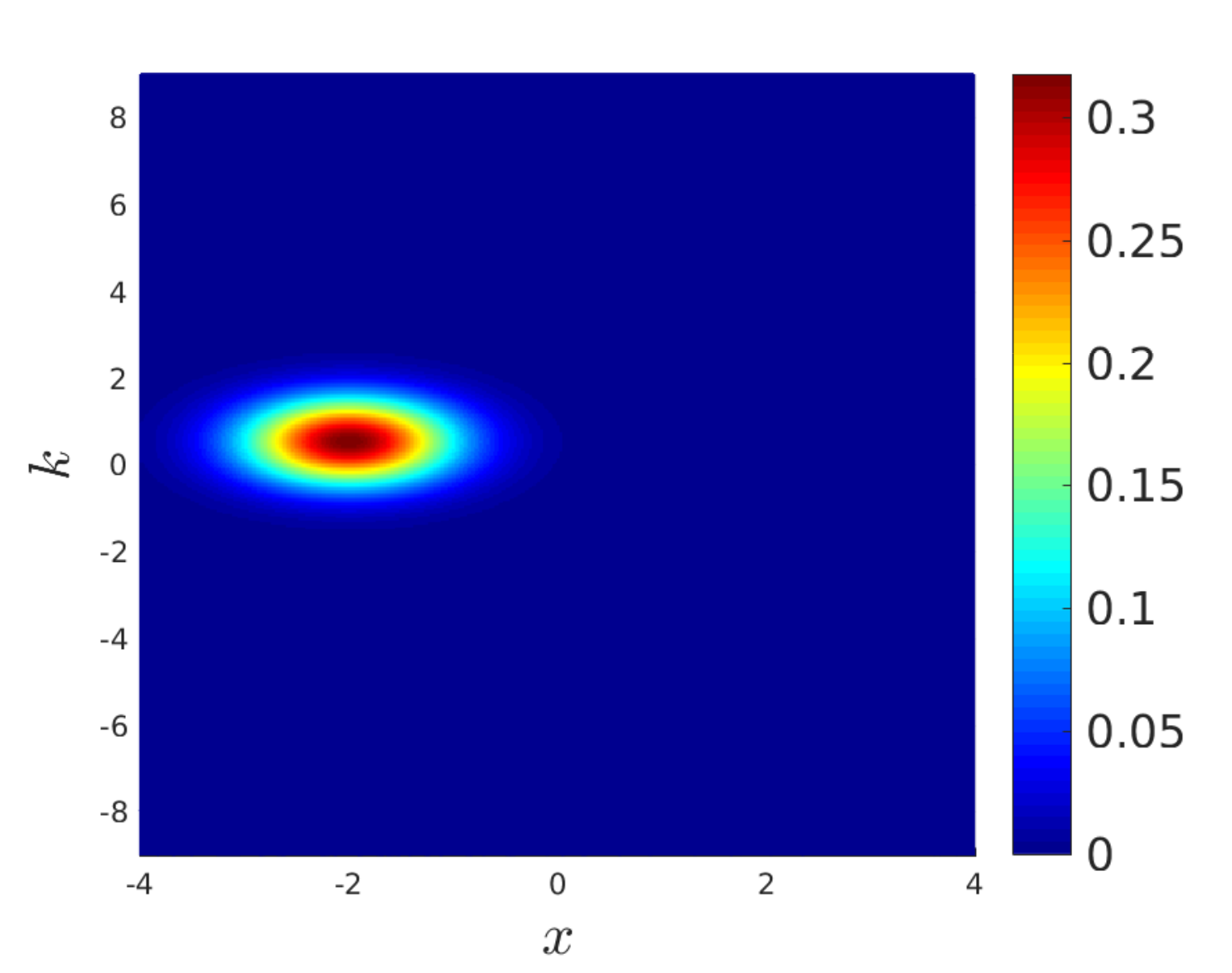}}
  \subfigure[$t=0.5$.]{\includegraphics[width=0.32\textwidth,height=0.26\textwidth]{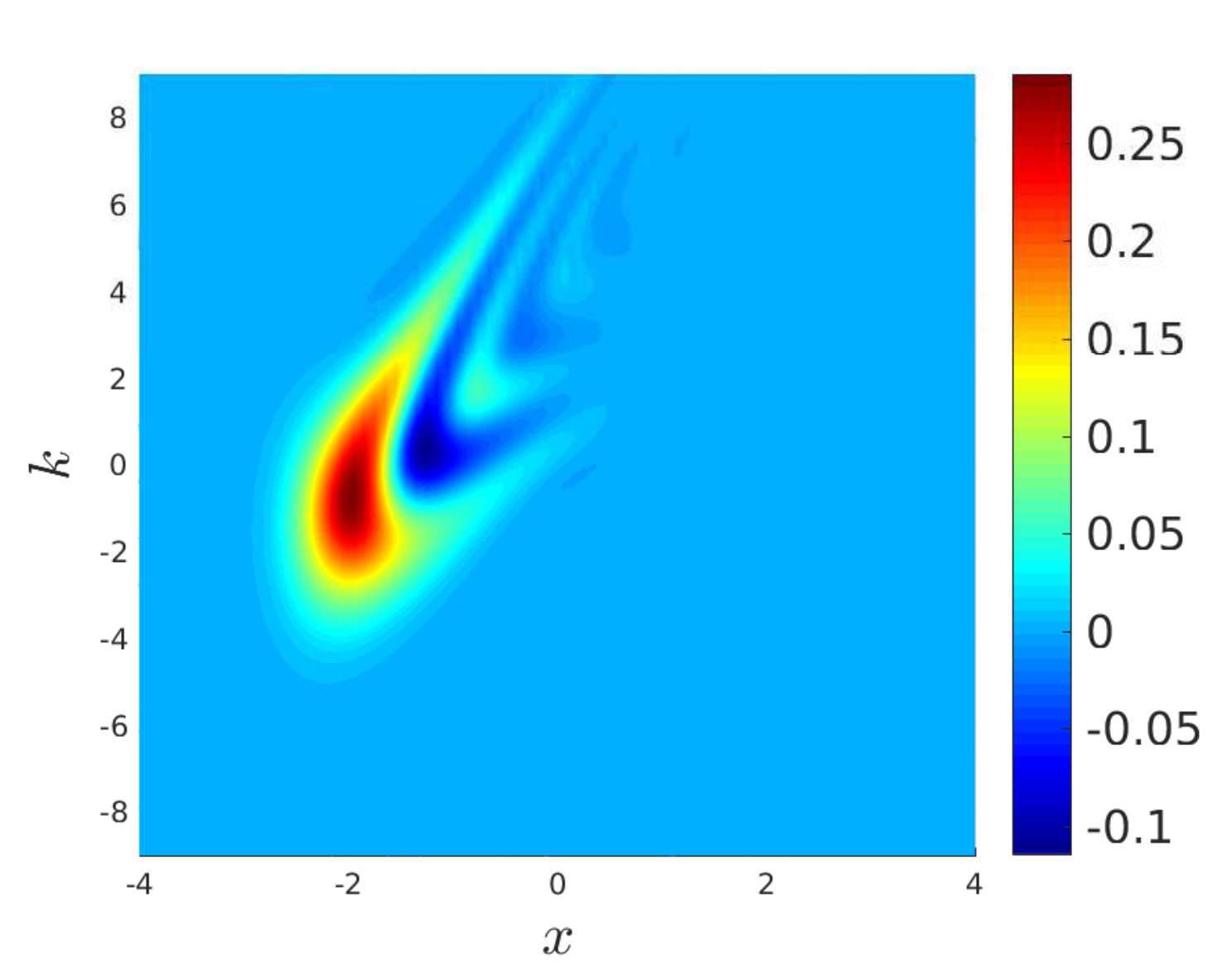}}
  \subfigure[$t=1$.]{\includegraphics[width=0.32\textwidth,height=0.26\textwidth]{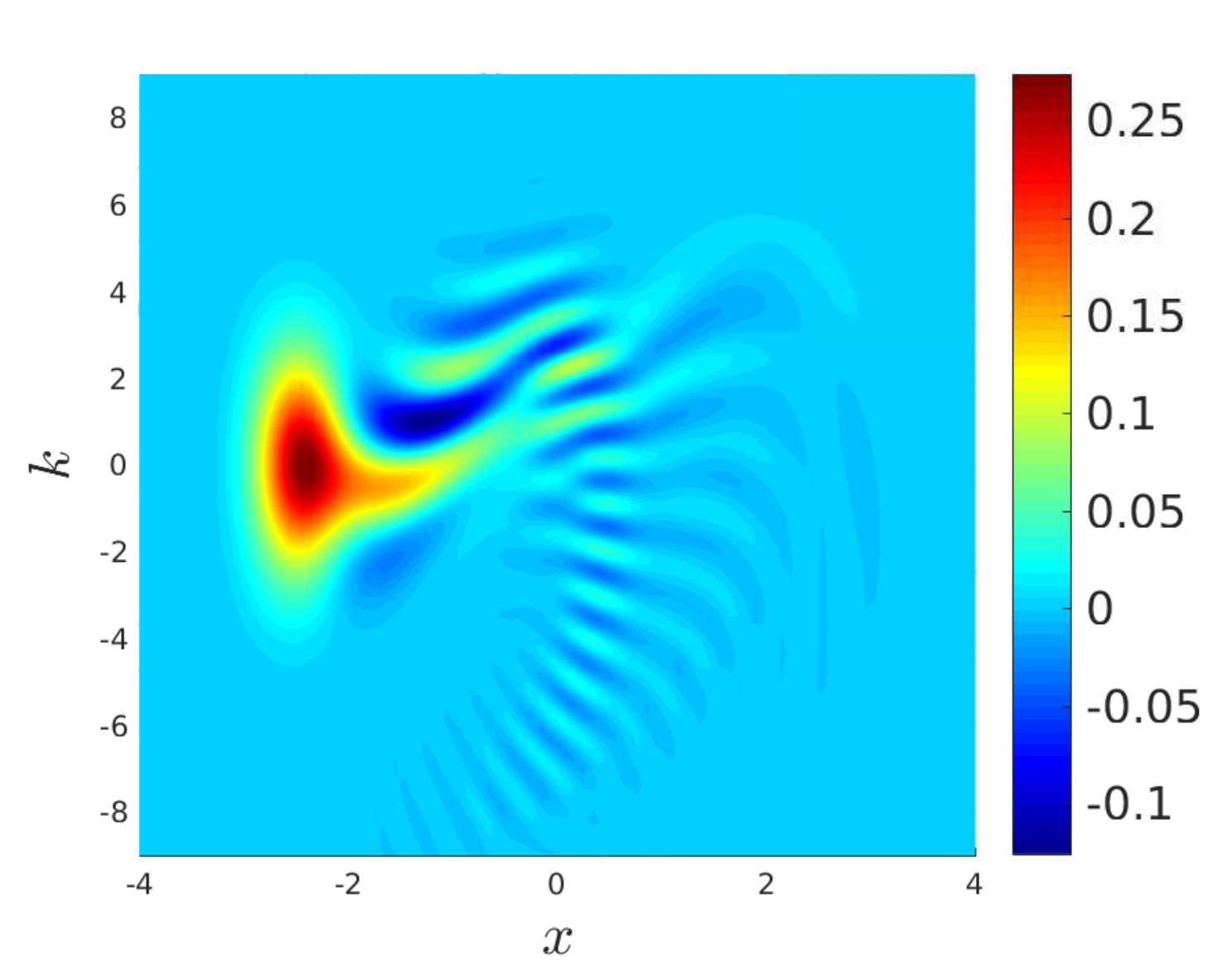}}

   \subfigure[$t=2$.]{\includegraphics[width=0.32\textwidth,height=0.26\textwidth]{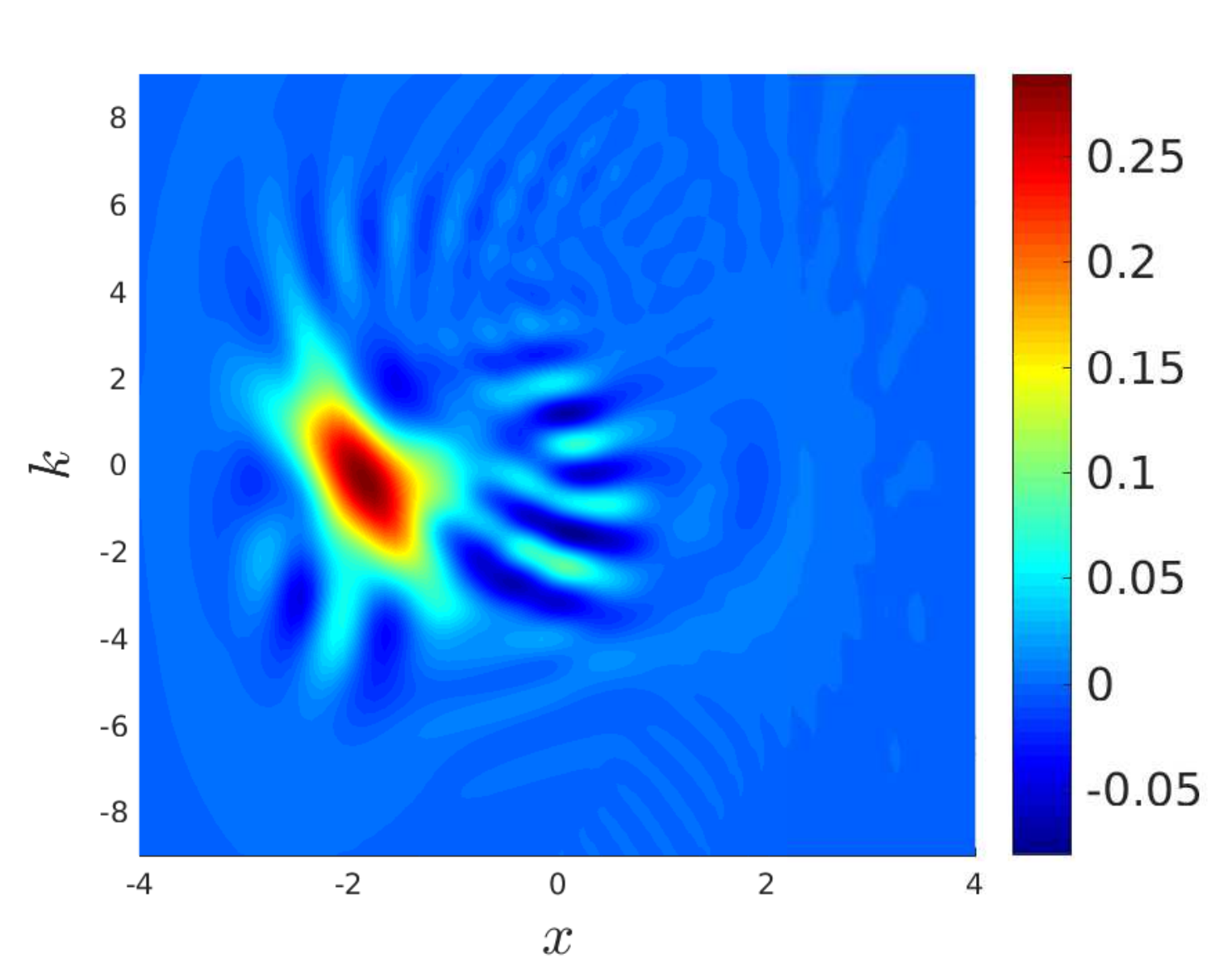}}
  \subfigure[$t=3$.]{\includegraphics[width=0.32\textwidth,height=0.26\textwidth]{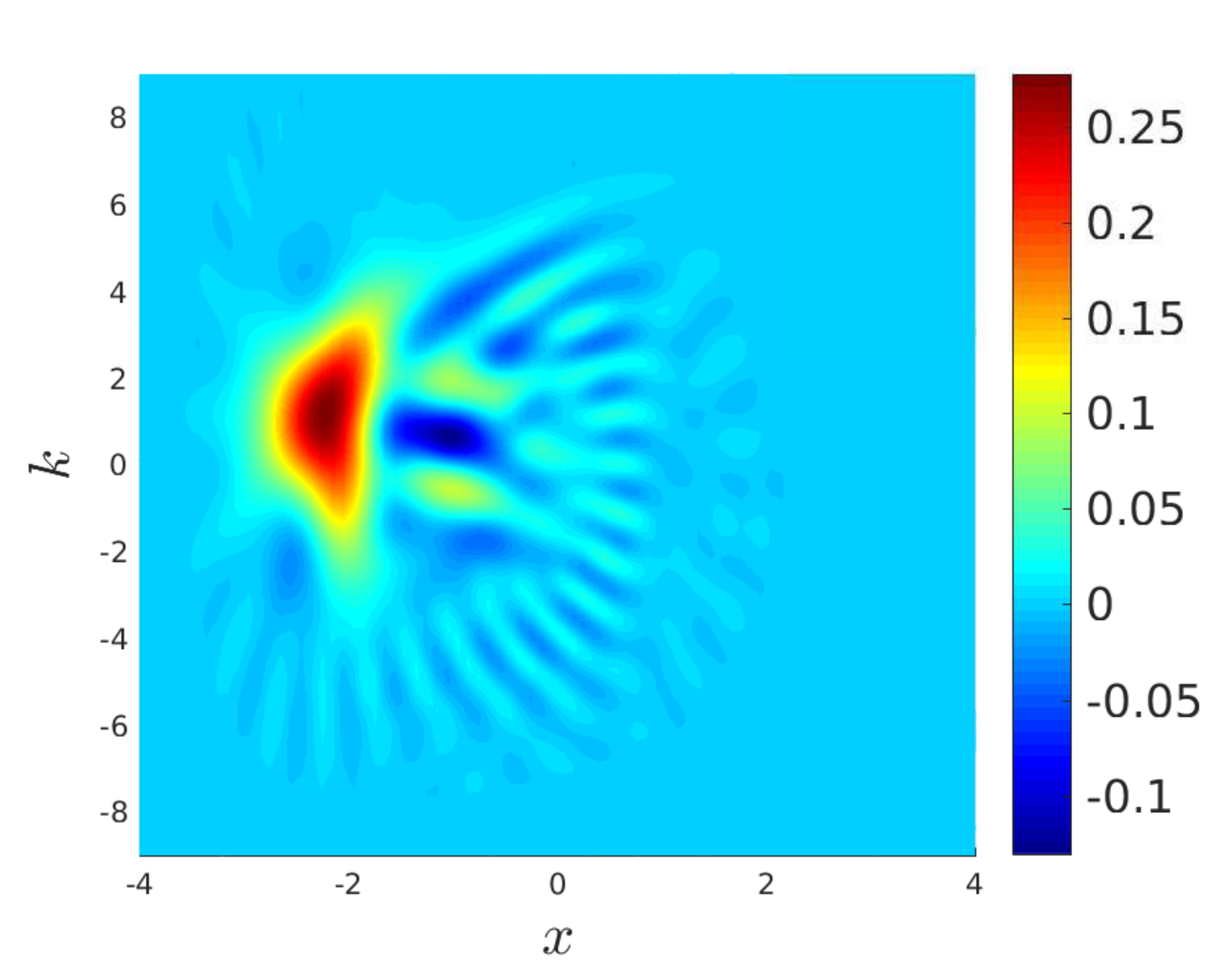}}
  \subfigure[$t=4$.]{\includegraphics[width=0.32\textwidth,height=0.26\textwidth]{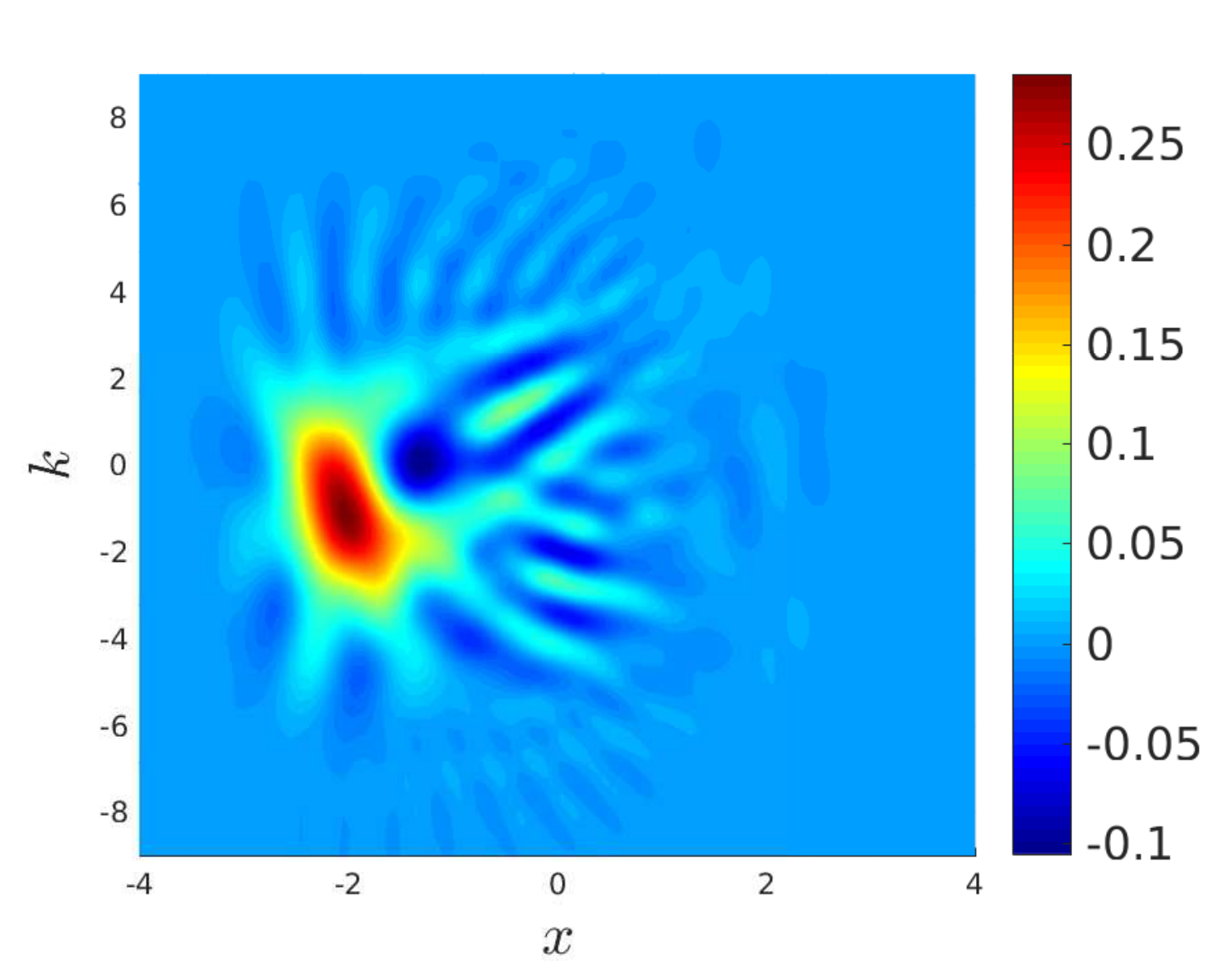}}

   \subfigure[$t=5$.]{\includegraphics[width=0.32\textwidth,height=0.26\textwidth]{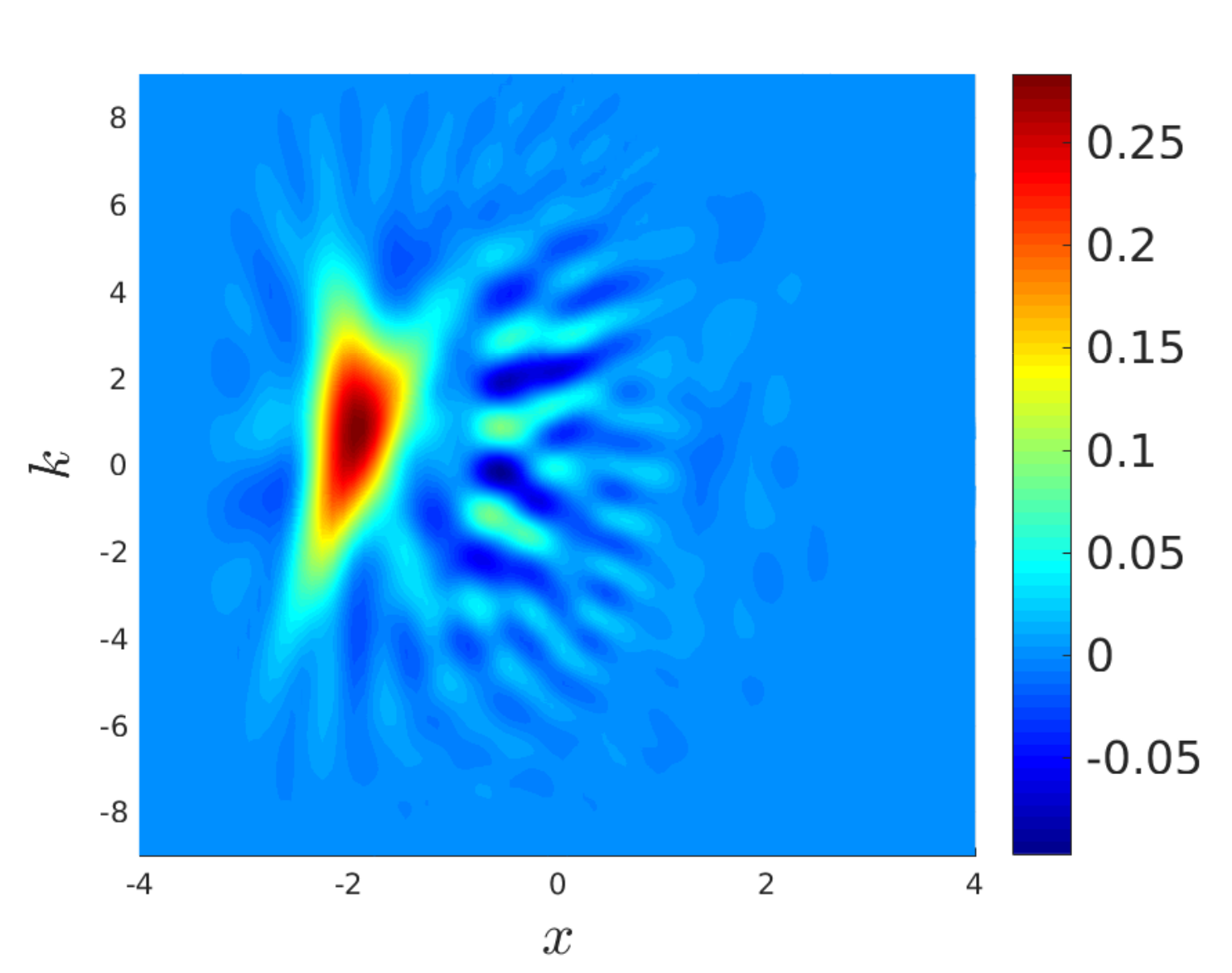}}
  \subfigure[$t=6$.]{\includegraphics[width=0.32\textwidth,height=0.26\textwidth]{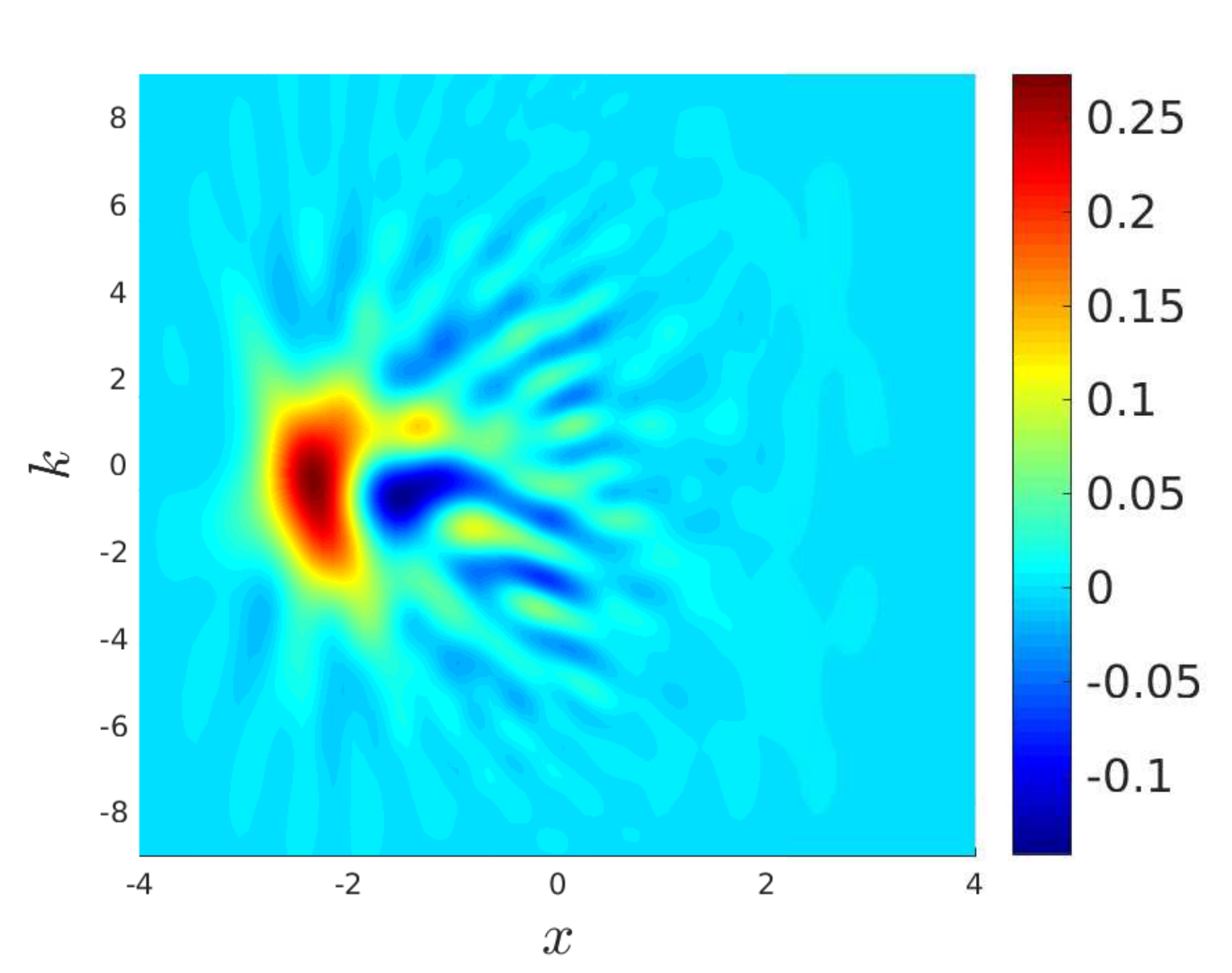}}
  \subfigure[$t=7$.]{\includegraphics[width=0.32\textwidth,height=0.26\textwidth]{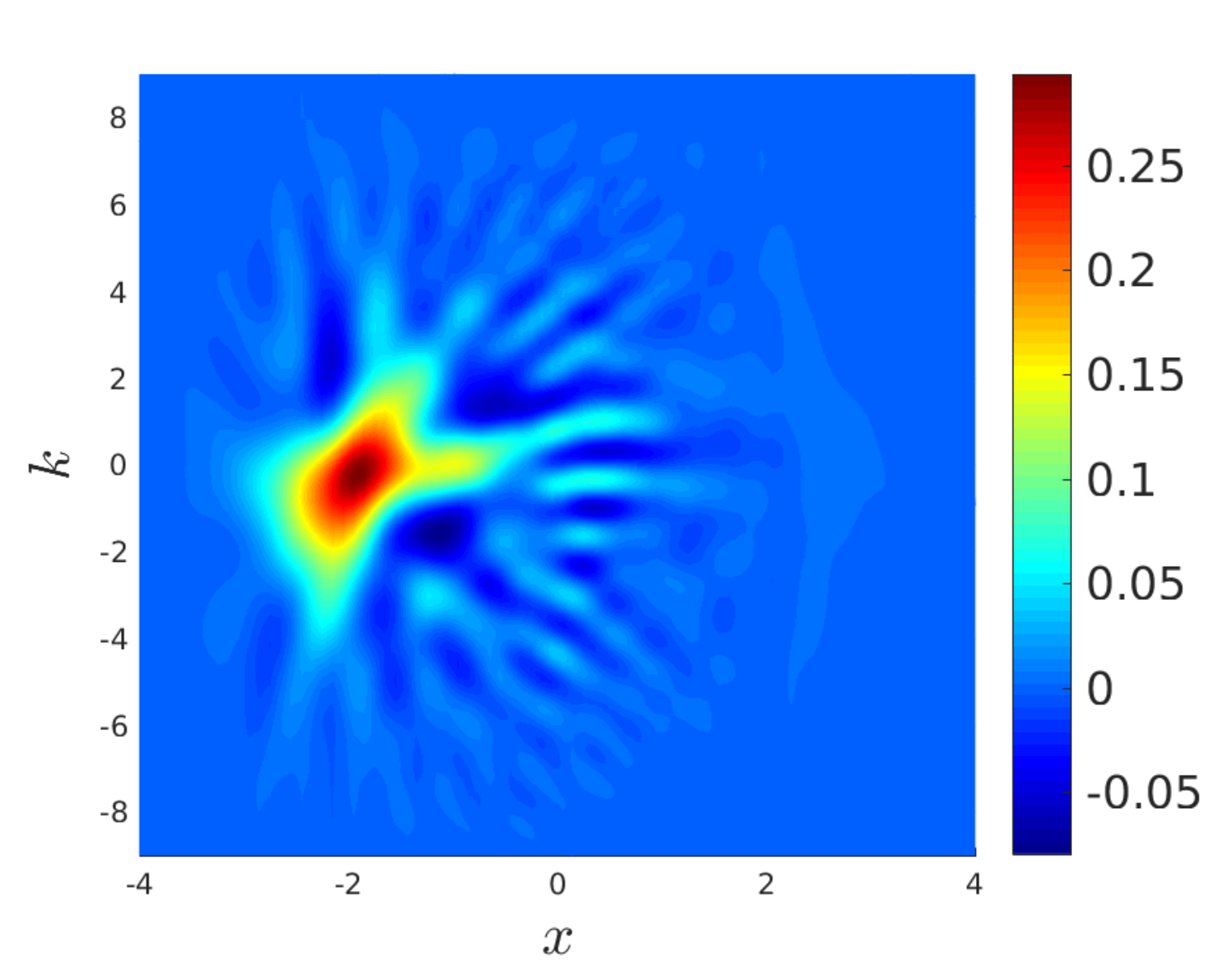}}

   \subfigure[$t=8$.]{\includegraphics[width=0.32\textwidth,height=0.26\textwidth]{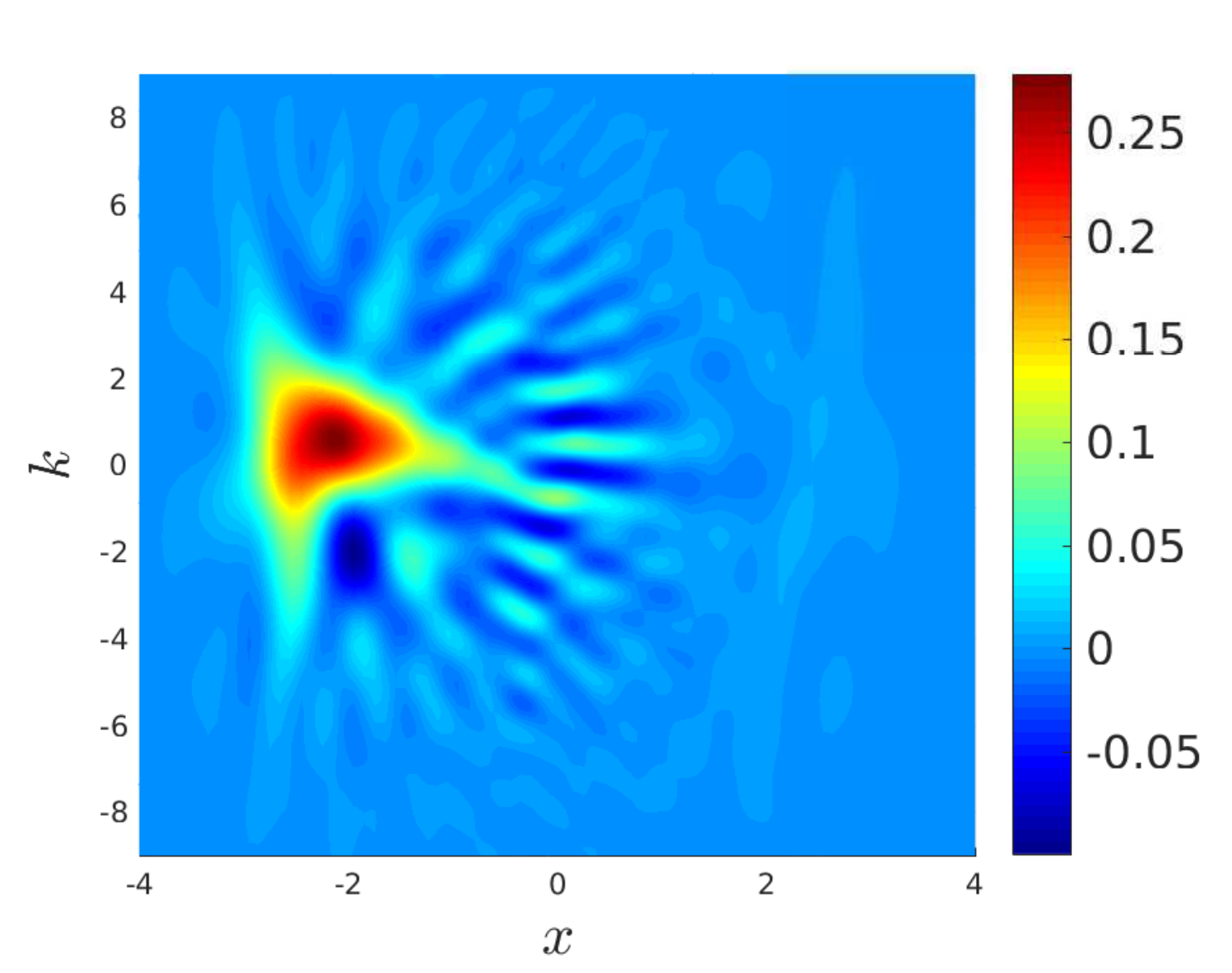}}
  \subfigure[$t=9$.]{\includegraphics[width=0.32\textwidth,height=0.26\textwidth]{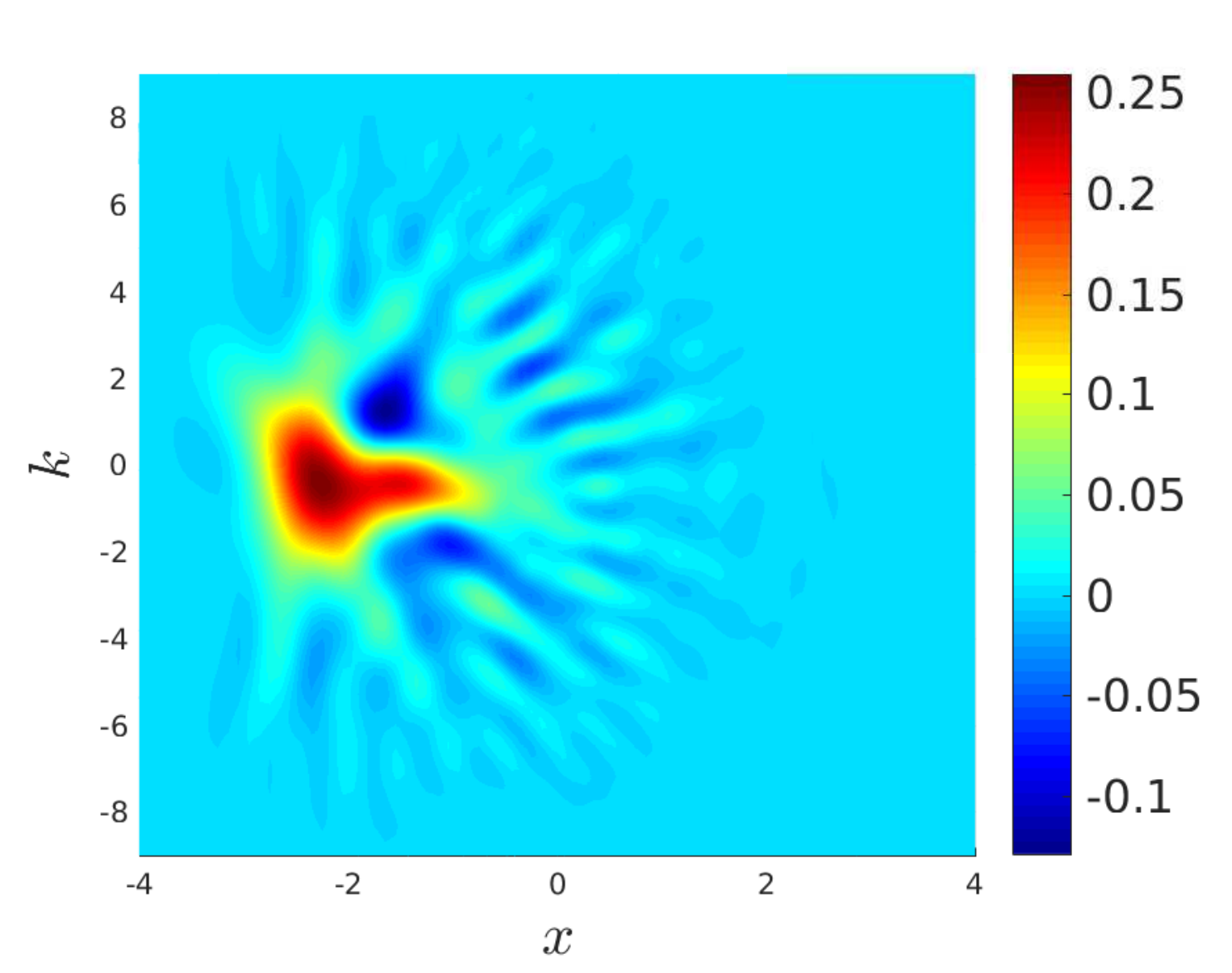}}
  \subfigure[$t=10$.]{\includegraphics[width=0.32\textwidth,height=0.26\textwidth]{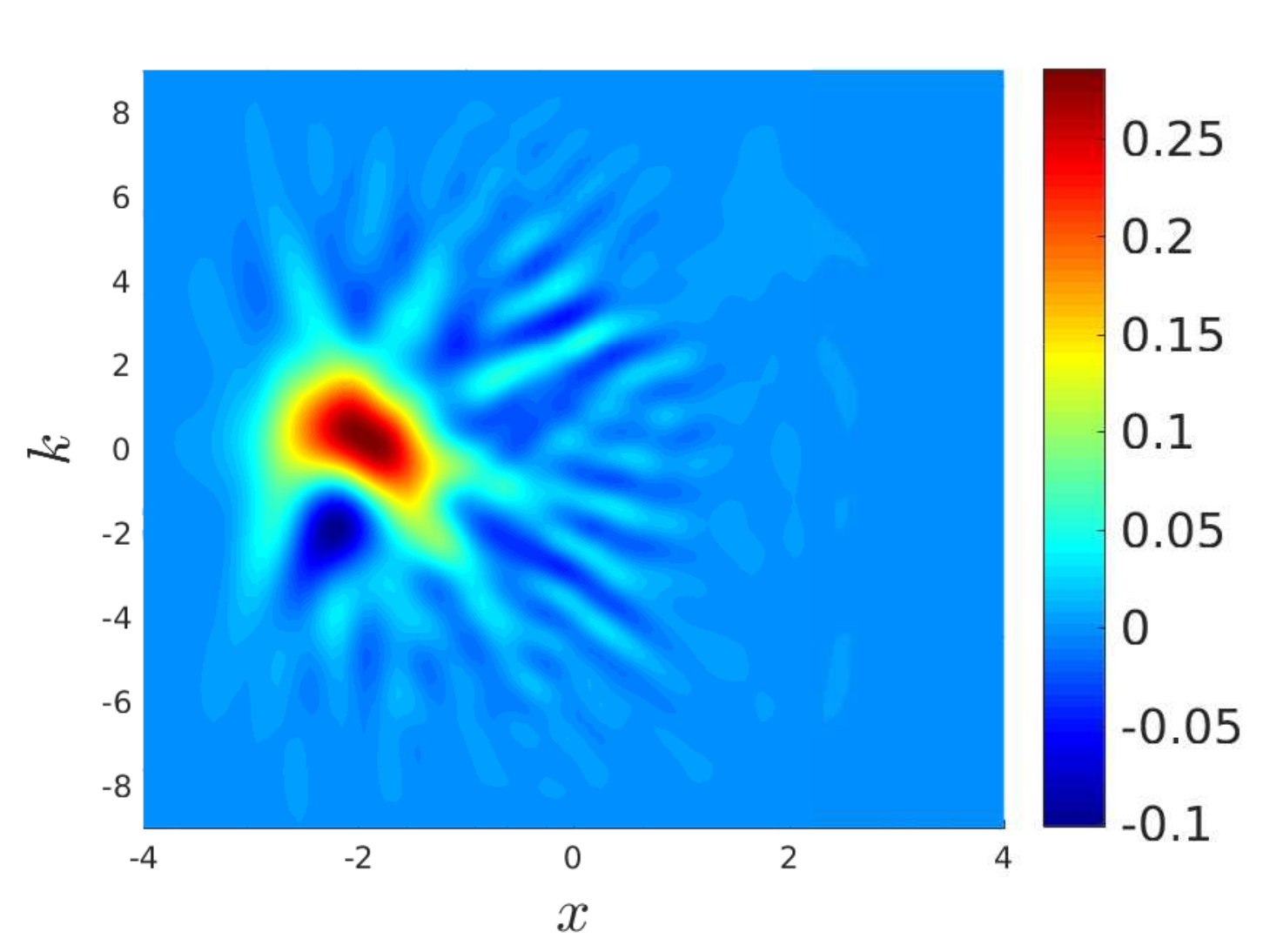}}
  \caption{\small Fourth-order anharmonic oscillators: the Wigner function at different instants $t=0,0.5,1,\ldots,10$ under the asymmetric potential $V_7$.}
  \label{fig:p4_v7_time}
\end{figure}

\subsection{Sixth-order anharmonic oscillators}
\label{sec:6_potential}

%%%%%%%%%%%%%%%%%%%%%%%%%%%%%%%%%%%%%%%%%%%%%%%%%%%%%%%%%%%
\begin{figure}
    \includegraphics[width=0.5\textwidth,height=0.35\textwidth]{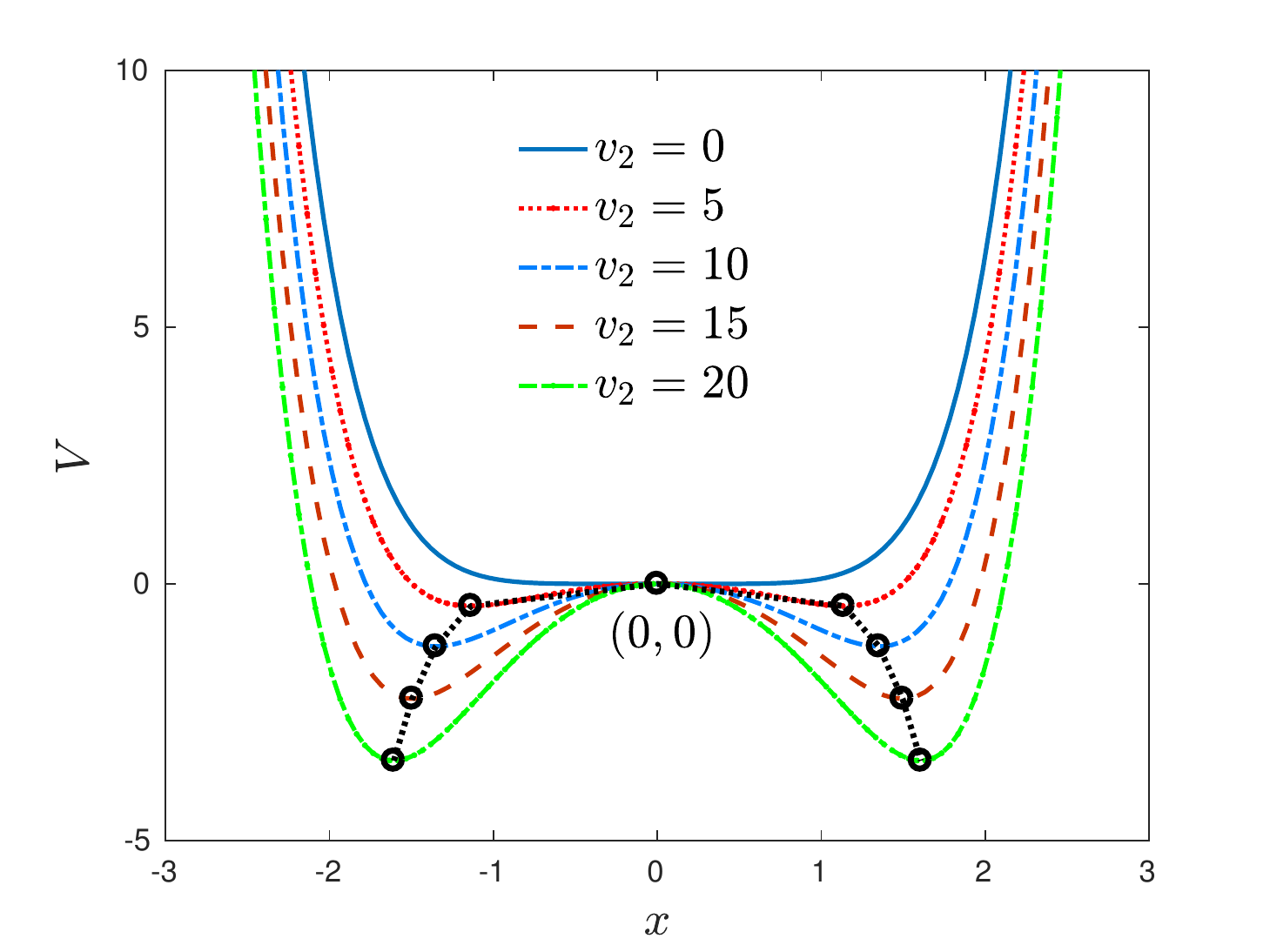}
    \includegraphics[width=0.5\textwidth,height=0.35\textwidth]{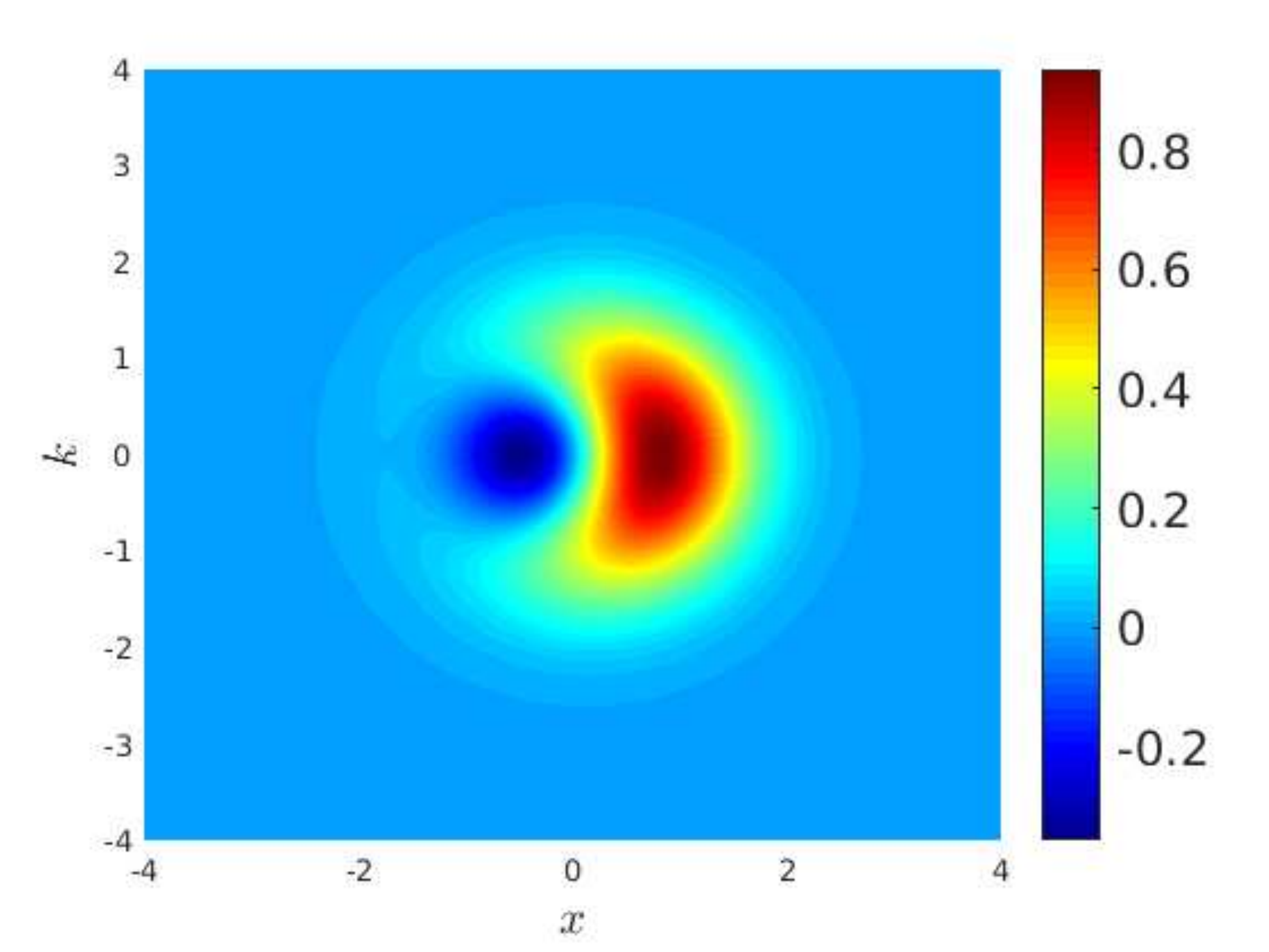}
 \caption{\small {The sixth-order anharmonic oscillators.  Left: The potentials $V(x)=\frac{1}{10}(x^6-v_2x^2)$ with $v_2=0, 5, 10, 15, 20$. The black circles mark the bottom of the wells. Right: The Wigner function $f(x,k) = \frac{1}{\pi}\left(\sqrt{2}x+x^2+{k^2}\right) \exp\left(-x^2-k^2\right)$.}}
 \label{fig:6pot}
\end{figure}

Sometimes a higher-order polynomial is needed to strongly confine the quantum system in a narrower region
like the sixth-order anharmonic oscillator \cite{Budaca2014}. We will show in this section that the proposed explicit conservative spectral solver is still capable of capturing the key quantum phenomena albeit a stricter time step 
must be used to deal with the stiffness introduced by the sixth-order polynomials. Let us consider the sixth-order anharmonic oscillators \cite{HeilbronnerRutishauserGerson1959,Gerson1961}:
\begin{equation}
  \label{eq:6_potential}
  V(x) = \frac{1}{10}\left( x^6 - v_2x^2\right),
\end{equation}
and the curves for $v_2=0, 5, 10, 15, 20$ are displayed in the plot of Fig.~\ref{fig:6pot}.
The initial Wigner function, located almost in the well on the right with $P_r(0)=0.8990$ as shown in the right plot of Fig.~\ref{fig:6pot}, is set to be a superposed state of the first and ground states of the harmonic oscillator with the potential $V(x)=x^2/2$,
and rotates periodically with a period of $2\pi$ \cite{SellierDimov2015-HO}.
The left plot of Fig.~\ref{fig:v6_uncertain} shows the partial mass $P_r(t)$ of the superposed state in the well on the right where other simulation parameters are set to be
$-x_L=x_R=10$, $-k_{\min}=k_{\max}=5\pi$, $\Delta t = 2\times 10^{-6}$, $N = 140$,  $Q = 10$ and $M = 21$.
It is readily observed there that the quasi-periodic rotations are evident in all cases and the periods are about 
$7.09$, $8.50$, $12.11$, $17.78$, $29.68$ for $v_2 = 0$, $5$, $10$, $15$, $20$, respectively, where we 
regard the Wigner function to complete a periodic rotation once the partial mass in the well on the right almost equals to $P_r(0)$. Compared with the period of $2\pi$ in the harmonic oscillator, i.e., the black line in the left plot of Fig.~\ref{fig:v6_uncertain}, 
the periods of the superposed state under the sixth-order double-wells are longer because of the existence of the central barrier. However, due to the quantum tunneling, the superposed state can still pass through the barrier and even the maximum mass in the right well is larger.
Next, we will investigate the Heisenberg uncertainty principle
\begin{equation}
  \label{eq:heisenberg}
   \sigma_x \sigma_p \geq \frac{\hbar}{2},
\end{equation}
where $\sigma_x$ and $\sigma_p$ are the standard deviations of position $x$ and momentum $p=\hbar k$, respectively.  
The initial value of $\sigma_x \sigma_p$ is $\sqrt{2}/2$, which also gives the minimum uncertainty of the harmonic oscillator.
During the Wigner quantum dynamics evolved by the explicit conservative spectral method,
we will measure  
\[
\sigma_x = \sqrt{\langle \left( x - \langle x\rangle \right)^2\rangle}, \quad \sigma_p = \sigma_{\hbar k} = \sqrt{\langle \left( k - \langle k\rangle \right)^2\rangle},
\]
where we have used $\hbar=1$, and the history curves of $\sigma_x \sigma_p - \hbar/2$ are displayed in the right plot of 
Fig.~\ref{fig:v6_uncertain}, where that for the harmonic oscillator is also plotted in back line for reference. 
It can be easily seen there that the uncertainty principle \label{eq:heisenberg} is definitely confirmed for all cases, 
the uncertainty increases generally as the barrier becomes higher, 
and the maximum values of $\sigma_x \sigma_p$ under the sixth-order double-wells are all much larger than that for the harmonic oscillator
while the minimum values all less than that for the harmonic oscillator. Fig.~\ref{fig:height} plots both maximum and minimum values of the uncertainty against the barrier height $h$, and shows that the maximum values of $\sigma_x \sigma_p$ are almost proportional to the barrier height whereas the minimum values keep almost the same. We may explain such phenomena as follows.
The quantum effect is
strengthened when a central barrier separates the wave packet and thus the maximum uncertainty increases with the height of the barrier.
On the contrary, a steeper potential (sixth-order polynomial) causes the superposed state of the harmonic oscillator (second-order polynomial) to be more local, i.e., the quantum effect is suppressed, and thus the minimum uncertainty becomes smaller.

%%%%%%%%%%%%%%%%%%%%%%%%%%%%%%%%%%%%%%%%%%%%%%%%%%%%%%%%%%%%%%%%%%%%%%
\begin{figure}
    \includegraphics[width=0.5\textwidth,height=0.35\textwidth]{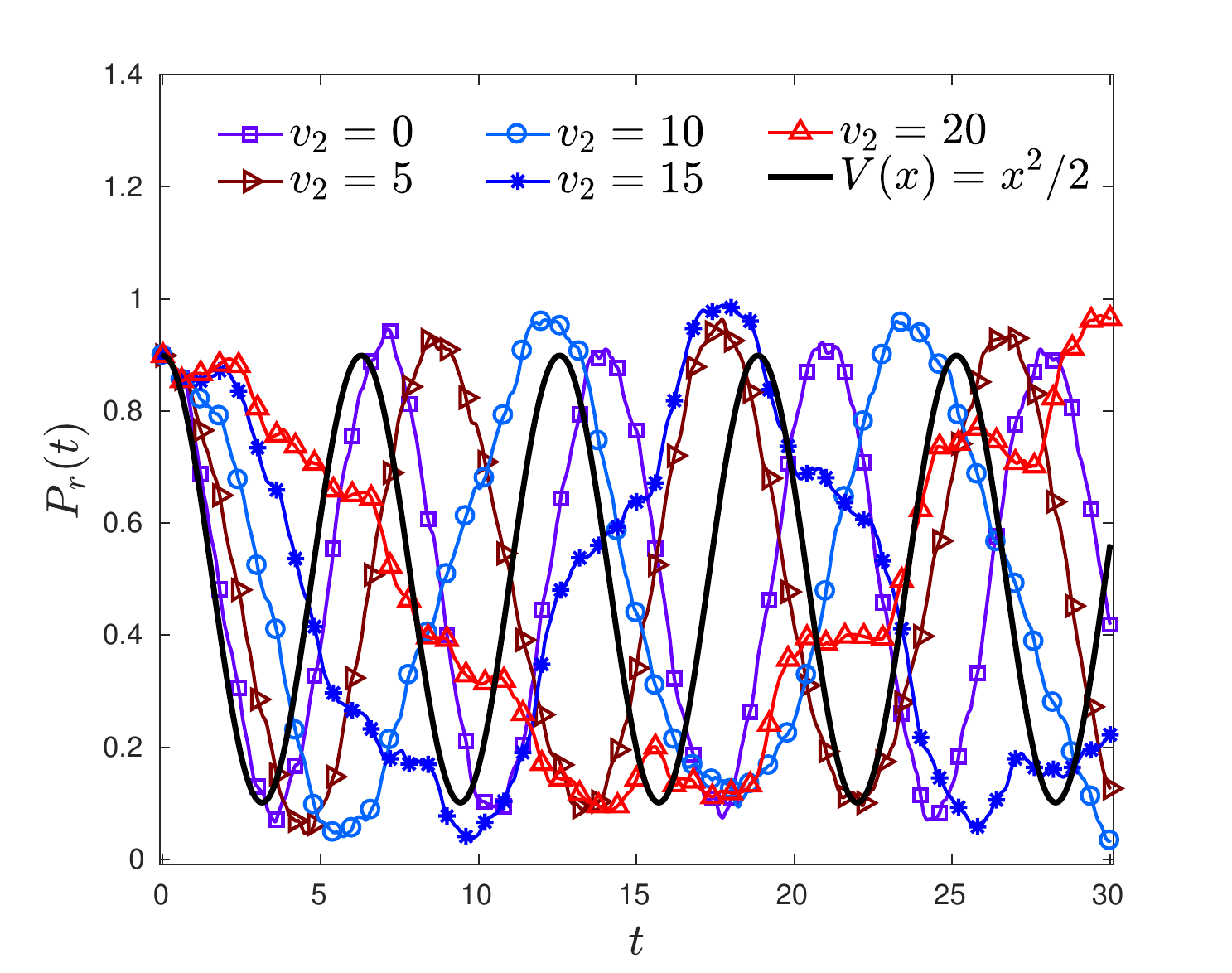}
    \includegraphics[width=0.5\textwidth,height=0.35\textwidth]{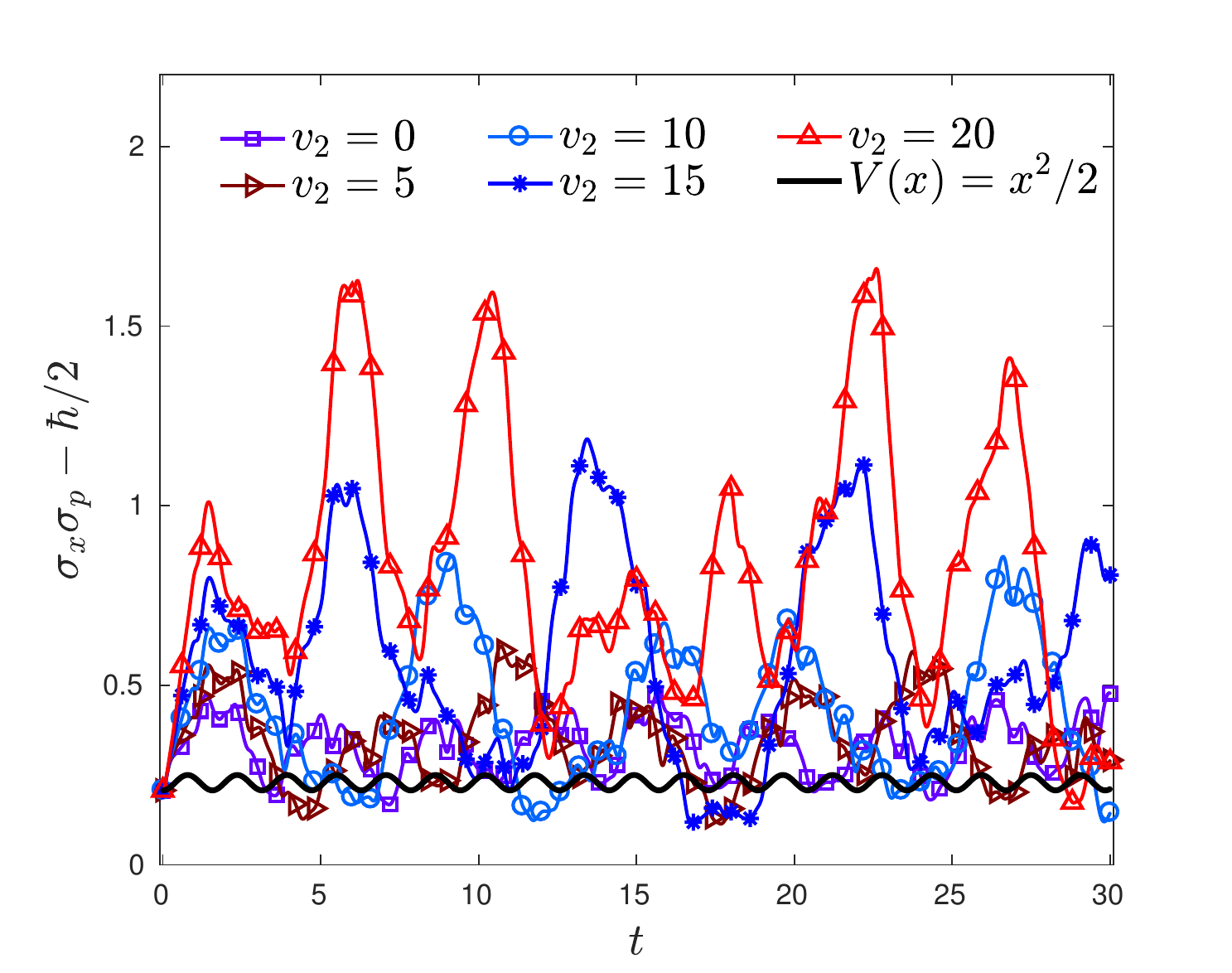}
 \caption{\small {The sixth-order anharmonic oscillators.  Left: The partial mass $P_r(t)$ of the superposed state in the right well. Right: The time evolution of the uncertainty and the uncertainly principle $\sigma_x \sigma_p \ge \hbar/2$ absolutely holds.}}
 \label{fig:v6_uncertain}
\end{figure}
\begin{figure}[ht!]
  \centering
    \includegraphics[width=0.6\textwidth,height=0.4\textwidth]{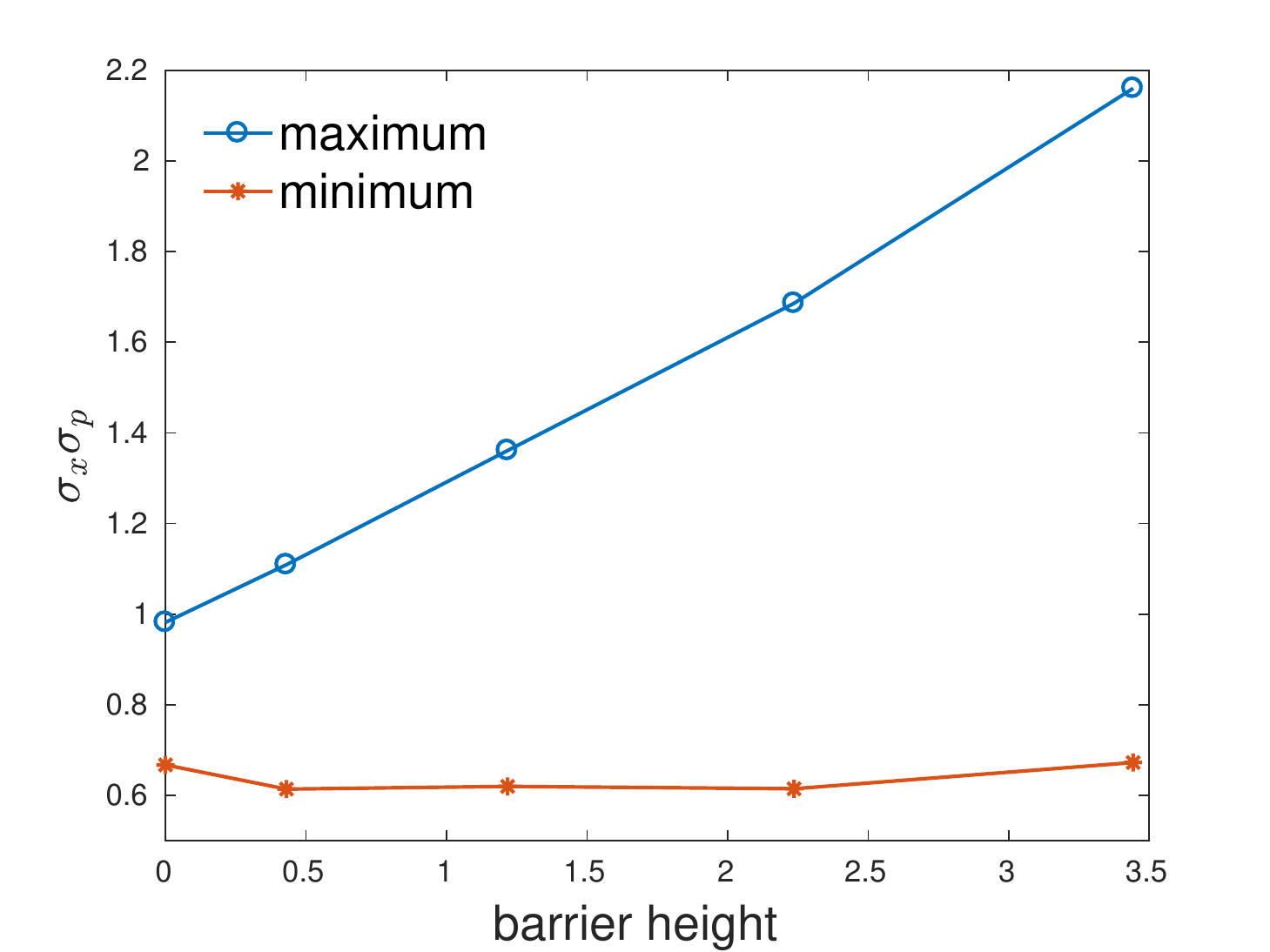}
 \caption{\small {The sixth-order anharmonic oscillators: The maximum and minimum values of $\sigma_x \sigma_p$ against the barrier height $h$ which increases as $v_2$ does (see the left plot of Fig.~\ref{fig:6pot}). The barrier height $h$ is $0$ for $v_2=0$, $0.4303$ for $v_2=5$, $1.2172$ for $v_2=10$, $2.2361$ for $v_2=15$, and $3.4427$ for $v_2=20$. Corresponding to these five barrier heights in increasing order,  the maximum values of $\sigma_x \sigma_p$ are $0.9818$, $1.1090$, $1.3609$, $1.6857$, $2.1601$, 
 and the minimum values $0.6673$, $0.6139$, $0.6199$, $0.6147$, $0.6725$. }}
 \label{fig:height}
\end{figure}
%%%%%%%%%%%%%%%%%%%%%%%%%%%%%%%%%%%%%%%%%%%%%%%%%%%%%%%%%%%%%%%%%%%%%%

\subsection{An asymmetric double-well potential}
\label{sec:asym_dw}

Now we turn to consider a general class of double-well potentials, 
a mixture of both localized and polynomial potentials, see Fig.~\ref{fig:potential_split},
where the tunneling effects are largely determined by the localized fine structure and the unbounded polynomial is used to generate bound states. Apart from the partial mass $P_r(t)$ defined in Eq.~\eqref{eq:rate_r}, 
we also calculate the autocorrelation of the Wigner function \cite{KaczorKlimasSzydlowskiWoloszynSpasak2016,DavisHeller1984}:  
\begin{equation}
  \label{eq:correlation}
   C(t) = {\iint_{\mathcal{X}\times \mathcal{K}}} f(x,k, 0) f(x, k, t) \D x \D k,
\end{equation}
which characterizes the similarity between $f(x,k,0)$ and $f(x,k,t)$ as a function of the time lag between them and allows us to find repeating patterns, say, the latent periodic structures. That is, the correlation between the Wigner functions do not diminish or disappear over time, 
but oscillates on the frequencies proportional to $|E_m - E_n|$ after substituting Eq.~\eqref{eq:wigner_fun_pure} into Eq.~\eqref{eq:correlation}.

We first perform the accuracy check with the initial wave packet \eqref{init_GWP}: $A = 1/\pi$, $x_0 = 0$, $k_0=0.5$, $\sigma_1 = \sigma_2 = 1$ and other parameters: $-x_L = x_R = 15$, $-k_{\min} = k_{\max} = {10\pi}/{3}$, $Q = 20$, $\Delta t = 10^{-5}$. 
The spectral convergence with respect to both $N$ and $M$ can be clearly observed again for this unbounded potential in Fig.~\ref{fig:converg_adw} where 
the number of collocation points is fixed to be $N = 256$ for $k$-space (resp. $M = 31$ for each $x$-element) in studying the convergence rate with respect to $M$ (resp. $N$).
% , and the quantum dynamics of the Wigner function is displayed in  Fig.~\ref{fig:wigner_asydw}. 

When investigating the quantum tunneling through the barrier with $h=1.6606$, the initial wave packet is relocated into the well on the left: $x_0 = -2$,
and two more faster moving ones are also considered: $k_0=1$ and $2$. The time history of the partial mass $P_r(t)$ on the mesh $(M,N)=(21,200)$ is shown in the left plot of Fig.~\ref{fig:adw_rate}. It can be easily seen there that 
the tunneling rate increases as expected when the initial kinetic energy $\hbar^2k_0^2/2m$ increases. In particular,
the slowest moving wave packet with $k_0=0.5$ can still penetrate the barrier partially though its initial kinetic energy, only $0.125$, is far less than the barrier height. This definitely manifests the power of quantum mechanics.  The results of autocorrelation function $C(t)$ are displayed in the right plot of  Fig.~\ref{fig:adw_rate}, and show that the magnitude of $C(t)$ decreases for the wave packet with higher initial kinetic energy whereas the peaks occur with almost the same frequencies since the oscillating frequencies must be proportional to $|E_m - E_n|$ and independent of the initial kinetic energy.

%%%%%%%%%%%%%%%%%%%%%%%%%%%%%%%%%%%%%%%%%%%%%%%%%%%%%%
\begin{figure}[ht!]
   \includegraphics[width=0.5\textwidth,height=0.35\textwidth]{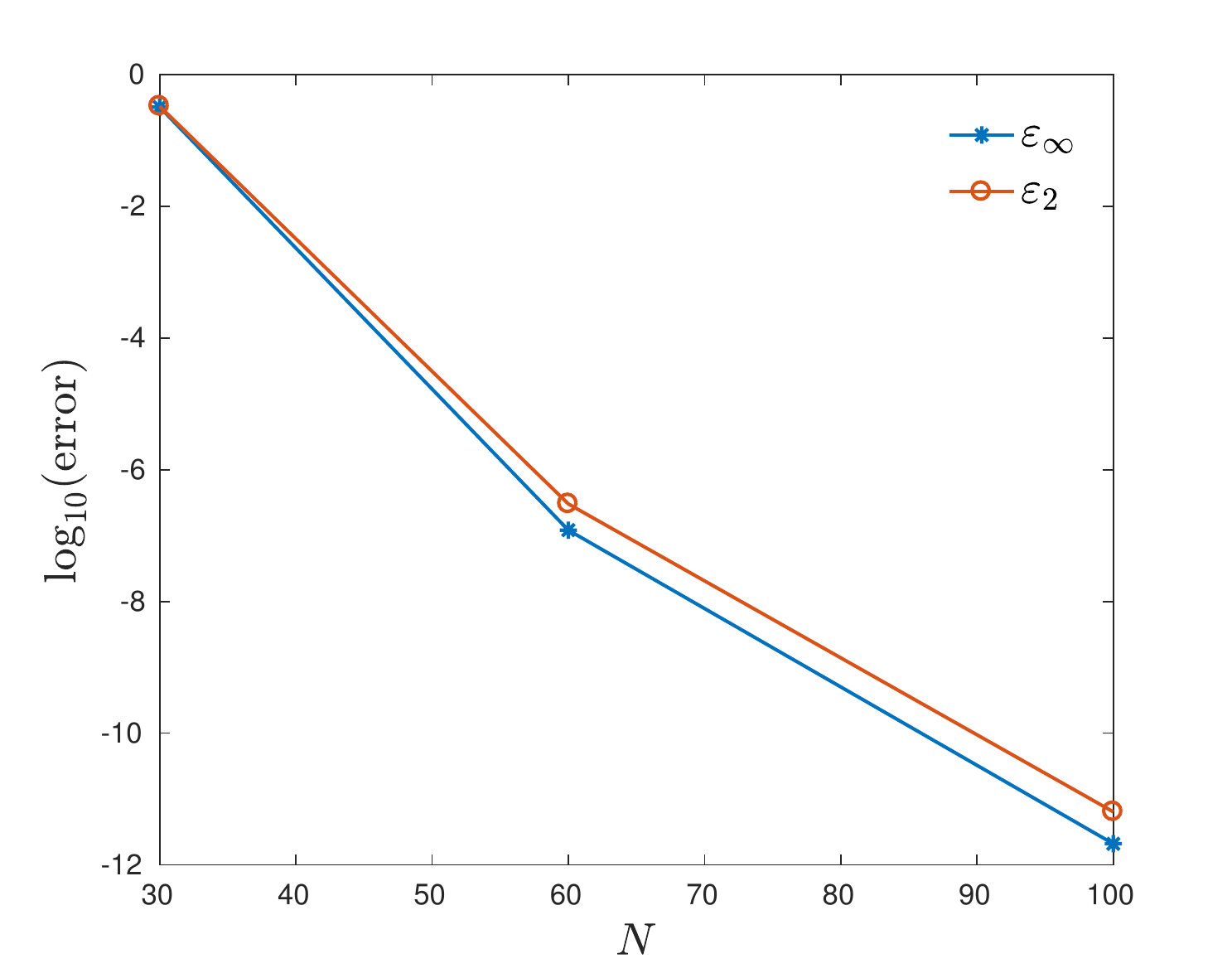}
   \includegraphics[width=0.5\textwidth,height=0.35\textwidth]{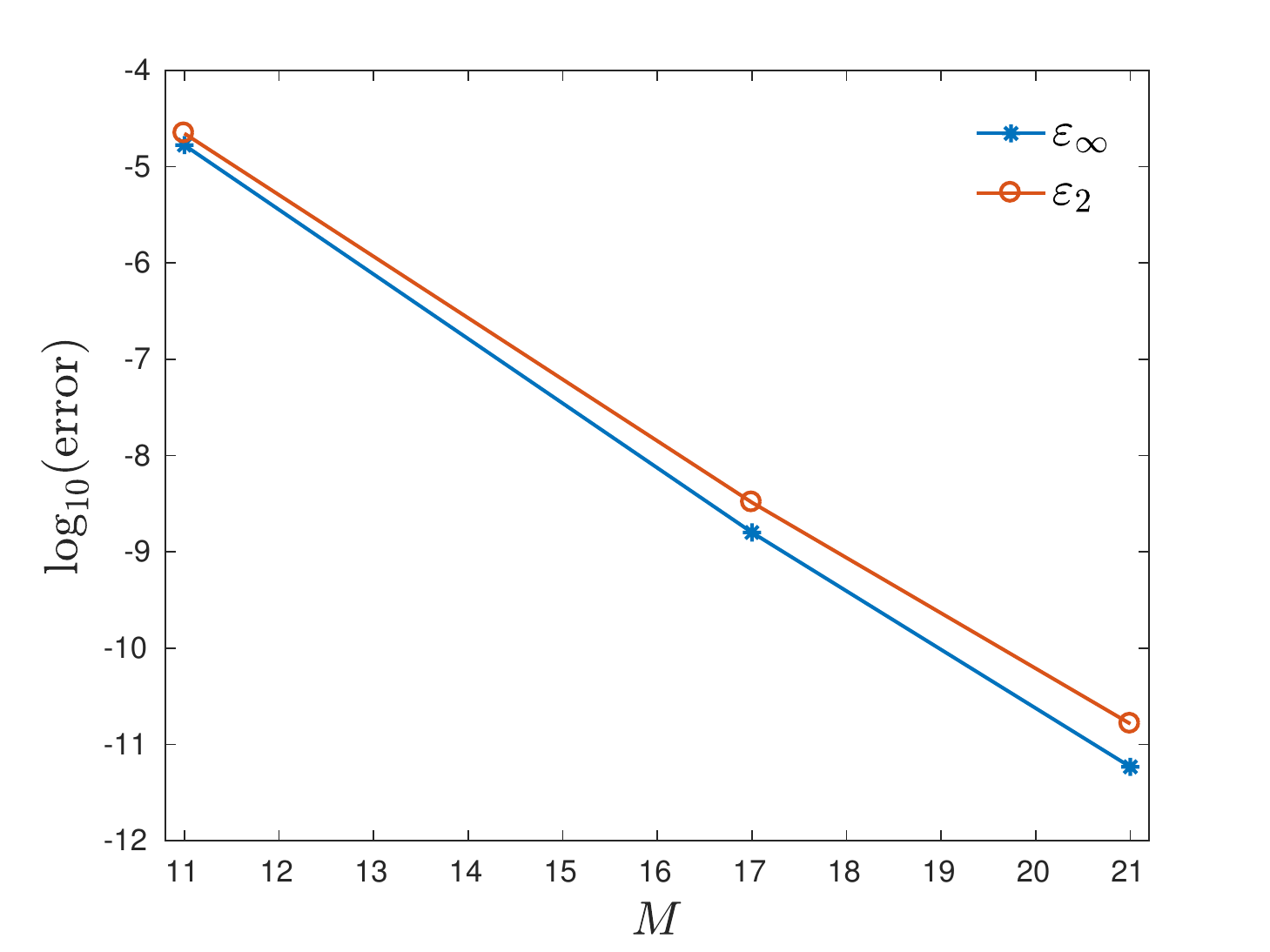}
   \caption{\small { An asymmetric double-well potential:  The convergence rate with respect to $N$ (left) and $M$ (right). The spectral convergence in both $x$-space and $k$-space is evident. All the errors are measured at the instant $t=10$.} }
   \label{fig:converg_adw}
\end{figure}

%%%%%%%%%%%%%%%%%%%%%%%%%%%%%%%%%%%%%%%%%%%%%%%%%%%%%%
\begin{figure}[ht!]
   \includegraphics[width=0.5\textwidth,height=0.35\textwidth]{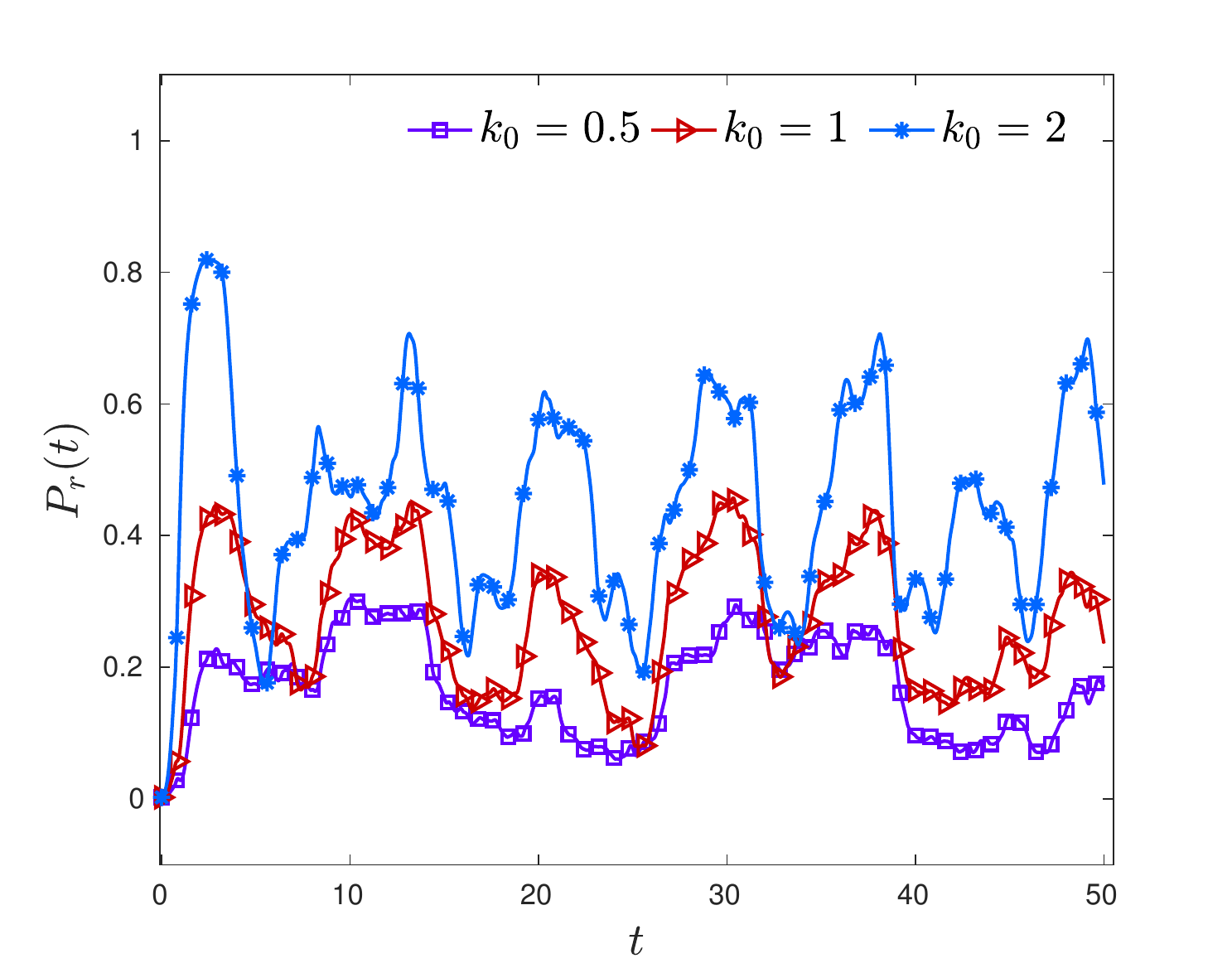}
   \includegraphics[width=0.5\textwidth,height=0.35\textwidth]{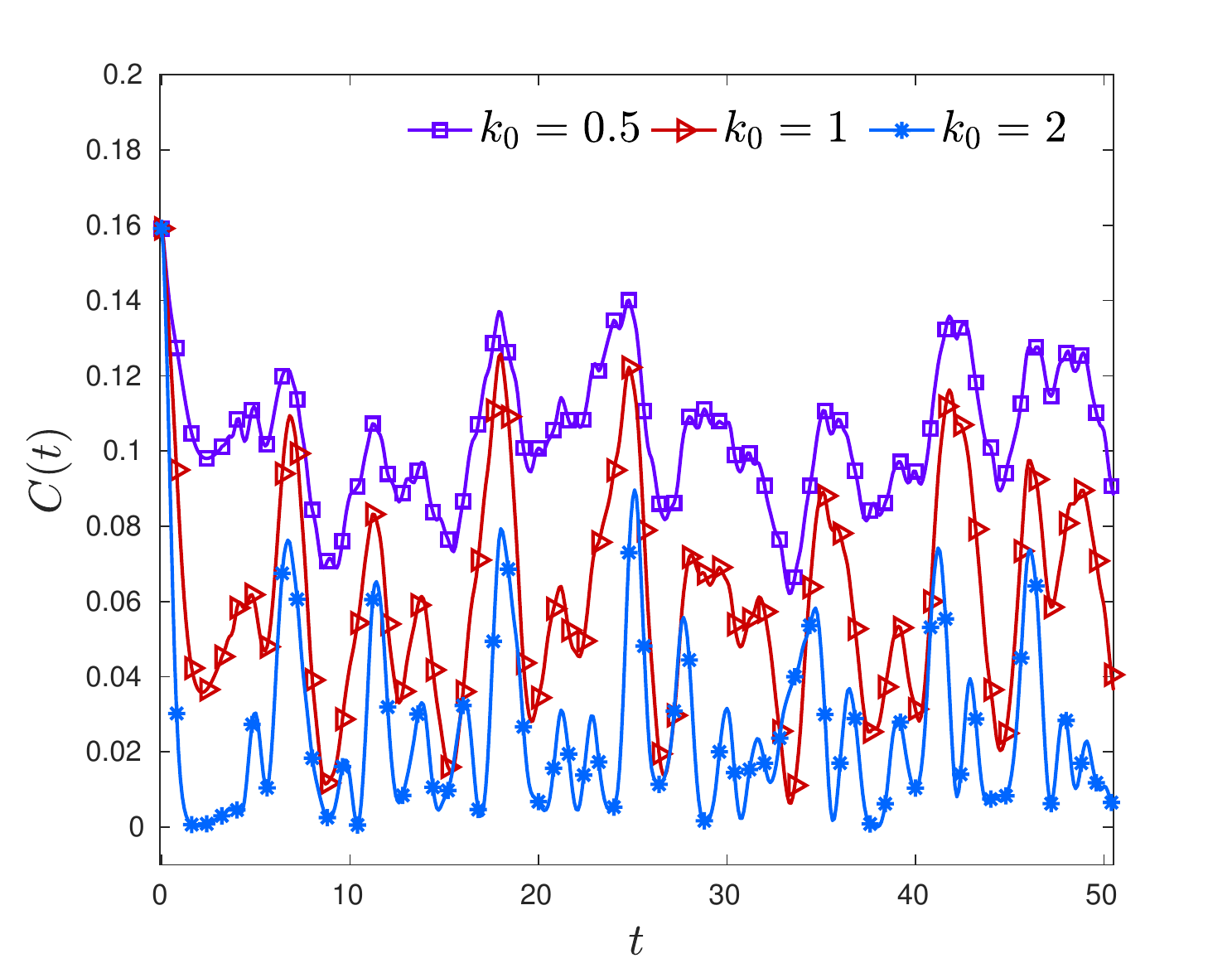}
    \caption{\small{ An asymmetric double-well potential: The partial mass $P_r(t)$ (left) and the autocorrelation function $C(t)$ (right). The tunneling rate is higher as the initial kinetic energy $\hbar^2k_0^2/2m$ increases. $C(t)$ oscillates on the frequencies proportional to $|E_m - E_n|$ which are independent of the initial kinetic energy.}}
    \label{fig:adw_rate}
\end{figure}
%%%%%%%%%%%%%%%%%%%%%%%%%%%%%%%%%%%%%%%%%%%%%%%%%%%%%%%%%%%

\subsection{A rational fraction function}
\label{sec:rational}

Finally, we will give a simple example to show that the operator splitting scheme may fail to conserve the energy. Consider the potential of the rational fraction \eqref{eq:fraction} and 
the resulting subproblem for $V_{pol}(x) = x^2-1$ can be solved analytically. The second-order Strang splitting is adopted here and the initial data is given as follows
\begin{equation}\label{eq:initial_re}
  f_0(x,k) = \frac{1}{\pi}\exp\left({-\frac{x^2}{2}}\right)\left[2\exp(-2(k-\pi)^2)-\exp({-2(k+\pi)^2})\right].
\end{equation}
And other parameters are set to be $-x_L=x_R=30$, $-k_{\min}=k_{\max}=5\pi$, $\Delta t = 10^{-4}$, $Q=20$, and the numerical solutions on the mesh $(M,N) = (41,512)$ provides the reference. For convenient for comparing, the non-splitting scheme here refers to our above-mentioned explicit conservative spectral method. Fig.~\ref{fig:converg_re} plots the convergence curves for both splitting and non-splitting schemes, which demonstrate clearly the spectral convergence against both $M$ and $N$ as we excepted, and also shows that they are almost identical. Consequently, we may not expect obvious difference in keeping the energy due to such high accuracy.
Actually, the variation of energy $\varepsilon_{\text{energy}}(t)$ on the finest mesh is no more than $1.8918\times 10^{-8}$ and $2.1852\times 10^{-8}$ until $t=10$ for the non-splitting and splitting schemes, respectively.
However, the difference on a coarse mesh may be evident as we already pointed out in Section~\ref{sec:splitting}, namely, the splitting scheme only keeps the mass while the non-splitting one keeps both the mass and the energy.  We choose a coarse mesh: $Q = 1$, $M=21$, $N=20$ and the splitting scheme \eqref{eq:split_num1} is used with a time step $\Delta t = 0.01$. Only after one time step, the variation of energy $\varepsilon_{\text{energy}}(\Delta t)$ is $4.6246\times 10^{-3}$ for the splitting scheme, but barely $7.1054\times 10^{-13}$ for the non-splitting one, 
and the variations of mass $\varepsilon_{\text{mass}}(\Delta t)$ for both are $4.4409\times 10^{-16}$.

%%%%%%%%%%%%%%%%%%%%%%%%%%%%%%%%%%%%%%%%%%%%%%%%%%%%%%
\begin{figure}
   \includegraphics[width=0.5\textwidth,height=0.35\textwidth]{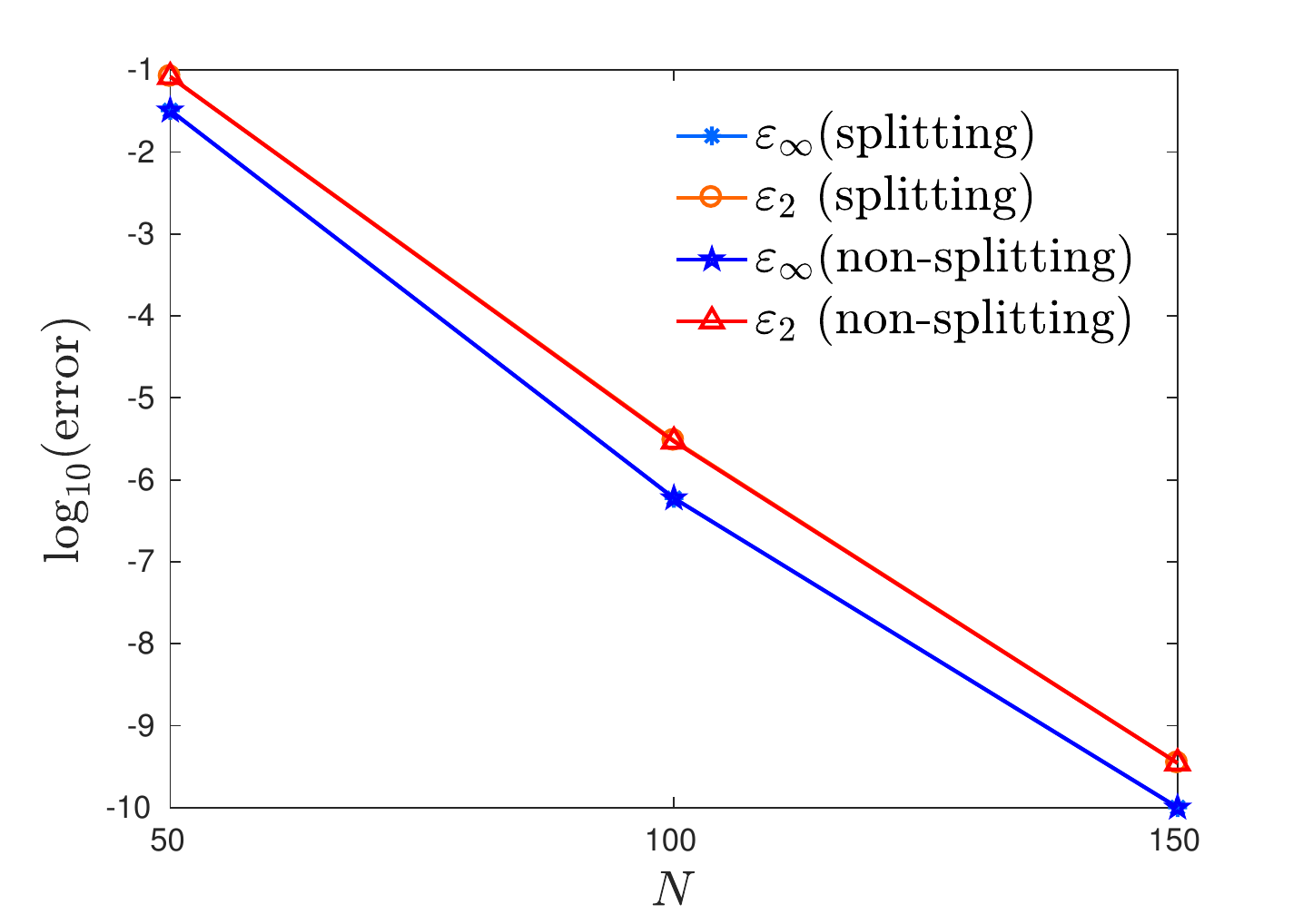}
   \includegraphics[width=0.5\textwidth,height=0.35\textwidth]{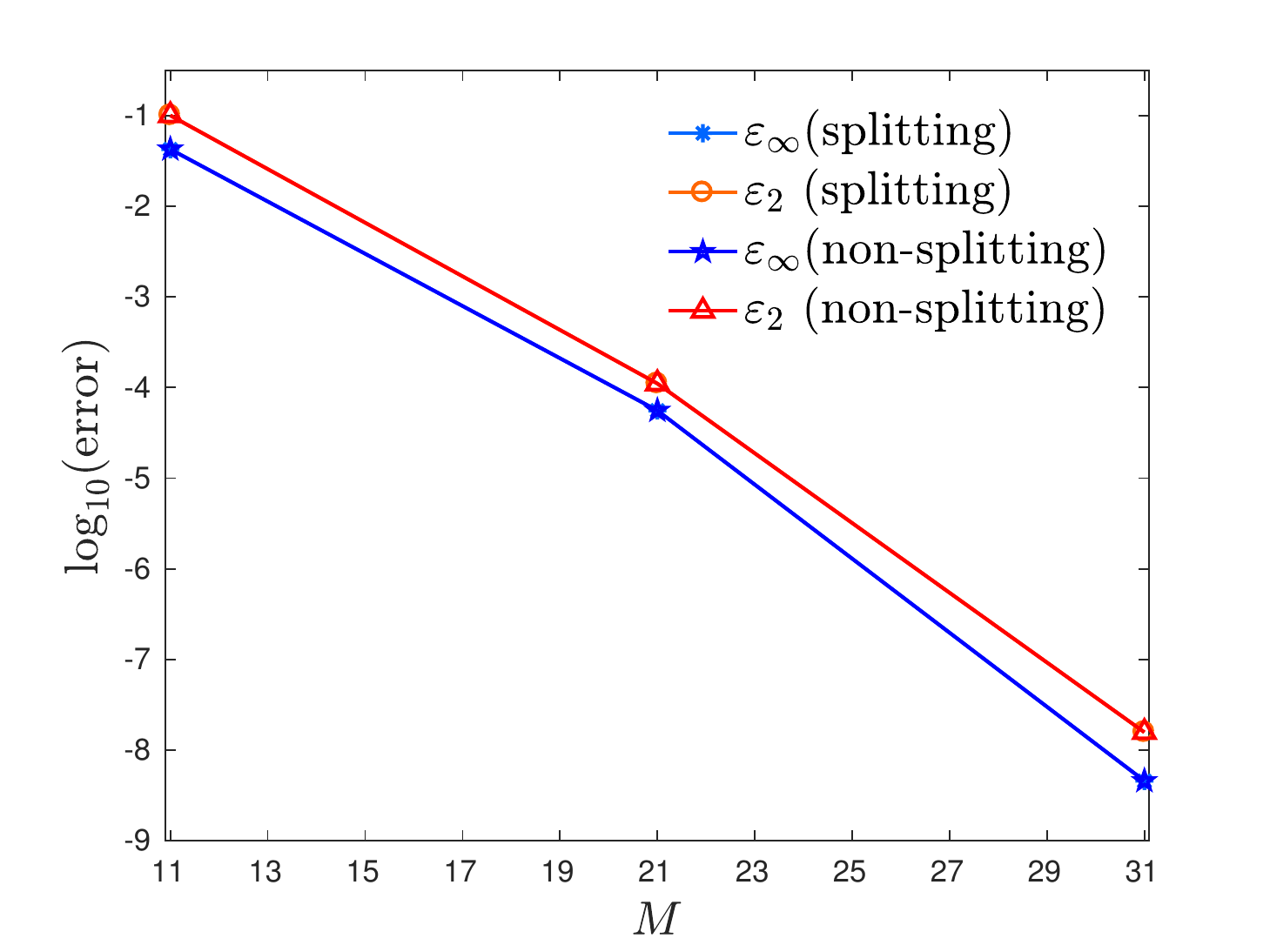}
   \caption{\small A rational fraction potential: The convergence rate with respect to $N$ (left) and $M$ (right). The spectral convergence in both $x$-space and $k$-space is obviously observed for both non-splitting and splitting schemes. Actually, their convergence curves are almost coincident. All the errors are measured at the instant $t=10$.}
   \label{fig:converg_re}
\end{figure}
%%%%%%%%%%%%%%%%%%%%%%%%%%%%%%%%%%%%%%%%%%%%%%%%%%%%%%

\section{Conclusions and discussions}
\label{sec:5}

Using two equivalent forms of the pseudo-differential operator: the integral form and the Moyal expansion, 
we developed an explicit mass-and-energy-conserving spectral solver for the transient Wigner equation in the presence of a general class of unbounded potentials. Numerical experiments on several typical double-well systems demonstrate the spectral accuracy as well as the reliability of long time simulations. A direct spectral analysis of the resulting data demonstrates that the proposed solver accurately captures the energy level transitions. The uncertainty principle and the autocorrelation function are both investigated in the Wigner simulations of the quantum tunneling phenomena. We also showed that a simple operator scheme may keep the mass, but fails to conserve the energy. Now a project toward a mass-and-energy-conserving operator splitting method is still ongoing to fully explore its ability in handling different subproblems using different techniques.

\section*{Acknowledgments}
This research is supported by grants from the National Natural Science Foundation of China (Nos.~11471025, 11421101). 
Z. C. is also partially supported by Peking University Weng Hongwu original research fund (No. WHW201501). 
The authors are grateful to the useful discussions with Wei Cai, Jian Liu and Jing Shi.

%\bibliography{/Users/sihong/reference/journalname,/Users/sihong/reference/wigner}
%\bibliography{journalname,wigner}

\end{document}